\DeclareMathOperator{\Aut}{Aut}
\DeclareMathOperator{\Mod}{Mod}
\DeclareMathOperator{\Fred}{Fred}
\DeclareMathOperator{\id}{id}
\DeclareMathOperator{\hol}{hol}
\DeclareMathOperator{\Ad}{Ad}
\DeclareMathOperator{\op}{op}
\DeclareMathOperator{\cC}{\mathscr{C}} 
\DeclareMathOperator{\DD}{DD}
\DeclareMathOperator{\Rmod}{RMod}
\DeclareMathOperator{\RVect}{RVect}
\DeclareMathOperator{\RCyc}{RCyc}
\DeclareMathOperator{\Vect}{Vect}
\DeclareMathOperator{\pt}{pt}
\theoremstyle{plain}
\newtheorem{theorem}[equation]{Theorem}
\newtheorem{corollary}[equation]{Corollary}
\newtheorem{lemma}[equation]{Lemma}
\newtheorem{proposition}[equation]{Proposition}
\theoremstyle{definition}
\newtheorem{definition}[equation]{Definition}
\theoremstyle{remark}
\newtheorem{remark}[equation]{Remark}
\newtheorem{example}[equation]{Example}
\numberwithin{equation}{section}
\numberwithin{figure}{section}
\renewcommand{\cH}{{\mathcal H}}
\newcommand{\cS}{{\mathcal S}}
\newcommand{\cA}{{\mathcal A}}
\newcommand{\cK}{{\mathcal K}}
\newcommand{\cU}{{\mathcal U}}
\newcommand{\cP}{{\mathcal P}}
\newcommand{\cB}{{\mathcal B}}
\newcommand{\cZ}{{\mathcal Z}}
\renewcommand{\cR}{{\mathcal R}}
\newcommand{\CC}{{\mathbb C}}
\newcommand{\RR}{{\mathbb R}}
\newcommand{\ZZ}{{\mathbb Z}}
\renewcommand{\a}{\alpha}
\renewcommand{\b}{\beta}
\renewcommand{\c}{\gamma}
\renewcommand{\d}{\delta}
\newcommand{\bg}{{\rm bg}}
\newcommand{\KR}{K\hspace{-1pt}R}
\newcommand{\KO}{K\hspace{-1pt}O}
\newcommand{\KK}{K\hspace{-1pt}K}
\newcommand{\KKR}{K\hspace{-1pt}K\hspace{-1pt}R}
\newcommand{\KSp}{K\hspace{-1pt}Sp}
\newcommand{\WR}{W\hspace{-1pt}R}
\newlength{\@thlabel@width}%
\newcommand{\thmenumhspace}{\settowidth{\@thlabel@width}{(1)}\sbox{\@labels}{\unhbox\@labels\hspace{\dimexpr-\leftmargin+\labelsep+\@thlabel@width-\itemindent}}}
\begin{document}

\begin{flushright}

\end{flushright}

\vskip 1cm

\title[Real bundle gerbes, orientifolds and twisted $\KR$-homology]{Real
  bundle gerbes, orientifolds \\[1mm] and twisted $\boldsymbol{\KR}$-homology}

  \author[P. Hekmati]{Pedram Hekmati}
  \address[Pedram Hekmati]
  {Department of Mathematics\\ 
  University of Auckland\\ 
  Auckland 1010 \\ 
  New Zealand}
  \email{p.hekmati@auckland.ac.nz}

  \author[M.K. Murray]{Michael~K.~Murray}
  \address[Michael K.~Murray]
  {School of Mathematical Sciences\\
  University of Adelaide\\
  Adelaide, SA 5005 \\
  Australia}
  \email{michael.murray@adelaide.edu.au}

  \author[R.J. Szabo]{Richard~J.~Szabo}
  \address[Richard J.~Szabo]
  {Department of Mathematics, Maxwell Institute for Mathematical Sciences and The Higgs Centre for Theoretical Physics\\
  Heriot-Watt University\\
  Edinburgh, EH14 4AS \\
  United Kingdom}
\address{
Centro de Matem\'atica, Computac\~ao e
Cognic\~ao\\
Universidade de Federal do ABC\\
Santo Andr\'e, SP\\
Brazil}
  \email{r.j.szabo@hw.ac.uk}
  
  \author[R.F. Vozzo]{Raymond~F.~Vozzo}
\address[Raymond F.~Vozzo]
{School of Mathematical Sciences\\
University of Adelaide\\
  Adelaide, SA 5005 \\
  Australia}
\email{raymond.vozzo@adelaide.edu.au} 

\thanks{The authors acknowledge support under the  Australian
Research Council's {\sl Discovery Projects} funding scheme (project numbers DP120100106 and DP130102578), the Consolidated Grant ST/L000334/1 
from the UK Science and Technology Facilities Council, the Action MP1405 QSPACE from the European Cooperation in Science and Technology
(COST), and the Visiting Researcher Program
Grant 2016/04341--5 from the Fundac\~ao de Amparo \'a Pesquisa do
Estado de S\~ao Paulo (FAPESP, Brazil).  
Report no.: \ EMPG--16--15\\
We thank Alan Carey, Jonathan Rosenberg and Bogdan Stefanski
for helpful discussions.}

\subjclass[2010]{}

\begin{abstract} 

We consider Real bundle gerbes on manifolds equipped with an involution and prove that they are classified by their Real Dixmier--Douady class in Grothendieck's equivariant sheaf cohomology. We show that the Grothendieck group of Real bundle gerbe modules is isomorphic to twisted $\KR$-theory for a torsion Real Dixmier--Douady class. Using these modules as building blocks, we introduce geometric cycles for twisted  $\KR$-homology and prove that they generate a real-oriented generalised homology theory dual to twisted $\KR$-theory for Real closed manifolds, and more generally for Real finite CW-complexes,  for any Real Dixmier--Douady class. This is achieved by defining an explicit natural transformation to analytic twisted $\KR$-homology and proving that it is an isomorphism. Our model both refines and extends previous results by Wang \cite{Wang} and Baum--Carey--Wang \cite{BCW} to the Real setting. Our constructions further provide a new framework for the classification of orientifolds in string theory, providing precise conditions for orientifold lifts of $H$-fluxes and for orientifold projections of open string states.
\end{abstract}
\maketitle

\tableofcontents

\bigskip

\section{Introduction and summary}

In \cite{Ati} Atiyah introduces the notion of $\KR$-theory for a space $M$ with an involution $\tau \colon M \to M$ as a common generalisation of real and complex $K$-theory. This is defined on the semi-group of complex vector bundles which are `Real' in the sense that the involution $\tau$ lifts to an anti-linear involution on the total space. In this paper we provide a definition of twisted $\KR$-theory, as well as its dual homology theory, and describe some new approaches to the construction of orientifolds of Type II string theory, using modules for a certain kind of bundle gerbe.

To motivate the mathematical ideas that we use, note that an involution $\tau$ acting on a space $M$ is equivalent to an action of $\ZZ_2$ where $\tau$ defines the action of the non-trivial element in $\ZZ_2$. There is an induced action of $\ZZ_2$ on the space of functions
$f \colon M \to \CC$ given by $\tau(f)(m) = f ( \tau(m))$. As a result this space has two distinguished subsets: the `Real' functions which satisfy $\tau(f) = \bar f$ and the `invariant' functions which satisfy $\tau(f) = f$. Notice that Real does not mean that the function is real-valued unless $\tau$ acts trivially on $M$.  

When we replace functions by more complicated geometric objects such as $U(1)$-bundles $L \to M$, then 
the definitions of Real and invariant also involve a choice of isomorphism $\tau^{-1}(L) \simeq L^*$ or $\tau^{-1}(L) \simeq L$
which, in an appropriate sense, squares to the identity. In the latter case we will call the line bundle $L$ `equivariant' rather than 
invariant because it corresponds exactly to a lift of the $\ZZ_2$-action on $M$ to $L$.

When we pass to bundle gerbes, we
have to also deal with the fact that there are two kinds of isomorphism for bundle gerbes, so it is possible to define the $\ZZ_2$-action 
to be either by isomorphisms or by stable isomorphisms. The former leads to the notion of Real bundle gerbes \cite{Mou} and the latter to the notion of 
Jandl bundle gerbe \cite{SchSch}. In this paper, we elucidate the relation between these two kinds of gerbes and show that both notions, equipped with the appropriate idea of stable isomorphism, are sufficient to capture Grothendieck's equivariant sheaf cohomology group $H^2(M; \ZZ_2, \overline{\cU(1)})$~\cite{Gro} through a Real version of the Dixmier--Douady class. 

Once these preliminaries are in place, it is relatively straightforward to extend the results of~\cite{BouCarMat, CareyWang2} to Real bundle gerbes. In particular, we define Real bundle gerbe modules
and prove that they model twisted $\KR$-theory for a torsion Real Dixmier--Douady class by
establishing a Real version of the Serre--Grothendieck theorem.
We then  introduce a geometric model for twisted $\KR$-homology using Real bundle gerbe modules.
This is where the latter come into their own, since the geometric cycles work for arbitrary
twisting classes. A merit of our model is that it uses actual Real bundle gerbe modules and not merely
twisted $\KR$-theory data in the definition of cycles. 
We define an assembly map to analytic twisted $\KR$-homology and prove that for Real closed manifolds, and more generally for Real finite CW-complexes, it is an isomorphism by constructing an explicit inverse. Consequently, the Real bundle gerbe cycles define
a real-oriented generalised homology theory dual to twisted $\KR$-theory.

Twisted $\KR$-homology is a primary theory in the sense that it subsumes complex, real and quaternionic
 $K$-homology as special cases. For a twisting class $[H]$ we recover the construction of complex twisted $K$-homology by Wang for compact manifolds from~\cite{Wang} and the Baum--Carey--Wang construction for finite
CW-complexes from~\cite{BCW} via
$\KR(M\amalg M, [H] \amalg - [H]) = K(M,[H])$,
where the involution acts by exchanging the two copies of $M$ and sends $([H] \amalg - [H])$  to  $(-[H]
\amalg [H])$. Moreover, our model subsumes the Deeley--Goffeng model from \cite{DeeleyGoffeng} which uses closed spin$^c$ $PU(n)$-manifolds.  In the complex
setting our Real bundle gerbe cycles are closely related to  their projective $K$-cycles, but unlike in~\cite{DeeleyGoffeng} where they use $PU(n)$-equivariant maps, our
proof that the assembly map is an isomorphism works for arbitrary twistings. When the involution $\tau$ is trivial, we obtain a geometric model for twisted $\KO$-homology, $\KR(M,[H]) \cong \KO(M,[H])$. We note that this isomorphism holds under the condition that the {\em sign choice} associated to the twisting class $[H]$ is positive. On the other hand, when   the sign choice of $[H]$ is negative and the corresponding  complex Dixmier--Douady class vanishes, we obtain a geometric model for untwisted quaternionic $K$-homology, $\KR(M,[H]) \cong \KSp(M)$ (see Example \ref{discretetorsion}). We address the question of sign choices, as well as connective structures on Real bundle gerbes, in detail in the sequel \cite{HMSV}.

Spaces with involutions  give an efficient way to construct new string backgrounds, which in the presence of fluxes are important for model building in string theory; in this setting the pair $(M,\tau)$ is called an `orientifold'.
Part of the motivation behind this work is to better sharpen the current understanding of orientifold constructions in string theory in the presence of background $H$-flux, as Ramond-Ramond charges and currents in these backgrounds are classified by twisted (differential) $\KR$-theory~\cite{Witten,BGS,BraunStef,DFM,GaoHori}. The mathematical formalism that we develop in this paper provides a new framework in which to investigate various features of orientifolds. In particular, there are four problems that can be tackled using our perspective.

Firstly, the Dirac quantization condition on the $B$-field must be implemented by locating its quantum flux in a suitable cohomology group, so that the usual class $[H]\in H^3(M,\ZZ)$ of the $H$-flux must be equivariant in an appropriate sense. In this paper we give for the first time a necessary and
sufficient condition (including torsion) for an $H$-flux to lift to an orientifold
$H$-flux via a long exact sequence in Grothendieck's equivariant sheaf cohomology. Secondly, the orientifold projection conditions on open string states are known only in some simple examples; in the following we give a general definition of Real bundle gerbe D-branes
appropriate to an orientifold background, and in particular our construction of twisted $\KR$-homology precisely defines the orientifold projections of open string states. Thirdly, in a given situation one may be interested in D-branes not only on top of an orientifold plane (O-plane); our homological classification naturally accounts for these open string states as well and provides new consistency conditions for D-branes in orientifolds. Finally, in Type~II orientifolds, D-branes on top of an O-plane can have either an $SO(n)$ or $Sp(n)$ gauge symmetry depending on the choice of orientifold action; this defines the `type' of an O-plane. Conditions for the allowed distributions of O-plane types for a given involution $\tau$ are discussed more systematically in~\cite{HMSV}. For some recent progress in this direction, see~\cite{DMR1,DMR2,DFM,GaoHori}.

In summary the paper proceeds as follows. In Section~\ref{sec:bg} we review the theory of bundle gerbes and bundle gerbe modules, and explain how it was used in \cite{BouCarMat, CareyWang2} to define twisted $K$-theory. Up to stable isomorphism, bundle gerbes over $M$ are classified precisely by their Dixmier--Douady class in $H^2( M, \cU(1)) = H^3(M, \ZZ)$. In the case of Real bundle gerbes there is a corresponding Real Dixmier--Douady class which lives in Grothendieck's equivariant sheaf cohomology group $H^2(M; \ZZ_2, \overline{\cU(1)})$~\cite{Gro}, and we develop the necessary parts of this theory in Section~\ref{sec:HR}. As an 
introduction to the notion of Real bundle gerbes, we first consider Real line bundles in Section~\ref{sec:RealLine}. In Section~\ref{sec:Real bundle gerbes} we 
introduce the definition of Real bundle gerbes, and briefly discuss their relationship with the apparently weaker notion of Jandl bundle gerbes. 
The corresponding notion of Real bundle gerbe module is introduced in Section~\ref{sec:KR} and related to twisted $\KR$-theory. In Section~\ref{sec:orientifolds} we describe some applications of our formalism to the orientifold construction in string theory, and introduce the notion of Real bundle gerbe D-brane which serves as an impetus for the definition of geometric twisted $\KR$-homology that we give in Section~\ref{sec:Geometric cycles}. The paper concludes with the construction of the Real assembly map to analytic twisted $\KR$-homology and a proof that it gives an isomorphism. 

\section{Bundle gerbes and their modules}\label{sec:bg}

In this section we will briefly review the various facts about bundle
gerbes and bundle gerbe $K$-theory that will be relevant for us in
later sections; more details can be found in \cite{BouCarMat, Mur}. The reader familiar with bundle gerbes and their modules can safely skip this section.

Let $M$ be a manifold and $Y \xrightarrow{\pi} M$ a surjective submersion. We denote by $Y^{[p]}$ the
$p$-fold fibre product of $Y$ with itself, that is $Y^{[p]} = Y \times_M Y
\times_M \cdots \times_M Y$. This is a simplicial space whose face
maps are given by the projections $\pi_i \colon Y^{[p]} \to Y^{[p-1]}$
which omit the $i$-th factor. A \emph{bundle gerbe} $(P, Y)$ (or
simply $P$ when $Y$ is understood) over $M$ is defined by a principal
$U(1)$-bundle (or a hermitian line bundle) $P \to Y^{[2]}$ together with
a bundle gerbe multiplication given by an isomorphism of bundles
$\pi_3^{-1}(P) \otimes \pi_1^{-1}(P) \to \pi_2^{-1}(P)$ over $Y^{[3]}$, which is
associative over $Y^{[4]}$. On fibres the multiplication looks like
$P_{(y_1, y_2)} \otimes P_{(y_2, y_3)} \to  P_{(y_1, y_3)}$ for $(y_1,
y_2, y_3) \in Y^{[3]}$. This implies that if $(y_1, y_2, y_3, y_4) \in Y^{[4]}$, then 
\begin{gather}
\label{eq:useful}
\begin{aligned}
P_{(y_1, y_2)}  \otimes P_{(y_3, y_4)} &\simeq  P_{(y_1, y_2)}  \otimes P_{(y_2, y_3)} \otimes P_{(y_3, y_4)} \otimes P_{(y_3, y_2)} \\[4pt]
&\simeq P_{(y_1, y_4)}   \otimes P_{(y_3, y_2)} \ .
\end{aligned}
\end{gather}

We can multiply two bundle gerbes over $M$ together. Namely, we define $(P, Y)
\otimes (Q, X):= (P \otimes Q, Y \times_M X)$, where here $P$ and $Q$ are pulled back to $(Y \times_M X)^{[2]}$ by the obvious maps to 
$Y^{[2]}$ and $X^{[2]}$.

The \emph{dual} of $(P, Y)$ is the bundle gerbe $(P^*, Y)$, where by
$P^*$ we mean the $U(1)$-bundle which is $P$ with the action of $U(1)$
given by $p \cdot z = p \, \bar z = p \, z^{-1}$, that is, as a space $P^*
= P$, but with the conjugate $U(1)$-action.

Given a map $f \colon N \to M$ and a bundle gerbe $(P, Y)$ over $M$, we can
pull back the surjective submersion $Y\to M$ to a surjective submersion
$f^{-1}(Y) \to N$ and the bundle gerbe $(P,Y)$ to a bundle gerbe $f^{-1}(P,
Y) := \big( \big(f^{[2]} \big)^{-1}(P), f^{-1}(Y) \big)$ over $N$, where
$f^{[2]} \colon f^{-1}\big(Y^{[2]} \big) \to Y^{[2]}$ is the map induced by $f \colon f^{-1}(Y) \to Y$.

A bundle gerbe $(P, Y)$ over $M$ defines a class in $H^3(M, \ZZ)$,
called the \emph{Dixmier--Douady class} of $P$, as follows. Let $\cU =
\{U_\a \}_{\a\in I}$ be a good cover of $M$ with sections $s_\a \colon
U_\a \to Y$, where as usual we write
$U_{\a_0\cdots\a_p}:=U_{\a_0}\cap\cdots\cap U_{\a_p}$. 
  On double overlaps $U_{\a\b}$ they define sections $(s_\a, s_\b)$  of $Y^{[2]}$ by $m \mapsto (s_\a(m), s_\b(m))$. Choose sections $\sigma_{\a\b}$ of the pullback bundle $(s_\a, s_\b)^{-1}(P) \to U_{\a\b}$. Using the bundle gerbe multiplication we have
$$
\sigma_{\a\b} \, \sigma_{\b\gamma} = \sigma_{\a\c} \, g_{\a\b\c} \ ,
$$
for some maps $g_{\a\b\c}\colon U_{\a\b\c} \to U(1)$ on triple
overlaps which satisfy the cocycle condition and hence define a class
in $  H^2( M, \cU(1)) = H^3(M, \ZZ)$. We call this element the
Dixmier--Douady class of $P$ and denote it by $\DD(P)$. Conversely, any class $[H]\in H^3(M,\ZZ)$ defines a bundle gerbe $(P,Y)$ over $M$ with $\DD(P)=[H]$.

An \emph{isomorphism} between two bundle gerbes $(P, Y)$ and $(Q, X)$
over $M$ is a pair of maps $(\hat f,  f)$ where $f \colon Y \to X$ is
an isomorphism that covers the identity on $M$, and $\hat f \colon P
\to Q$ is a map of $U(1)$-bundles that covers the induced map $f^{[2]} \colon Y^{[2]} \to X^{[2]}$ and commutes with the bundle gerbe product. Isomorphism is too strong to be the right notion of equivalence for bundle gerbes, since there are many non-isomorphic bundle gerbes with the same Dixmier--Douady class. The correct notion of equivalence is stable isomorphism \cite{MurSte2} defined below, which has the property that two bundle gerbes are stably isomorphic if and only if they have the same Dixmier--Douady class.

We say that a bundle gerbe $(P, Y)$ is \emph{trivial} if there exists a $U(1)$-bundle $L \to Y$ such that $P$ is isomorphic to $ \d L := \pi_1^{-1}(L) \otimes \pi_2^{-1}(L)^*$ with the canonical multiplication $(\d L)_{(y_1, y_2)} \otimes (\d L)_{(y_2, y_3)} = L_{y_1}^* \otimes L_{y_2} \otimes L_{y_2}^* \otimes L_{y_3} = L_{y_1}^* \otimes L_{y_3} = (\d L)_{(y_1, y_3)}$. A choice of $L$ and an isomorphism $P\simeq \d L$ is called a \emph{trivialisation}; any two trivialisations differ by the pullback of a line bundle on $M$. 
 The Dixmier--Douady class is precisely the obstruction to the bundle gerbe being trivial.
 Two bundle gerbes are \emph{stably isomorphic} if $ Q\otimes P^*$ is
 trivial, and a \emph{stable isomorphism} $P \to Q$ is a choice of trivialisation of $Q \otimes P^*$. Explicitly, if $(P, Y)$ and $(Q, X)$ are stably isomorphic bundle gerbes over $M$, then a stable isomorphism $P \to Q$ is a bundle $R \to Y \times_M X$ such that
\begin{equation}\label{E:stable iso}
P_{(y_1, y_2)}\otimes R_{(y_2, x_2)} \simeq  R_{(y_1, x_1)} \otimes
Q_{(x_1, x_2)} \ .
\end{equation}
If $f \colon (P, Y) \to (Q, X)$ is an isomorphism of bundle gerbes then using \eqref{eq:useful} we have 
\begin{align*}
(Q \otimes P^*)_{( x_1, x_2, y_1, y_2)} &= Q_{(x_1, x_2)} \otimes P_{(y_1, y_2)}^*  \\[4pt]
& = Q_{(x_1, x_2)} \otimes Q_{(f(y_1), f(y_2))}^* \\[4pt]
&= Q_{(x_1, f(y_1))} \otimes Q_{(x_2, f(y_2))}^* \ .
\end{align*}
Hence there is an induced stable isomorphism given by $Q \otimes P^* = \d L$, where $L \to X \times_M Y$ is given by $L_{(x, y)} = Q_{(x, f(y))}^*$.

In the case that two bundle gerbes are defined over the same surjective submersion, the situation is slightly simpler. If $(P, Y)$ and $(Q, Y)$ are bundle gerbes, a stable isomorphism is a bundle $R \to Y^{[2]}$ 
and the isomorphism (\ref{E:stable iso}) becomes 
$$
P_{(y_1, y_2)}\otimes R_{(y_2, y_2')} \simeq  R_{(y_1, y_1')} \otimes
Q_{(y_1', y_2')} \ ,
$$
for $y_1, y_2, y_1', y_2'$ all in the same fibre of $Y$. Since we can include $Y$ into $Y^{[2]}$ as the diagonal, we can restrict $Q \otimes P^*$ to $Y$ and this  induces a stable
isomorphism $(Q \otimes P^*, Y) \to (Q \otimes P^*, Y^{[2]})$. Hence $(Q \otimes P^*, Y) $ is trivial
if and only if $(Q \otimes P^*, Y^{[2]})$ is trivial. From the theory
of bundle gerbe modules and the fact that a trivialisation is a bundle
gerbe module of rank one (see below), it follows that there is a bijective correspondence between trivialisations 
of $(Q \otimes P^*, Y) $ and trivialisations of $(Q \otimes P^*, Y^{[2]})$. Thus we can regard a stable isomorphism
$R \colon (P, Y) \to (Q, Y)$ as a bundle $R \to Y$ together with isomorphisms
\begin{eqnarray}\label{eq:stableisosame}
P_{(y_1, y_2)} \otimes R_{y_2} \simeq  R_{y_1} \otimes Q_{(y_1, y_2)}
\ .
\end{eqnarray}

Given stable isomorphisms $R\colon(P, Y) \to (Q, X)$ and $S \colon (Q, X) \to (T, Z)$ there is a general theory of how to compose them. In the case $Y = X = Z$ it reduces to the following.  Assume we have \eqref{eq:stableisosame} and 
$$
Q_{(y_1, y_2)}\otimes S_{y_2} \simeq  S_{y_1} \otimes T_{(y_1, y_2)} \
.
$$
Then we induce maps 
\begin{align*}
P_{(y_1, y_2)}\otimes(R_{y_2}\otimes S_{y_2})  &\simeq  R_{y_1} \otimes Q_{(y_1, y_2)}\otimes  S_{y_2} \\[4pt]
                               & \simeq (R_{y_1}\otimes  S_{y_1}) \otimes T_{(y_1, y_2)}
                               \end{align*}
which define the product.

Any stable isomorphism \eqref{eq:stableisosame}
induces an inverse $Q \to P$,
$$
Q_{(y_1, y_2)}\otimes R_{y_2}^* \simeq  R_{y_1}^* \otimes P_{(y_1,
  y_2)} \ ,
$$
and a dual $P^* \to Q^*$,
$$
P^*_{(y_1, y_2)} \otimes R^*_{y_2} \simeq  R^*_{y_1} \otimes
Q^*_{(y_1, y_2)} \ .
$$

Given a map $\tau \colon M \to M$ and a stable isomorphism $R \colon
(P, Y) \to (Q, Y)$ there is a stable 
isomorphism
$\tau^{-1}(R) \colon \tau^{-1}(P, Y) \to \tau^{-1}(Q, Y)$. 

If $(P, Y)$ is a bundle gerbe, then  a \emph{bundle gerbe module} is a vector bundle $E \to Y$ with a family of bundle
maps 
$$
P_{(y_1, y_2)} \otimes E_{y_2} \simeq  E_{y_1} 
$$
satisfying the natural associativity condition that on any triple $(y_1, y_2, y_3) \in Y^{[3]}$ 
the two maps 
$$
 P_{(y_1, y_2)} \otimes P_{(y_2, y_3)}  \otimes E_{y_3}  \
 \longrightarrow \ P_{(y_1, y_3)} \otimes E_{y_3} \ \longrightarrow \ E_{y_1}
$$
and 
$$
 P_{(y_1, y_2)} \otimes  P_{(y_2, y_3)} \otimes E_{y_3}  \
 \longrightarrow \ P_{(y_1, y_2)} \otimes E_{y_2} \ \longrightarrow \ E_{y_1}
$$
are equal. We denote by $\Mod(P,Y)$ the semi-group of bundle gerbe modules under direct sum and by $K_{\bg}(M,P)$ the corresponding 
Grothendieck group which we call the {\em bundle gerbe $K$-theory group} of $(P, Y)$.

It is shown in \cite[Proposition 4.3]{BouCarMat}  that if $(P, Y)$ and $(Q, X)$ are bundle
gerbes over $M$ then any stable isomorphism  $R \colon(P, Y) \to (Q, X)$  induces a semi-group isomorphism $\Mod(P,Y) \to \Mod(Q, X)$ 
and thus an isomorphism $K_{\bg}(M,P) \simeq K_{\bg}(M,Q)$.  There is an important subtlety that needs noting.  Different stable isomorphisms between bundle gerbes can give rise to different isomorphisms on twisted $K$-theory. So while $K_{\bg}(M,P)$ and  $ K_{\bg}(M,Q)$ are isomorphic if $\DD(P) = \DD(Q)$
the actual isomorphism is not determined until a stable isomorphism is chosen.  It is a common abuse of notation 
however to write $K_{\bg}(M, [H])$ to mean a group in the isomorphism class of $K_{\bg}(M,P)$ for some bundle
gerbe $(P, Y)$ with $\DD(P) = [H] \in H^3(M, \ZZ)$. 
 In \cite{CareyWang2} it was shown that for any torsion class $[H] \in H^3(M, \ZZ)$, the group $K_{\bg}(M, [H])$  is isomorphic to the twisted $K$-theory $K(M, [H])$.

\begin{example}
If $(P,Y)$ is trivial so that $P = \delta K$, then the bundle gerbe module action $P_{(y_1,y_2)} \otimes E_{y_2} \simeq E_{y_1}$ implies $K_{y_1}^* \otimes K_{y_2} \otimes E_{y_2} \simeq E_{y_1}$ and so
$$
K_{y_2}\otimes E_{y_2} \simeq K_{y_1}\otimes E_{y_1} \ ,
$$ 
which are descent data for the bundle $K \otimes E \to Y$. Conversely,
if $F$ is a bundle on $M$ then $\delta K$ acts on $K^* \otimes \pi^{-1}(F)$ and
so it defines a module. This gives an isomorphism from the semi-group
of bundle gerbe
modules $\Mod(\delta K,Y)$ to the semi-group of vector bundles $\Vect(M)$, which implies that the bundle gerbe $K$-theory of a trivial bundle gerbe on $M$ is isomorphic to the $K$-theory of $M$.
\end{example}

\section{Grothendieck's equivariant sheaf cohomology}\label{sec:HR}

In his famous Tohoku paper \cite{Gro}, Grothendieck introduced a
cohomology theory for sheaves with group actions.  We will be
concerned  with the case that the group is the cyclic group $\ZZ_2$.

Let $M$ be a manifold with an involution $\tau \colon M \to M$; this
is of course the same thing as an action of $\ZZ_2$ on $M$.  The pair
$(M,\tau)$ is called a \emph{Real manifold} and we will simply write
$M$ when there is no risk of confusion. Real manifolds are objects in
a category whose morphisms $f:(M,\tau)\to (M',\tau'\,)$ are
equivariant smooth maps, that is $f\circ \tau=\tau'\circ f$.

Let $\cS$
be a sheaf of abelian groups with an action of $\ZZ_2$ covering that
on $M$ \cite{Gro}. Again we only need to describe the action of the
non-trivial element of $\ZZ_2$ which must be involutive and is also denoted $\tau$. For any such $\ZZ_2$-sheaf denote by $\Gamma^{\ZZ_2}_M(\cS)$ the space of $\ZZ_2$-invariant sections of $\cS$. Grothendieck denotes the right derived
functors of $\Gamma^{\ZZ_2}_M$ applied to $\cS$ by $H^p(M; \ZZ_2, \cS)$. 
 We are interested primarily in the case when $\cS$ is the sheaf  of smooth
functions taking values in the group $U(1)$ which we denote by $\cU(1)$. We will adopt this same notation when we give this sheaf the 
trivial $\ZZ_2$ action and denote it $\overline{\cU(1)}$ when we give it  the conjugation action 
 $\tau(f) = \bar f \circ \tau $.

We want to calculate this cohomology via a \v{C}ech construction  using \cite[Section 5.5]{Gro}.
Following \cite{Mou} we say that an open cover $\cU = \{ U_\a \}_{\a
  \in I}$ of $M$ is \emph{Real} if $U_\a \in \cU$ 
implies that $\tau(U_\a) \in \cU$ and the indexing set $I$ has an involution denoted $\a \mapsto \bar \a$ such that $\tau(U_\a) = U_{\bar\a}$. It is always possible to choose a good cover with the property that the involution on $I$ has no 
fixed points. For this, pick a metric on $M$ and make it $\tau$-invariant by averaging. Then the image of any geodesically convex set under $\tau$ is again a geodesically convex set, so a  family of geodesically convex subsets and their $\tau$-translates provide a good cover of $M$. We can further extend the indexing set $I$ so that $\a $ and $\bar \a$ are never the same index. This can be done  by replacing $I$ with $I \times \ZZ_2$ so that 
$\overline{(\a, \pm\, 1)} = (\a, \mp\, 1)$ and letting $U_{(\a, 1)} =
U_\a$ and $U_{(\a, -1)} = U_{\bar \a}$.   We will not make this
replacement explicit but simply assume that $I$ has the required
property.  For later use we note the trivial fact that if $I$ is a finite set with an involution  without fixed points, then $| I |$ is even, and $I$ is the disjoint union of two 
subsets $I_+$ and $I_-$ that are interchanged by the involution.

Given a $\ZZ_2$-sheaf $\cS$, we can introduce the space $C^p(\cU;
\ZZ_2, \cS)$ of all cochains $\sigma$ which are invariant under
$\tau$, that is 
$$
 \sigma_{\a_0 \cdots \a_p} = \tau(\sigma_{\bar\a_0 \cdots \bar\a_p} \circ \tau).
$$
The associated \v{C}ech cohomology groups are defined in the usual way as the inductive limit over refinements of Real open covers. For 
the particular cases of the sheaves $\cU(1)$ and $\overline{\cU(1)}$ it follows 
from \cite[Corollary 1, p.~209]{Gro} that the limit is in fact achieved for a Real good
cover with free action on its indexing set.

Explicitly the two cases of interest are as follows. 
Given a map
 $$
g_{\bar\a_0 \cdots \bar\a_p}   \colon U_{\bar\a_0 \cdots \bar\a_p} \longrightarrow U(1)
 $$
 then
 $$
g_{\bar\a_0 \cdots \bar\a_p} \circ \tau \colon U_{\a_0 \cdots \a_p}
\longrightarrow U(1) \ ,
$$
and we can define an involution $\tau^*$ on $C^p(\cU, \cU(1))$ by $\tau^*(g)_{\a_0 \cdots \a_p} = g_{\bar\a_0 \cdots \bar\a_p} \circ \tau$ 
for $g \in C^p(\cU, \cU(1))$. We are interested in two natural subcomplexes of the ordinary \v{C}ech complex $C^p(\cU, \cU(1))$ defined by how cochains behave under $\tau^*$. 
Firstly there is $C^p(\cU; \ZZ_2,  \overline{\cU(1)})$, the subgroup of {\em Real cochains} which satisfy $\tau^*(g) = \bar g$ or 
$$
\bar g_{\a_0 \cdots \a_p} = g_{\bar\a_0 \cdots \bar\a_p} \circ \tau \ .
$$
Secondly there is $C^p(\cU; \ZZ_2, \cU(1))$, the subgroup of  {\em invariant cochains} which satisfy $\tau^*(g) = g$ or
$$
g_{\a_0 \cdots \a_p} = g_{\bar\a_0 \cdots \bar\a_p} \circ \tau \ .
$$

The groups $H^p(M; \ZZ_2, \overline{\cU(1)})$, $H^p( M, \cU(1))$ and $H^p(M; \ZZ_2, \cU(1))$ are related by a long exact sequence which we now describe. 
If $\cS $ is any sheaf of abelian groups on $M$ then $\cS \oplus
\tau^{-1}(\cS)$ is a $\ZZ_2$-sheaf and $H^p(M; \ZZ_2, \cS \oplus
\tau^{-1}(\cS))    = H^p(M, \cS)$.  Then there is a short exact sequence of $\ZZ_2$-sheaves
$$
\label{eq:short-exact-sheaves}
 0 \ \longrightarrow \ \overline{\cU(1)} \ \longrightarrow \ \cU(1) \oplus
 \tau^{-1}\big(\cU(1) \big) \ \longrightarrow \ \cU(1) \ \longrightarrow \ 0
$$
where the maps are $f \mapsto (f , \bar f \circ \tau)$ and $(g, h)
 \mapsto g \, ( h \circ \tau)$. 
Exactness is straighforward.

It follows that there is a long exact sequence in cohomology and we are particularly interested in the lowest degree groups
\begin{equation}
\label{eq:beginlongsequence}
\xymatrix@R=2ex{0 \ 
\ar[r] & \   H^0(M; \ZZ_2, \overline{\cU(1)}) \ \ar[r] & \   H^0(M, \cU(1)) \  \ar[r]^{1 \times \tau^*} & \   H^0(M; \ZZ_2, \cU(1)) \ \ar`r[d]`[lll]`[ddlll] `[ddll][ddll]&\\
&&&&\\
 & \   H^1(M; \ZZ_2, \overline{\cU(1)}) \ \ar[r] & \   H^1(M, \cU(1)) \ \ar[r]^{1 \times \tau^*} & \   H^1(M; \ZZ_2, \cU(1)) \ \ar`r[d]`[lll]`[ddlll] `[ddll][ddll]&\\
&&&&\\
 & \   H^2(M; \ZZ_2, \overline{\cU(1)}) \ \ar[r] & \   H^2(M, \cU(1)) \ \ar[r]^{1 \times \tau^*} & \   H^2(M; \ZZ_2, \cU(1)) \ \ar[r]  & \ \cdots \ .
}
\end{equation}
In Sections~\ref{sec:RealLine} and~\ref{sec:Real bundle gerbes} we will provide geometric interpretations of the groups in this sequence and the homomorphisms between them. In particular, $H^1(M; \ZZ_2, \overline{\cU(1)})$ will be shown to correspond to Real isomorphism classes of Real line bundles and $H^2(M; \ZZ_2, \overline{\cU(1)})$ to Real stable isomorphism classes of Real bundle gerbes. The homomorphism 
$$
 H^p(M; \ZZ_2, \overline{\cU(1)})  \longrightarrow   H^p(M, \cU(1))
 $$
corresponds to forgetting the Real structures involved. The long exact sequence is a tool for addressing important questions surrounding this forgetful map such as when a 
line bundle or bundle gerbe admits a Real structure, or can be ``lifted'' to an equivalent Real object, how many such lifts there are and which Real objects are trivial after we forget their Real structure.

We conclude by elucidating the relation to ordinary equivariant cohomology. Denote by $\cZ$ and $\cR$ the sheaf of functions with values in $\ZZ$ and $\RR$ respectively, by the same notation the correspondings $\ZZ_2$-sheaves with trivial action of $\tau$  and by $\overline \cZ$ and $\overline \cR$ the corresponding $\ZZ_2$-sheaves with $\tau$ acting as multiplication by $-1$. Consider the exponential sequence for the $\ZZ_2$-sheaves
$$
1 \ \longrightarrow \ \overline{\cZ} \ \longrightarrow \ \overline{\cR} \ \longrightarrow \ \overline{\cU(1)} \ \longrightarrow \ 1.
$$
As explained in~\cite[p.~10]{Fok2015}, this gives rise to an isomorphism  $H^p(M; \ZZ_2, \overline{\cU(1)}) \simeq H^{p+1}(M; \ZZ_2, \overline{\cZ})$ since $\overline{\cR}$ is a fine $\ZZ_2$-sheaf. 
As $\ZZ_2$ is finite, it follows from \cite[Section~6]{St} that the group $ H^p(M; \ZZ_2, \overline{\cZ})$ is naturally isomorphic to the Borel equivariant cohomology with local coefficients defined by
$$
H^{p}_{\ZZ_2}(M,\ZZ(1)) = H^{p}(E\ZZ_2\times_{\ZZ_2} M,\ZZ(1)) \ ,
$$
for $p\geq 1$, where the local system $\ZZ(1)$ on $E\ZZ_2\times_{\ZZ_2} M$ is defined
by the $\ZZ_2$-action of the fundamental group
$\pi_1(E\ZZ_2\times_{\ZZ_2} M)$ by $-1$ on $\ZZ$ through the natural homomorphism in the homotopy exact sequence 
$$
\pi_1(M) \ \longrightarrow \ \pi_1(E\ZZ_2\times_{\ZZ_2} M) \ \longrightarrow \ \ZZ_2 
$$
for the fibration $M\to E\ZZ_2\times_{\ZZ_2} M \to B\ZZ_2$. There is further a Leray--Serre spectral sequence associated to this fibration,
$$
E^{p,q}_2 = H^p_{\rm gp}(\ZZ_2,H^q(M,\ZZ)\otimes \ZZ(1)) \Longrightarrow H^{p+q}_{\ZZ_2}(M,\ZZ(1)) \ ,
$$
where $H^p_{\rm gp}(\ZZ_2,H^q(M,\ZZ)\otimes \ZZ(1))$ denotes the group cohomology of $\ZZ_2$ with values in the $\ZZ_2$-module $H^q(M,\ZZ)\otimes \ZZ(1)$. Since these cohomology groups are torsion in all non-zero degrees, it follows that rationally $E^{p,q}_2=0$ for $p\neq 0$. Thus the spectral sequence collapses at the second page and the only contribution comes from the degree zero group cohomology given by the invariants of the module (cf. also~\cite[Proposition~3.26]{Fok2015} for an alternative proof)
$$
H^q_{\ZZ_2}(M,\RR(1)) \simeq_\RR E_2^{0,q} = \big\{x\in H^q(M,\RR) \ \big| \ \tau^*(x) = -x \big\} \ .
$$

\section{Real and equivariant line bundles}
\label{sec:RealLine}

Let $M$ be a Real manifold. To understand the sequence \eqref{eq:beginlongsequence} it is useful to explore the 
geometric interpretations of the various terms. First we consider the degree zero terms. 

\begin{proposition}
\label{cor-Hzero}
If $M$ is one-connected, then the sequence 
$$
0 \ \longrightarrow \   H^0(M; \ZZ_2,  \overline{\cU(1)}) \ \longrightarrow \   H^0(M, \cU(1)) \ \xrightarrow{ g \mapsto  g \, g \circ \tau} \   H^0(M; \ZZ_2, \cU(1)) \ \longrightarrow \ 0
$$
is exact. 
\end{proposition}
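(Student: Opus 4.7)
The plan is to identify the three groups concretely at the level of global sections and then verify exactness termwise.

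First I would write out what each group is in degree zero. The group $H^0(M,\cU(1))$ is the group $\Map(M,U(1))$ of smooth $U(1)$-valued functions, $H^0(M;\ZZ_2,\cU(1))$ is the subgroup of $\tau$-invariant such functions ($g=g\circ\tau$), and $H^0(M;\ZZ_2,\overline{\cU(1)})$ is the subgroup of Real functions ($\bar g=g\circ\tau$). The first arrow is the obvious forgetful inclusion and the second is $g\mapsto g\cdot(g\circ\tau)$, which lands in the invariants because $(g\cdot(g\circ\tau))\circ\tau=(g\circ\tau)\cdot g$.

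Exactness at the first two terms is formal and does not use the hypothesis. Injectivity of the inclusion is clear. For exactness at $H^0(M,\cU(1))$, the composition vanishes since a Real function satisfies $f\cdot(f\circ\tau)=f\cdot\bar f=1$, and conversely $g\cdot(g\circ\tau)=1$ forces $g\circ\tau=g^{-1}=\bar g$, so $g$ is Real.

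The main content, and where one-connectedness enters, is surjectivity on the right: given a $\tau$-invariant $h\colon M\to U(1)$, we must produce $g$ with $g\cdot(g\circ\tau)=h$. My plan is to take a logarithm. Since $M$ is one-connected, the principal $\ZZ$-bundle $\RR\to U(1)$ admits a lift, yielding $\tilde h\colon M\to\RR$ with $h=e^{2\pi\mathrm{i}\tilde h}$. The invariance $h\circ\tau=h$ says $\tilde h\circ\tau-\tilde h$ is a locally constant integer; connectedness of $M$ makes it a single integer $n$, and applying $\tau$ a second time forces $2n=0$, so $\tilde h\circ\tau=\tilde h$. Then $g:=e^{\pi\mathrm{i}\tilde h}$ satisfies $g\circ\tau=g$ and $g\cdot(g\circ\tau)=g^2=h$, as desired.

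The only nontrivial step is this square-root construction, which is exactly where simple-connectedness is used (to lift to $\RR$) and where connectedness is used (to conclude $\tilde h$ is invariant, not just invariant modulo a constant). The rest is a routine chase of definitions.
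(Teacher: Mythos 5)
Your proposal is correct and follows essentially the same route as the paper: lift the invariant function to $\RR$ using one-connectedness, use connectedness and $\tau^2=\id$ to show the lift is genuinely invariant (the integer discrepancy must vanish), and take half the lift to produce the square root. The only difference is that you also spell out the routine exactness checks at the first two terms, which the paper leaves implicit.
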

\begin{proof}
Let $f:M\to U(1)$ be invariant. Since we can regard $f \colon M \to \RR/ \ZZ$ and $H^1(M, \ZZ)
= 0$, we can lift $f$ to a map $\hat f \colon M \to \RR$. As $f$ satisfies $f \circ \tau = f$ we have  $\hat f \circ \tau = \hat f + k$ for $k \in \ZZ$ a constant because $M$ is connected. But $\tau^2 = 1$ so that $\hat f = \hat f \circ \tau + k$ and thus $k = 0$.   If we let 
$\hat g = \frac{\hat f}{2}$ and project $\hat g$ to $g \colon M \to U(1)$ then $(g \circ \tau)\, g = f$.
\end{proof}

Consider the very similar case that $f \colon M \to U(1)$ is Real,
that is $f \circ \tau = \bar f$.
Then it is tempting to conclude that there
is a map $g \colon M \to U(1)$ such that $f = (g \circ \tau)\, \bar g $. This is however not true in general. Consider a lift
$\hat f$ of $f$, then $\hat f \circ \tau + \hat f = k$ for some $k\in\ZZ$, and the 
image of $k$ in $\ZZ_2$ is well-defined independently of the lift of $f$; call it $\epsilon(f)$. 
If $\epsilon(f) = 0$, then we can define $\hat g = -\frac{\hat f}2$ and 
$$
\hat g \circ \tau   -  \hat g  =  \frac{\hat f }2 - \frac{\hat f \circ \tau}2 = \hat f
$$ 
so that $(g \circ \tau)\, \bar g = f $.  Hence we have
\begin{proposition}
\label{prop:Hzero}
If $M$ is one-connected, then the sequence 
$$
0 \ \longrightarrow \   H^0(M; \ZZ_2, \cU(1)) \ \longrightarrow \   H^0( M, \cU(1)) \ \xrightarrow{ g \mapsto \bar g \, g \circ \tau} \   H^0(M; \ZZ_2,  \overline{\cU(1)}) \ \xrightarrow{ \ \epsilon \ } \ \ZZ_2 \ \longrightarrow \ 0
$$
is exact.
\end{proposition}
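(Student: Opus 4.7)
The statement is the Real analogue of Proposition \ref{cor-Hzero}, differing in that the twisted invariance $f\circ\tau=\bar f$ replaces $f\circ\tau=f$, and this forces the appearance of the sign obstruction $\epsilon$. My plan is to check exactness at each of the four positions in turn, mirroring the proof of Proposition \ref{cor-Hzero}: one-connectedness of $M$ gives $H^1(M,\ZZ)=0$, so every $f\colon M\to U(1)=\RR/\ZZ$ admits an $\RR$-valued lift $\hat f$, and such a lift is unique up to a global integer constant since $M$ is connected.

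The easy exactness is handled as follows. The first map is the inclusion of invariant functions into all functions, hence injective. At the next spot, $\bar g\cdot (g\circ\tau)=1$ if and only if $g\circ\tau=g$, which identifies the kernel with $H^0(M;\ZZ_2,\cU(1))$. Surjectivity of $\epsilon$ is witnessed by the constant function $f\equiv -1$, which is Real and admits the constant lift $\hat f\equiv \tfrac12$, giving $\epsilon(f)=1$.

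The main content is the definition of $\epsilon$ and exactness at $H^0(M;\ZZ_2,\overline{\cU(1)})$, much of which is already spelled out in the discussion preceding the statement. For a Real $f$ with lift $\hat f$, the function $\hat f+\hat f\circ\tau$ is $\ZZ$-valued, hence equal to a constant $k(f)\in\ZZ$; two different lifts differ by a global integer and alter $k(f)$ by an even amount, so $\epsilon(f):=k(f)\bmod 2$ is well-defined, and additivity in $f$ is immediate. If $f=\bar g\cdot (g\circ\tau)$ and $\hat g$ is any lift of $g$, then $\hat f:=-\hat g+\hat g\circ\tau$ lifts $f$ and its sum with its $\tau$-translate vanishes identically, so $\epsilon(f)=0$. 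Conversely, if $\epsilon(f)=0$ then after subtracting the integer $k(f)/2$ from $\hat f$ one may assume $\hat f+\hat f\circ\tau=0$, whereupon $\hat g:=-\tfrac12\hat f$ descends to the desired $g\colon M\to U(1)$ satisfying $\bar g\cdot(g\circ\tau)=f$, as in the text.

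I expect no serious obstacle: the proof is a direct adaptation of the $\RR$-lifting argument already carried out in Proposition \ref{cor-Hzero}, and the sole subtlety is the mod-$2$ ambiguity in the integer $k(f)$, which is exactly what produces the quotient $\ZZ_2$ at the end and prevents the map $g\mapsto\bar g\cdot(g\circ\tau)$ from being surjective.
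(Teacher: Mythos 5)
Your proof is correct and follows essentially the same route as the paper: the $\RR$-valued lifting argument made available by $H^1(M,\ZZ)=0$, the definition of $\epsilon(f)$ as the mod-$2$ class of the constant $\hat f+\hat f\circ\tau$, and the substitution $\hat g=-\tfrac12\hat f$ to show $\ker\epsilon$ lies in the image. The only difference is that you spell out the routine exactness checks (injectivity, kernel of $g\mapsto\bar g\,(g\circ\tau)$, image contained in $\ker\epsilon$, and surjectivity of $\epsilon$ via $f\equiv-1$) which the paper leaves implicit in the discussion preceding the statement.
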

Notice that $\epsilon(f)$ can also be defined as follows. As $M$ is one-connected we can 
choose a square root of $f$ and consider $\sqrt{f} \,(\sqrt{f} \circ \tau)$ which
is independent of the choice of square root. Then
$(\sqrt{f}\, (\sqrt{f} \circ \tau))^2 = f \,(f \circ \tau) = 1$ so that $\sqrt{f}\,( \sqrt{f} \circ \tau)
= (-1)^{\epsilon(f)} $ defines a constant element of $\ZZ_2$.  In particular if $f = -1$, then $\epsilon(f) = 1$.

Now we consider the degree~one terms. For this, we say that a line bundle $L \to M$ is \emph{Real} if there is a complex anti-linear map $\tau_L \colon L \to L$ covering $\tau \colon M\to M$ whose square is the identity. We will usually suppress the subscript on $\tau_L$.

 \begin{proposition} 
 \label{prop:classify-Real-line}
 The group $  H^1(M; \ZZ_2,  \overline{\cU(1)})$ classifies isomorphism classes of Real line 
 bundles on $M$. 
 \end{proposition}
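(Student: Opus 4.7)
The plan is to exhibit mutually inverse maps between isomorphism classes of Real line bundles on $M$ (with morphisms required to intertwine the Real structures) and classes in $H^1(M; \ZZ_2, \overline{\cU(1)})$. Throughout, I fix a Real good cover $\cU = \{U_\a\}_{\a\in I}$ with free involution $\a \mapsto \bar\a$ on the indexing set, as constructed earlier in the text, and split $I = I_+ \sqcup I_-$ with $\overline{I_+} = I_-$.

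First, given a Real line bundle $(L, \tau_L)$, I pick unit local sections $s_\a \colon U_\a \to L$ for $\a \in I_+$ and extend by the prescription $s_{\bar\a}(m) := \tau_L(s_\a(\tau(m)))$; the relation $\tau_L^2 = \id$ makes this definition independent of the chosen orbit representative. Defining transition functions by $s_\a = g_{\a\b}\, s_\b$ on $U_{\a\b}$, the cocycle condition is automatic, and on $U_{\bar\a\bar\b}$ one computes
\[
s_{\bar\a}(m) = \tau_L\bigl(g_{\a\b}(\tau(m))\, s_\b(\tau(m))\bigr) = \overline{g_{\a\b}(\tau(m))}\, s_{\bar\b}(m),
\]
using the antilinearity of $\tau_L$, so that $g_{\bar\a\bar\b}\circ\tau = \bar g_{\a\b}$, i.e., $\tau^*(g) = \bar g$. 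Thus $g$ represents a class in $H^1(M; \ZZ_2, \overline{\cU(1)})$. Independence of the choice of sections follows because a rescaling $s'_\a = \lambda_\a s_\a$ on $I_+$, propagated to $I_-$ by the same rule, forces $\lambda_{\bar\a} = \overline{\lambda_\a \circ \tau}$, so $\lambda$ is a Real $0$-cochain and the two cocycles differ by $\delta \lambda$ in the Real complex. A Real bundle isomorphism $L \simeq L'$ produces cohomologous Real cocycles by the same argument applied to the isomorphism.

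Conversely, given a Real cocycle $g$, build $L$ as the usual quotient of $\coprod_\a U_\a \times \CC$ by $(m, z)_\b \sim (m, g_{\a\b}(m)\, z)_\a$, and define $\tau_L([m, z]_\a) := [\tau(m), \bar z]_{\bar\a}$. The Real cocycle condition $g_{\bar\a\bar\b}(\tau(m)) = \overline{g_{\a\b}(m)}$ is exactly what is needed for this formula to respect the equivalence relation; $\tau^2 = \id$ together with $\bar{\bar z} = z$ yields $\tau_L^2 = \id$; and $\tau_L$ is antilinear on fibres by construction. Composing the two constructions in either order returns the same data up to canonical Real isomorphism, establishing the bijection.

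The main obstacle is essentially bookkeeping around the involution on indices, particularly verifying consistency on triple overlaps where the indices belong to different halves $I_\pm$ of the partition, and checking that passing to a Real refinement of $\cU$ or changing sections yields cohomologous Real cocycles. The free action of the involution on $I$, arranged via the doubling trick recalled earlier, removes any pointwise sign obstruction from the extension procedure $s_\a \mapsto s_{\bar\a}$; without it one would have to separately treat fixed indices, at which the Real condition $\bar g_{\a\a} = g_{\a\a}\circ\tau$ may create local sign subtleties. With this reduction in place, the rest of the argument runs in strict parallel to the classical identification of complex line bundles up to isomorphism with $H^1(M, \cU(1))$, with the antilinearity of $\tau_L$ threaded through each transition-function calculation.
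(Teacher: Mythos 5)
Your proposal is correct and follows essentially the same route as the paper's proof: choose a Real good cover with free involution on the index set, define sections on $I_+$ and propagate them to $I_-$ via the Real structure to obtain a Real cocycle, and conversely glue a line bundle from a Real cocycle and define the Real structure by conjugating the fibre coordinate while swapping $\a \leftrightarrow \bar\a$. The extra checks you include (independence of the choice of sections, invariance under Real isomorphism and refinement) are routine verifications the paper leaves implicit, so nothing further is needed.
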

  \begin{proof} 
 Let $L\to M$ be a Real line bundle with Real structure $\tau \colon L \to L$. Let $\cU$ be a good cover as in Section \ref{sec:HR}. Split the indexing set $I$ for $\cU$ into $I^+$ and $I^-$ interchanged
 by $\tau$. Then choose sections $s_\a \colon U_\a \to L$ for $\a \in I^+$ and
 for $\bar \a \in I^-$ define $s_{\bar\a} = \tau s_\a \circ \tau$.  Because $\tau^2 = 1$
 it follows that $s_{\bar\a} = \tau s_\a \circ \tau$ for all $\a \in I$, and if $g_{\a\b}$
 satisfies $s_\a = s_\b\, g_{\a\b}$ then $g_{\bar\a \bar\b} = \bar{g}_{\a\b}\circ \tau$ is a Real 
 cocycle. 
 
 Let $g_{\a\b}$ be a Real cocycle representing a class in $  H^1(M; \ZZ_2, \overline{\cU(1)})$.
 We can find a line bundle $L \to M$ with local sections $s_\a$ such that $s_\a = s_\b\, g_{\a\b}$.
 If $v = v_\a \,s_\a \in L$, then define $\tau( v) = \bar{v}_\a \, (s_{\bar \a} \circ \tau)$.  If we change
 to $v = v_\b \,s_\b$, then $v_\a\, g_{\a\b} = v_\b$ so that 
\begin{align*}
 \bar{v}_\b\, (s_{\bar\b} \circ \tau) &= \bar{v}_\a \,
                                        \bar{g}_{\a\b}\, (s_{\bar \a} \, g_{\bar\a\bar\b}^{-1}  \circ \tau) \\[4pt]
   &=  \bar{v}_\a\, (s_{\bar \a}  \circ \tau) \, \bar{g}_{\a\b}\, (g_{\bar\a\bar\b}^{-1} \circ \tau)\\[4pt]
    &=  \bar{v}_\a \, (s_{\bar \a}  \circ \tau) \ ,
  \end{align*}                       
giving a well-defined Real structure because $g_{\a\b}$ is Real. It is easy to see that $\tau^2 = 1$  as required. 
\end{proof}

\begin{remark}
We may refer to $  H^1(M; \ZZ_2,  \overline{\cU(1)})$ as the \emph{Real Picard
  group} of $M$, and  the class in $  H^1(M; \ZZ_2,  \overline{\cU(1)})$
corresponding to a Real line bundle $L\to M$ as the \emph{Real Chern
  class} of $L$.
\end{remark}
 
If $M$ is one-connected, then the sequence
 $$
0 \ \longrightarrow \   H^1(M; \ZZ_2, \overline{\cU(1)}) \ \longrightarrow \   H^1( M, \cU(1)) \ \xrightarrow{1 \times \tau^*} \   H^1(M; \ZZ_2, \cU(1))  \ \longrightarrow \  \cdots
 $$
 is exact.  In particular if a line bundle $L\to M$ admits a Real structure, then the Real structure
 is unique up to isomorphism. We can prove this directly as follows.  Assume that $\tau \colon L \to L$
 is a Real structure.  Then any other Real structure takes the form $f \, \tau $ for a map $f \colon M \to U(1)$. 
 Because $(f \, \tau)^2 = 1$ and $\tau^2 = 1$, we deduce that $(f \circ \tau)\, \bar f  = 1$.
 So $f \colon M \to U(1)$ is invariant and thus $f = (g \circ
 \tau)\, g$ for some $g \colon M\to U(1)$. It follows
 that $(L, \tau)$ and $(L, f \, \tau)$ are isomorphic by the isomorphism $L \to L$ induced
 by multiplication with $g$.

We similarly say that a line bundle $L \to M$ is \emph{equivariant} if we lift $\tau \colon M\to M$ to a complex linear isomorphism $\tau \colon L \to L$ with $\tau^2 = 1$; we call the lift of $\tau$ a \emph{$\tau$-action} on $L$.  We  have

  \begin{proposition} 
 The group $  H^1(M; \ZZ_2, \cU(1))$ classifies isomorphism classes of equivariant line 
 bundles on $M$. 
 \end{proposition}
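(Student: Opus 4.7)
The plan is to mimic the proof of Proposition~\ref{prop:classify-Real-line} \emph{mutatis mutandis}, with the key adjustments that $\tau$ is now complex linear rather than anti-linear, and the cocycle condition becomes invariance rather than the Real condition. So I would construct a map in each direction and check they invert each other up to isomorphism.

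First, starting from an equivariant line bundle $L\to M$ with $\tau$-lift satisfying $\tau^2 = \id$, I would choose a Real good cover $\cU=\{U_\a\}_{\a\in I}$ as in Section~\ref{sec:HR} with indexing set split as $I = I^+ \sqcup I^-$ interchanged by $\a\mapsto\bar\a$. Pick local sections $s_\a\colon U_\a\to L$ for $\a\in I^+$ and set $s_{\bar\a} := \tau\circ s_\a\circ \tau$ for $\bar\a\in I^-$. The relation $\tau^2=\id$ guarantees that this formula holds for all $\a\in I$ consistently. If $s_\a = s_\b\, g_{\a\b}$ defines the usual transition cocycle, then applying $\tau$ and using linearity (no conjugation now) yields $g_{\bar\a\bar\b} = g_{\a\b}\circ\tau$, which is precisely the invariance condition defining $C^1(\cU;\ZZ_2,\cU(1))$.

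In the reverse direction, given an invariant cocycle $g_{\a\b}$ representing a class in $H^1(M;\ZZ_2,\cU(1))$, I reconstruct the line bundle $L\to M$ from $g_{\a\b}$ by the standard clutching, with sections $s_\a$ satisfying $s_\a = s_\b\, g_{\a\b}$. The lift is defined by $\tau(v) := v_\a\,(s_{\bar\a}\circ\tau)$ when $v = v_\a\, s_\a$ (note: no complex conjugate on $v_\a$, because we want $\tau$ to be complex linear). Well-definedness on overlaps reduces to the computation
\begin{align*}
v_\b\,(s_{\bar\b}\circ\tau) &= v_\a\, g_{\a\b}\, \big(s_{\bar\a}\, g_{\bar\a\bar\b}^{-1}\circ\tau\big) \\[2pt]
&= v_\a\,(s_{\bar\a}\circ\tau) \, g_{\a\b}\, \big(g_{\bar\a\bar\b}^{-1}\circ\tau\big) \\[2pt]
&= v_\a\,(s_{\bar\a}\circ\tau) \ ,
\end{align*}
where the last equality uses exactly the invariant cocycle condition $g_{\bar\a\bar\b}\circ\tau = g_{\a\b}$. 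A short check confirms $\tau^2 = \id$.

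Finally, I would verify that these two constructions are inverse to each other at the level of isomorphism classes, and that coboundaries correspond to equivariant bundle isomorphisms: two invariant cocycles differ by the coboundary of an invariant $0$-cochain $h_\a$ if and only if the induced equivariant line bundles are isomorphic by an isomorphism compatible with the $\tau$-actions. The only real subtlety—and the main point where care is needed—is keeping track of the indexing involution on the cover to ensure that everything descends properly; the rest is essentially bookkeeping parallel to the Real case, with conjugations systematically removed.
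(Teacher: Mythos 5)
Your proof is correct and is exactly the route the paper intends: the paper omits the argument, stating only that it is ``very similar'' to Proposition~\ref{prop:classify-Real-line}, and your adaptation---replacing the Real (conjugate) cocycle condition by the invariance condition $g_{\bar\a\bar\b}\circ\tau=g_{\a\b}$ and dropping the conjugation on coefficients so that the lift $\tau(v)=v_\a\,(s_{\bar\a}\circ\tau)$ is complex linear with $\tau^2=\id$---is precisely that adaptation, carried out correctly. The remaining bookkeeping you defer (free involution on the index set, coboundaries of invariant $0$-cochains corresponding to equivariant isomorphisms) is handled the same way as in the Real case, so there is no gap.
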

 \begin{proof} 
 We omit the  proof as it is  very similar to the case of Real line bundles in Proposition \ref{prop:classify-Real-line}.  It also follows by combining \cite[Theorem 5.2 and Lemma 4.4]{Gom} and \cite[Section 6]{St}.
  \end{proof}

If $\tau \colon L \to L$ is a lift of $\tau \colon M\to M $ 
making the line bundle $L\to M$ equivariant, then so is $-\tau$.  We can show that, up to isomorphism, these are
the only possible $\tau$-actions when $M$ is one-connected.  Indeed, as in the Real case any new $\tau$-action
takes the form $f\, \tau$ and thus $(f \circ \tau)\, f = 1$ so that
$f \colon M\to U(1) $ is Real. Now
whether or not we can make $(L, \tau)$ and $(L, f \, \tau)$ the same
up to isomorphism depends on the sign of $\epsilon(f)$, so there are only the two possibilities.

We can now interpret the terms in the second row of the exact sequence \eqref{eq:beginlongsequence} geometrically as follows. If $f \colon M \to U(1)$ is invariant,
then the image of the coboundary homomorphism is the trivial line bundle with the Real structure induced
by $f$, that is the Real structure induced by multiplying the trivial Real structure with $f$.
The map $  H^1(M; \ZZ_2, \overline{\cU(1)}) \to   H^1( M, \cU(1))$ forgets the Real structure, while the map 
 $  H^1( M, \cU(1)) \to   H^1(M; \ZZ_2, \cU(1))$ sends a line bundle $J \to M$ representing a class in $  H^1( M, \cU(1))$ to 
 the equivariant line bundle $J \otimes \tau^{-1}(J)$ with the $\tau$-action induced by the obvious isomorphism
 $$
 J \otimes \tau^{-1}(J) \longrightarrow \tau^{-1}(J \otimes
 \tau^{-1}(J) ) \simeq \tau^{-1}(J) \otimes  J \ .
 $$
We postpone the description of the maps in the third row of the
sequence \eqref{eq:beginlongsequence} until Section~\ref{sec:Real bundle
  gerbes}, where we give a way of geometrically realising classes in 
$  H^2(M; \ZZ_2, \overline{\cU(1)})$ as Real stable isomorphism classes of Real bundle gerbes.

\begin{remark}\label{remark:real-bundle} If $L \to M$ is a $U(1)$-bundle then $L \otimes \tau^{-1}(L) \to M$ is naturally equivariant using the 
obvious identification $ \tau^{-1}(L \otimes \tau^{-1}(L)) = \tau^{-1}(L) \otimes L = L \otimes \tau^{-1}(L)$.  A Real structure
 on $L$ is precisely an invariant section of $L \otimes \tau^{-1}(L)$.
 If $\tau$ is a Real structure, then $s(m) = \ell \otimes \tau(\ell)$
 is an invariant section where $\ell \in L_m$, and vice-versa.
 \end{remark}

\begin{example}
\label{ex:pt}
Let $M=\pt$ be a point. A line bundle over a point is a one-dimensional vector space. 
Up to isomorphism there is a unique Real structure on $\CC$ given by conjugation so 
$  H^1(\pt; \ZZ_2, \overline{\cU(1)})=0$. On the other hand, the equivariant line bundles over a 
point are just the collection of possible involutions on $\CC$ which are $\pm\,1$, so $  H^1(\pt; \ZZ_2, \cU(1))=\ZZ_2$.
\end{example}

\begin{example}\label{ex:trivialtau}
Let $\tau=\id_M$ be the trivial involution on $M$. Then $\tau^{-1}(L)=L$
for any line bundle $L$ on $M$, and any Real line bundle can be
naturally regarded as an ordinary real line bundle on $M$~\cite{Ati}, so $ 
H^1(M;\ZZ_2,\overline{\cU(1)})\simeq H^1(M,\ZZ_2)$. Any line bundle
$L\to M$ is trivially equivariant and there are two non-isomorphic
lifts $\pm \id_L$ of $\tau=\id_M$, so $ 
H^1(M;\ZZ_2, \cU(1))\simeq\ZZ_2\oplus H^2(M,\ZZ)$. 
\end{example}

\begin{example}
Let $N$ be any manifold and let $M = N \times \ZZ_2$ with the free
action $\tau \colon (n,x)\mapsto (n,-x)$. The space $M$ is two copies of $N$
labelled by $\pm\,1$, and $\tau$ exchanges the two copies. Any line
bundle $L\to M$ is a pair of line bundles $(L_+,L_-)$ on
$N\times\{+1\}$ and $N\times\{-1\}$, respectively, with $\tau^{-1}(L)=
(L_-,L_+)$. Thus any Real line bundle over $M$ is of the form $(J,J^*)$ and so is completely determined by the complex line bundle $J\to N$~\cite{Ati}, hence
$  H^1(M;\ZZ_2,\overline{\cU(1)})\simeq  
H^1(N, \cU(1))=H^2(N,\ZZ)$.  Similarly any equivariant line bundle over
$M$ is of the form $(J,J)$ and there are two non-isomorphic
$\tau$-actions, hence $  H^1(M; \ZZ_2 , \cU(1))\simeq \ZZ_2\oplus H^2(N,\ZZ)$. The map which sends $  H^1( M, \cU(1))\simeq  
H^1(N, \cU(1))\oplus   H^1(N, \cU(1))$ into
$  H^1(M;\ZZ_2, \cU(1))$ is $(L_+,L_-)\mapsto(L_+\otimes
L_-,L_+\otimes L_-)$.
\end{example}

\begin{example}\label{ex:S2}
As an example of the theory we have developed, we  classify the Real and equivariant line bundles $L$ on  $S^2$
for any Real structure $\tau \colon S^2 \to S^2$.    First note that
we have shown generally that if $M$ is $1$-connected and $L \to M$ then it has zero or one Real structures
and zero or two equivariant structures. Second note from~\cite[Theorem~4.1]{ConstKol} that up to conjugation 
by a diffeomorphism (what they call equivalence) any involution is of three types: (a) it is homotopic to the 
identity, (b) it is equivalent to the antipodal map; or (c) it is equivalent to  conjugation 
on $\CC P^1$  or reflection $(x, y, z) \mapsto (x, y, -z)$ in the equator.   It is a straightforward exercise to show that if $\tau$ is an 
involution and $\tilde \tau = \chi^{-1} \tau \chi$ for a diffeomorphism $\chi$ then $L$ has a Real or equivariant
structure for $\tau$ if and only if $\chi^{-1}L$ has a Real or equivariant structure for $\tilde \tau$. 

First notice that if  $L= \CC \times S^2$ any involution $\tau$ lifts to a Real structure $\tau(u, z) = (\tau(u), \bar z)$ and  to two equivariant structures $\tau(u, z) = (u, \pm z)$ for any Real structure.  

Assuming now that $L$ is not trivial we use various topological facts. First we have $\deg(\tau) = \pm 1$ depending
if it is homotopic to the identity map or the antipodal map.  Moreover  $L$ has  Real structure $\tau^{-1}( L) \simeq L^*$ so that 
$\deg(\tau)  = -1$ and if $L$ has an equivariant structure $\tau^{-1}( L) \simeq L$ so that $\deg(\tau) = 1$.    Bearing
this in mind we consider the three possibilities for $\tau$. 

\noindent
(a) $\tau$ is homotopic to the identity map so $\tau^{-1}(L) \simeq L$
for any line bundle $L \to S^2$ (this includes the equivalence classes
of the identity and the rotation by $\pi$). In that case a class in $  H^1(S^2; \ZZ_2, \overline{\cU(1)})$ represents a line
bundle for which $L \simeq \tau^{-1}(L)^* \simeq L^*$ which is only
possible if $L = S^2\times \CC $ is trivial and 
there is a unique Real structure on it so $  H^1(S^2; \ZZ_2, \overline{\cU(1)}) = 0$. Let $L \to S^2$ be a line 
bundle, and let $\phi$ be the isomorphism $\tau^{-1}(L) \simeq L$.  Then
$\phi^2 = g$ for a map $g \colon S^2\to U(1)$ and it can be checked
that $g = g\circ \tau$, so we can solve $ f \, (f \circ \tau)\, g = 1$ which enables us to show that if 
$\tau = f \, \phi$, then $\tau^2 = 1$ so $L$ is equivariant.  There are two solutions of course so 
$  H^1(S^2; \ZZ_2, \cU(1)) \simeq \ZZ_2 \oplus \ZZ$.   The
inclusion $  H^1(S^2, \cU(1)) \to   H^1(S^2; \ZZ_2, \cU(1))$
sends $L\mapsto L \otimes \tau^{-1}(L) = L^2$ and hence maps $k \in \ZZ$ to $2k$. 

\noindent
(b) $\tau $ is equivalent to the antipodal map so $\tau^{-1}(L) \simeq
L^*$ for any line bundle $L \to S^2$.  Consider first the case that $\tau$ is the antipodal map
and the Hopf bundle $H \to S^2 = \CC P^1$. We can lift the antipodal map $\tau([z_0, z_1]) = 
[-\bar z_1, \bar z_0]$ to an anti-linear map on fibres of $H$ by $\tau(w_0, w_1) = (-\bar w_1, \bar w_0)$ but then $\tau^2 = -1$. As we have seen above this choice cannot be modified to 
give a Real structure.  So the Hopf bundle does not admit a Real
structure in this case. However any 
even power of the Hopf bundle does.  So $  H^1(S^2; \ZZ_2,
\overline{\cU(1)}) \simeq \ZZ$ and it contains the isomorphism
classes of $H^{2k}$ with the Real structure above, which map to the even Chern classes 
in $H^2(S^2, \ZZ)$.  Consider now an equivariant bundle $L \to M$. It then admits
an isomorphism $L \simeq \tau^{-1}(L) \simeq L^*$ which is only possible if $L $ is trivial and
hence has the identity and $-1$ as non-isomorphic
$\tau$-actions. So $  H^1(S^2; \ZZ_2, \cU(1)) = \ZZ_2$. Every line
bundle $L$ in $  H^1(S^2, \cU(1))$ maps to $L \otimes \tau^{-1}(L)
\simeq L \otimes L^* \simeq S^2 \times\CC$. A simple calculation shows that if $L = H$ we obtain the 
trivial line bundle with $-1$ as $\tau$-action. So if $L$ has odd Chern class it maps to the trivial
bundle with $\tau$-action $-1$ while if $L$ has even Chern class it maps to the trivial line 
bundle with the identity as $\tau$-action. 

In the case that $\tau$ is only equivalent to the antipodal map by a diffeomorphism $\chi$ then the arguments above
apply to  $\chi^{-1}( H)$ which is either $H$ or $H^*$ so we deduce the same results. 

\noindent
(c) $\tau$ is equivalent to the reflection about the equator, or
equivalently to the conjugation map $\tau([z_0,z_1])=[\bar z_1,\bar
z_0]$, so again $\tau^{-1}(L) \simeq
L^*$ for any line bundle $L \to S^2$. Again consider first the case that $\tau$ is this involution. This time, however, the
conjugation lifts to an anti-linear map on the fibres of $H$ as
$\tau(w_0,w_1)=(\bar w_1,\bar w_0)$ with $\tau^2=1$, which is the
standard Real structure on the Hopf bundle. Hence again $  H^1(S^2; \ZZ_2,
\overline{\cU(1)}) \simeq \ZZ$, but now the map $  H^1(S^2; \ZZ_2, \overline{\cU(1)}) \to   H^1(S^2,
U(1))$ is the identity. Similarly to the previous case, we have $
H^1(S^2; \ZZ_2, \cU(1)) = \ZZ_2$, where now every line bundle $L$ in $
H^1(S^2, \cU(1))$ maps to the trivial line bundle with identity $\tau$-action.

Again if $\tau$ is only equivalent to the conjugation we can make the same argument.
\end{example}

\begin{example}
\label{ex:2connected-line-bundle}
Let $M$ be two-connected, for example a connected and simply-connected
Lie group with the Cartan involution. Then all line bundles on $M$ are trivial, and so carry
$\tau$-actions. There is a unique Real structure by
Proposition~\ref{cor-Hzero}, so
$H^1(M; \ZZ_2, \overline{\cU(1)}) = 0$, and there are two non-isomorphic $\tau$-actions by
Proposition \ref{prop:Hzero}, hence $H^1(M; \ZZ_2, \cU(1)) = \ZZ_2$.
\end{example}

\section{Real bundle gerbes}\label{sec:Real bundle gerbes}

In this section we will describe a particular modification of the
definition of bundle gerbes, which realises the cohomology group
$  H^2(M; \ZZ_2, \overline{\cU(1)})$ in the same way that bundle gerbes realise $
H^2( M, \cU(1))$. Our notion of {\em Real bundle gerbes} coincides with that by Moutuou in the setting of groupoids \cite[Definition~2.8.1]{Mou}, but we omit the gradings which are not necessary for our purposes.

\subsection{Definitions and examples}

Let $M$ be a manifold with an involution $\tau \colon M \to M$.

\begin{definition}
\label{def:Rbg}
A {\em Real structure} on a bundle gerbe $(P, Y)$ over $M$ is a pair of maps $(\tau_P, \tau_Y)$ where $\tau_Y \colon Y \to Y$ is an involution 
covering $\tau \colon M \to M$, and $\tau_P \colon P \to P$ is a
 conjugate involution covering $\tau_Y^{[2]} \colon Y^{[2]}
\to Y^{[2]}$ and commuting with the bundle gerbe multiplication. A \emph{Real bundle gerbe} over $M$ is a bundle gerbe $(P, Y)$ over $M$ with a Real structure.
\end{definition}

By a  conjugate involution we mean that $\tau_P(p\, z) = \tau_P(p)\, \bar z$ and $\tau_P^2 = \id_P$. Often we will suppress the subscripts on $\tau_P$ and $\tau_Y$.

\begin{remark}
At first this definition appears to be far too strict, as it involves \emph{isomorphism} of bundle gerbes rather than stable isomorphism. There is indeed a weaker notion---known as a \emph{Jandl bundle gerbe}---which we will discuss in Section~\ref{sec:Jandl}. However, we shall see that every Jandl bundle gerbe is in fact equivalent to a Real bundle gerbe and this stronger notion is sufficient to represent the cohomology classes in question. 
\end{remark} 

\begin{remark}
Occasionally it will be important to emphasise the difference between $(P, Y)$ thought of as a Real bundle gerbe and $(P, Y)$ thought of as just a bundle gerbe obtained by forgetting the Real structure.  
In this case we will refer to the latter as a $U(1)$-bundle gerbe. 
\end{remark}

\begin{example}\label{ex:trivial}
If $R \to Y$ is a Real hermitian line bundle with Real structure $\tau_R\colon R \to R^*$, then $(\d R, Y)$ is a Real bundle gerbe with Real structure given by  $\d \tau_R\colon \d R = \pi_1^{-1}(R) \otimes \pi_2^{-1}(R)^* \to \pi_1^{-1}(R)^* \otimes \pi_2^{-1}(R) =\d R^*$. We say that a Real bundle gerbe $(P, Y)$ is \emph{Real trivial} if there is a Real line bundle $R \to Y$ such that
$P = \delta R$ as Real bundle gerbes; this means that $P = \d R$ as
bundle gerbes and  that the isomorphism commutes with the Real
structures. A choice of Real bundle $R$ and an isomorphism $P\simeq \d R$ is called a \emph{Real trivialisation}.
\end{example}

\begin{example} If $(P, Y)$ and $(Q, X)$ are Real bundle gerbes over $M$ with Real structures $\tau_P$ and $\tau_Q$, respectively, then $(P \otimes Q, Y \times_M X)$ is a Real bundle gerbe with the obvious Real structure $\tau_P \otimes \tau_Q\colon P \otimes Q \to P \otimes Q$.
\end{example}

\begin{example}
If $f\colon N\to M$ is an equivariant map of Real spaces and $(P,Y)$ is a Real bundle gerbe on $M$, then $f^{-1}(P,Y)$ is a Real bundle gerbe on $N$. Equivariance determines involutions $\tau_{f^{-1}(Y)}\colon f^{-1}(Y)\to f^{-1}(Y)$ covering $\tau_Y \colon Y\to Y$ and $\tau_{(f^{[2]})^{-1}(P)}\colon (f^{[2]})^{-1}(P)\to (f^{[2]})^{-1}(P)$ covering $\tau_P\colon P\to P$.
\end{example}

\begin{example}
\label{ex:Z2} In the case that $\tau = \id_M$ and $Q \to Y^{[2]}$ is a $\ZZ_2$-bundle gerbe (as in \cite{MatMurSte}) define  $P = Q \times_{\ZZ_2} U(1)$, $\tau_Y = \id_Y$ and $ \tau_P([q, z]) = [q, \bar z]$.
Then $(P, Y)$ is a Real bundle gerbe. Conversely, if $\tau = \id_M$ and $\tau_Y = \id_Y$, then the fixed point set of $\tau_P$ is a reduction of the $U(1)$-bundle $P \to Y^{[2]}$ to a $\ZZ_2$-bundle making it a $\ZZ_2$-bundle gerbe.
\end{example}

\begin{example}\label{ex:double}
Let $N$ be any manifold and let $(P,Y)$ be a bundle gerbe on $N$. Let
$M= N \times \ZZ_2$ with the involution $\tau \colon (n,x)\mapsto
(n,-x)$, and set $Z=Y\times\ZZ_2$ with projection
$p =\pi\times1$ and involution
$\tau_{Z} \colon (y,x)\mapsto(y,-x)$. The fibre product
$Z^{[2]}$ can be naturally identified as
$Y^{[2]}\times\ZZ_2$ with the involution
$\tau^{[2]}_{Z} \colon (y_1,y_2,x)\mapsto(y_1,y_2 ,-x)$, and we set
$Q=(P,P^*)\to Z^{[2]}$ with the involution $\tau_{Q}$ which exchanges the two slots. Then $(Q,Z)$ is a Real bundle gerbe on $M$. Any Real bundle gerbe on $M$ arises in this way.
\end{example}

\begin{example}[The basic bundle gerbe]
\label{ex:basic}
Let $G$ be a compact, connected, simply-connected, simple Lie group and
$\Omega G$ its based loop group. The universal $\Omega G$-bundle is
the path fibration $PG \to G$, where $PG$ is the space of based maps
$[0,1]\to G$ and the projection is evaluation at the endpoint. The
lifting bundle gerbe for this bundle associated to the universal
central extension $\pi \colon \widehat{\Omega G} \to \Omega G$ of the
loop group is a model for the \emph{basic bundle gerbe}. This is given
by the fibre product $PG^{[2]}\to \Omega G$ via $ (p_1, p_2) \mapsto
\gamma$, where $p_2 = p_1\, \gamma$. The basic bundle gerbe $Q \to
PG^{[2]}$ is then given by pulling back the central extension
$\widehat{\Omega G} \to \Omega G$ and the bundle gerbe multiplication
is induced by the group multiplication in $\widehat{\Omega G}$, that is
$$
Q_{(p_1, p_2)} = \widehat{\Omega G}_{p_1^{-1}\, p_2}
$$
where if $h \in \Omega G$ then $\widehat{\Omega G}_h = \pi^{-1}(h)$. 

Let $G$ be equipped with the involution $\tau \colon g\mapsto
g^{-1}$. This lifts to an involution $\tau \colon p \mapsto p^{-1}$ on
$ PG$ and if $p_2 = p_1 \, \gamma$ then 
$p_2^{-1} = p_1^{-1}\, (\Ad_{p_1}(\gamma^{-1}))$. By \cite{BaeCraSchSte} the adjoint action $\Ad \colon PG \to \Aut(\Omega G)$ lifts to an action on $\widehat{\Omega G}$ and hence we define a Real structure 
$
\tau \colon Q_{(p_1, p_2)} \to Q_{(p_1^{-1}, p_2^{-1})} 
$
 given by $\tau(q) = \Ad_{p_1}(q^{-1})$. If $q_{12} \in Q_{(p_1, p_2)}$ and $q_{23} \in Q_{(p_2, p_3)}$ then 
\begin{align*}
 \tau( q_{12}) \, \tau(q_{23} ) &= \Ad_{p_1} ( q^{-1}_{12}) \Ad_{p_2} (q^{-1}_{23}) \\[4pt]
               &= \Ad_{p_1}\big( q^{-1}_{12} \Ad_{p_1^{-1}\, p_2} (q^{-1}_{23})\big) \\[4pt]
                &= \Ad_{p_1}\big( q^{-1}_{12} \,q_{12} \,(q^{-1}_{23})\, q_{12}^{-1}\big)\\[4pt]
           & = \tau(q_{12}\, q_{23})
           \end{align*}
where here we use the fact that $\pi(q_{12}) = p_1^{-1} \, p_2$ so that the adjoint action
$\Ad_{p_1^{-1}\, p_2}$ on $\widehat{\Omega G}$ is conjugation by $q_{12}$.  
We also have
$$
\tau^2(q_{12}) = \tau\big(\Ad_{p_1}(q_{12}^{-1})\big) = \Ad_{p_1^{-1}}\big( (\Ad_{p_1}(q_{12}^{-1}) )^{-1}\big)
= \Ad_{p_1^{-1}}\big( \Ad_{p_1}(q_{12}) \big) = q_{12}
$$
and hence this is a Real structure.
\end{example}

\begin{example}[The tautological bundle gerbe]
\label{ex:tautological}
Let $M$ be two-connected. Assume that $\tau \colon M \to M$ has at
least one fixed point $m$ and $M$ admits an integral three-form $H$ satisfying
$\tau^*(H) = - H$; for example, these conditions are satisfied by the Lie group $M
= SU(n)$ with $\tau(g) = g^{-1}$.

Recall the construction of the \emph{tautological bundle gerbe} from
\cite{Mur}.  Let $Y = PM$ be the space
of paths based at $m$ with endpoint evaluation as  projection to $M$.
If $p_1, p_2\in Y$ have the same endpoint choose a surface $\Sigma\subset M$
spanning them, that is the boundary of $\Sigma$ is $p_1$ followed by
$p_2$ with the opposite orientation. Then the fibre of $P \to
Y^{[2]}$ consists of all triples $(p_1, p_2, \Sigma, z)$ modulo the equivalence relation 
$(p_1, p_2, \Sigma, z) \sim (p_1, p_2, \Sigma', z'\, )$  if
$\hol(\Sigma \cup \Sigma', H) \, z = z'$.
Here $\hol(S, H)$, for any closed
surface $S\subset M$, is the usual Wess--Zumino--Witten term defined by
\begin{equation}\label{eq:WZW}
\hol(S, H)  = \exp\Big( 2 \pi \,{\rm i}\, \int_{B(S)}\, H \Big)
\end{equation}
for a choice of three-manifold $B(S)$ whose boundary is $S$, which is well-defined because $H$ is an integral form. The bundle gerbe product is
$$
(p_1, p_2, \Sigma, z) \otimes  (p_2, p_3, \Sigma', z'\, )\longmapsto
(p_1, p_3, \Sigma\cup\Sigma', z\,z'\,) \ .
$$

We define a Real structure  $\tau$ by the fact that 
$$
 (p_1, p_2, \Sigma, z)   \longmapsto   ( \tau(p_1), \tau(p_2), \tau(\Sigma), \bar z)
 $$
descends through the equivalence relation to give a conjugate bundle
gerbe isomorphism $P \to \tau^{-1}(P)$.  We leave this easy check as an exercise
for the reader.  
\end{example}

\begin{example}[The lifting bundle gerbe]
A Lie group $G$ is {Real} if it possesses an involutive automorphism $\sigma\colon G\to G$. If $M$ is a Real space and $G$ is a Real Lie group then
a \emph{Real $G$-bundle} over $M$ is a principal $G$-bundle
$P$ with a Real structure $\tau_P$ that commutes with the involution
on $M$ and is compatible with the right $G$-action, that is $\tau_P(p\, g)=\tau_P(p)\, \sigma(g)$. A central extension
$$
1\ \longrightarrow \ U(1)\ \longrightarrow \ \widehat{G} \ \xrightarrow{ \ \pi \ }
\ G \ \longrightarrow \ 1 
$$
of a Real Lie group $G$ is called Real if $\widehat{G}$ is a Real Lie group whose Real structure descends to that on $G$ with respect to the conjugation involution on $U(1)$. We apply the \emph{lifting bundle gerbe} construction of \cite{Mur} to Real $G$-bundles. If $P \to M$ is a $G$-bundle then there is a map $\rho \colon P^{[2]} \to G$
defined by $p_2 = p_1 \, \rho(p_1, p_2)$; then $\rho(p_1, p_2)\, \rho(p_2, p_3) = \rho(p_1, p_3)$. The fibre $Q_{(p_1, p_2)}$ of the lifting bundle gerbe over $(p_1, p_2)$ is $\pi^{-1}(\rho(p_1, p_2)) 
\subset \widehat{G}$. Thus $Q = \rho^{-1}(\widehat{G})$ where we regard $\widehat{G} \to G$ 
as a $U(1)$-bundle; the group action on $\widehat{G}$ defines the bundle gerbe multiplication. If $P \to M$ is a Real $G$-bundle then 
$\rho(\tau_P(p_1), \tau_P(p_2)) = \sigma(\rho(p_1, p_2)) $ and the action of $\sigma$ on $G$ induces a Real structure on $Q_{(p_1, p_2)} \to Q_{(\tau_P(p_1), \tau_P(p_2))}$.
\label{ex:lifting}\end{example}

\subsection{The Real Dixmier--Douady class of a Real bundle gerbe}
\label{sec:RDD}
Let $M$ be a Real manifold and $(P, Y)$ a Real bundle gerbe over $M$. Just like ordinary bundle gerbes in Section \ref{sec:bg}, we will now show that a Real bundle gerbe gives rise to a cohomology class in $H^2(M; \ZZ_2, \overline{\cU(1)} )$.

Choose a  good Real open cover $\cU = \{U_\a\}_{\a \in I}$ as in Section \ref{sec:HR} and split $I$ as a disjoint union of $I_+$ and $I_-$ which are interchanged under $\tau$. For $\a \in I_+$ choose sections $s_\a \colon U_\a \to Y$ and define 
$s_{\bar \a} \colon U_{\bar \a} \to Y$ by $s_{\bar \a}= \tau s_\a \circ \tau$. Because $\tau$ is an involution we  have $s_\a = \tau s_{\bar \a} \circ\tau$ for all $\a \in I$.  Similarly split $I^2$ and for  $(\a, \b) \in I^2_+$ choose  $\sigma_{\a\b}(m) \in P_{(s_\a(m), s_\b(m))}$, and 
 define $\sigma_{\bar\a\bar\b} = \tau  \sigma_{\a\b} \circ\tau$.  Again it follows  that 
$\sigma_{\a\b}(m) \in P_{(s_\a(m), s_\b(m))}$ and
$\sigma_{\a\b} = \tau  \sigma_{\bar\a\bar\b}\circ \tau$ for all
$(\a,\b) \in I^2$, where we used $\tau_P^2 = \id_P$. 

Define $g_{\a\b\c}\colon U_{\a\b\c} \to U(1)$ by 
$$
\sigma_{\a\b} \, \sigma_{\b\c} = \sigma_{\a\c} \, g_{\a\b\c} \ .
$$
Then $g_{\bar\a\bar\b\bar\c}$ is given by
$$
\sigma_{\bar\a\bar\b} \, \sigma_{\bar\b\bar\c} = \sigma_{\bar\a\bar\c}
\, g_{\bar\a\bar\b\bar\c} \ .
$$
Applying $\tau$ to the first equation (and evaluating at $\tau(m)$) we get 
$$
(\tau \sigma_{\a\b}\circ\tau)\, (\tau \sigma_{\b\c} \circ\tau)= (\tau
\sigma_{\a\c} \circ\tau)\, (\bar{g}_{\a\b\c}\circ\tau ) \ .
$$
Hence $g_{\bar\a\bar\b\bar\c} = \bar{g}_{\a\b\c}\circ\tau$ so the
cocycle defined by $g_{\a\b\c}$ is Real. If we chose
different sections $\sigma'_{\a\b}$ satisfying $\sigma'_{\bar\a\bar\b}
= \tau  \sigma'_{\a\b} \circ\tau$, then $\sigma'_{\a\b} = \sigma_{\a\b}
\, h_{\a\b}$ for some $h_{\a\b} \colon U_{\a\b} \to U(1)$ satisfying $h_{\bar\a\bar\b} = \bar{h}_{\a\b} \circ\tau$, and thus $g_{\a\b\c}$ changes by a Real coboundary.

We call the class defined by $g_{\a\b\c}$ the \emph{Real Dixmier--Douady class} and denote it by 
$$
\DD_R(P) \ \in \   H^2(M; \ZZ_2, \overline{\cU(1)} )  \ . 
$$

This shows how a Real bundle gerbe yields a cohomology class in $H^2(M; \ZZ_2, \overline{\cU(1)} )$, which is natural with respect to pullbacks in the category of Real spaces. We
also immediately have $\DD_R(P^*)=-\DD_R(P)$ and

\begin{proposition}\label{P:additive}
The Real Dixmier--Douady class satisfies $\DD_R(P \otimes Q) = \DD_R(P) + \DD_R(Q)$.
\end{proposition}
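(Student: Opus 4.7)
The plan is to compute $\DD_R(P\otimes Q)$ directly from the explicit \v{C}ech cocycle construction of Section~\ref{sec:RDD} and verify that the cocycle is the product of those for $P$ and $Q$, which corresponds to addition in $H^2(M;\ZZ_2,\overline{\cU(1)})$.

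First I would fix a good Real open cover $\cU=\{U_\a\}_{\a\in I}$ of $M$ with free involution on $I$ split as $I=I_+\sqcup I_-$, and use this \emph{single} cover simultaneously for $(P,Y)$, $(Q,X)$ and $(P\otimes Q,Y\times_M X)$. For $\a\in I_+$ choose local sections $s^Y_\a\colon U_\a\to Y$ and $s^X_\a\colon U_\a\to X$ arbitrarily, and extend them to $\a\in I_-$ via $s^Y_{\bar\a}=\tau_Y s^Y_\a\circ\tau$ and $s^X_{\bar\a}=\tau_X s^X_\a\circ\tau$; as in Section~\ref{sec:RDD} these then satisfy the equivariance property for all $\a\in I$. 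Their pairing $s_\a:=(s^Y_\a,s^X_\a)\colon U_\a\to Y\times_M X$ provides equivariant sections of the surjective submersion used to build $P\otimes Q$.

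Next I would choose local sections $\sigma^P_{\a\b}$ of $P$ and $\sigma^Q_{\a\b}$ of $Q$ satisfying $\sigma^P_{\bar\a\bar\b}=\tau_P\sigma^P_{\a\b}\circ\tau$ and likewise for $Q$, by the same construction used in Section~\ref{sec:RDD}. Then $\sigma_{\a\b}:=\sigma^P_{\a\b}\otimes\sigma^Q_{\a\b}$ is a local section of $P\otimes Q$, and its equivariance under the Real structure $\tau_P\otimes\tau_Q$ on the tensor product is immediate: $\tau_P\sigma^P_{\a\b}\circ\tau\otimes\tau_Q\sigma^Q_{\a\b}\circ\tau=(\tau_P\otimes\tau_Q)\,\sigma_{\a\b}\circ\tau=\sigma_{\bar\a\bar\b}$.

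Writing $\sigma^P_{\a\b}\sigma^P_{\b\c}=\sigma^P_{\a\c}\,g^P_{\a\b\c}$ and $\sigma^Q_{\a\b}\sigma^Q_{\b\c}=\sigma^Q_{\a\c}\,g^Q_{\a\b\c}$, a direct computation shows
\begin{equation*}
\sigma_{\a\b}\,\sigma_{\b\c}
=\sigma^P_{\a\b}\sigma^P_{\b\c}\otimes \sigma^Q_{\a\b}\sigma^Q_{\b\c}
=\sigma_{\a\c}\cdot g^P_{\a\b\c}\,g^Q_{\a\b\c}\ ,
\end{equation*}
so the Real cocycle representing $\DD_R(P\otimes Q)$ is the pointwise product $g^P\cdot g^Q$. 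Since the group operation on $  H^2(M;\ZZ_2,\overline{\cU(1)})$ is induced by multiplication of $U(1)$-valued Real cocycles, this is exactly $\DD_R(P)+\DD_R(Q)$. The only point requiring a moment's care is that the tensor product Real structure on $P\otimes Q$ is compatible with the tensor product of sections, but this is built into the definition of $(P\otimes Q,\tau_P\otimes\tau_Q)$; no serious obstacle arises.
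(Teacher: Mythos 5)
Your proof is correct and is exactly the cocycle computation that the paper leaves implicit (the proposition is stated there without proof, as an immediate consequence of the construction in Section~\ref{sec:RDD}). Choosing equivariant sections of $Y$, $X$ and their pairing into $Y\times_M X$, taking Real-equivariant sections $\sigma^P_{\a\b}\otimes\sigma^Q_{\a\b}$, and reading off the product cocycle $g^P\,g^Q$ is precisely the intended argument, so no further comment is needed.
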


We would like to define an equivalence relation on Real bundle gerbes that means two Real bundle gerbes are equivalent precisely when they have the same Real class. Following the approach of \cite{MurSte2} for $U(1)$-bundle gerbes we first prove

\begin{proposition}\label{P:obstruction}
The Real Dixmier--Douady class of a Real bundle gerbe $P$ vanishes precisely when $P$ is Real trivial.
\end{proposition}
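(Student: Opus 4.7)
The plan is to prove both implications by working with the local cocycle data from Section~\ref{sec:RDD} relative to a good Real open cover $\{U_\alpha\}_{\alpha\in I}$ of $M$ with free involution $\alpha \mapsto \bar\alpha$ on $I = I_+ \sqcup I_-$.

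For the forward direction, suppose $P$ is Real trivial, so $P \simeq \d R$ as Real bundle gerbes for some Real line bundle $R \to Y$ with Real structure $\tau_R$. Mirroring the proof of Proposition~\ref{prop:classify-Real-line}, I would choose sections $r_\alpha$ of the pullback $s_\alpha^{-1}(R) \to U_\alpha$ for $\alpha \in I_+$, and extend to $\bar\alpha \in I_-$ by $r_{\bar\alpha} := \tau_R r_\alpha \circ \tau$. Setting $\sigma_{\alpha\beta} := r_\alpha \otimes r_\beta^{-1}$, viewed as a section of $(s_\alpha, s_\beta)^{-1}(\d R) = (s_\alpha, s_\beta)^{-1}(P)$, the fact that the isomorphism $P \simeq \d R$ is Real ensures the compatibility $\sigma_{\bar\alpha\bar\beta} = \tau_P\, \sigma_{\alpha\beta} \circ \tau$, while the bundle gerbe product of $\d R$ renders these sections multiplicative. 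Hence the associated cocycle $g_{\alpha\beta\gamma}$ is identically one, and $\DD_R(P) = 0$.

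For the backward direction, suppose $\DD_R(P) = 0$. Starting from Real-compatible sections $\sigma_{\alpha\beta}$ constructed as in Section~\ref{sec:RDD}, let $g_{\alpha\beta\gamma}$ be the associated Real $2$-cocycle representing $\DD_R(P)$. By hypothesis we may write $g = \d h$ for a Real $1$-cochain $h_{\alpha\beta}$ satisfying $h_{\bar\alpha\bar\beta} = \bar h_{\alpha\beta} \circ \tau$. Replacing $\sigma_{\alpha\beta}$ by $\sigma'_{\alpha\beta} := \sigma_{\alpha\beta}\, h_{\alpha\beta}^{-1}$ yields local sections of $P$ that remain Real-compatible and are now multiplicative. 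I would then apply the trivialisation construction of~\cite{MurSte2} to the underlying $U(1)$-bundle gerbe: the multiplicative sections $\sigma'_{\alpha\beta}$ produce descent data for a line bundle $R \to Y$ together with an isomorphism $P \simeq \d R$. To upgrade this to a Real trivialisation, the Real-compatibility of $\sigma'_{\alpha\beta}$ translates, via the same descent together with the construction of Proposition~\ref{prop:classify-Real-line}, into Real transition data for $R$, yielding an anti-linear lift $\tau_R \colon R \to R$ of $\tau_Y$ and making $R$ a Real line bundle; one then checks that the isomorphism $P \simeq \d R$ intertwines $\tau_P$ with $\d \tau_R$.

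The main obstacle I anticipate is verifying that the constructed $\tau_R$ satisfies $\tau_R^2 = \id_R$ rather than $-\id_R$, and that the locally defined Real structures on $R$ glue coherently. The former follows by tracing the involutive property $\tau_P^2 = \id_P$ through the construction together with the fact that $h_{\alpha\beta}$ is a \emph{Real} cochain (not merely invariant), so that Proposition~\ref{prop:classify-Real-line} applies to yield a genuine Real, rather than quaternionic-like, line bundle; the latter is ensured because the Real-compatibility of $g_{\alpha\beta\gamma}$ descends to $h_{\alpha\beta}$, so the cocycle condition and the Real structure are compatible on triple overlaps. Any ambiguity in the choice of $h$ is absorbed in the freedom to tensor $R$ with a Real line bundle pulled back from $M$, and so does not affect the existence of a Real trivialisation.
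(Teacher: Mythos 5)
Your proposal is correct and follows essentially the same route as the paper: in both directions you work with the Real \v{C}ech data of Section~\ref{sec:RDD}, multiplying the sections $\sigma_{\a\b}$ by the Real coboundary $h_{\a\b}^{-1}$ to obtain multiplicative, Real-compatible sections, then using them as descent data for a line bundle $R\to Y$ with $P\simeq\d R$, and conversely extracting a trivial cocycle from a Real trivialisation. The only difference is that the sign worry you raise about $\tau_R^2$ does not arise in the paper's argument, since there the Real structure on $R$ is induced directly from $\tau_P$ acting on the fibres $R^\a_y=P_{(y,\,s_\a\pi(y))}$ (so $\tau_R^2=\id$ is inherited from $\tau_P^2=\id_P$), rather than being reassembled from Real transition data as in Proposition~\ref{prop:classify-Real-line}.
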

\begin{proof}
First suppose that  $\DD_R(P)$ is trivial, so that if $g_{\a\b\c}$ is
a representative for the Real Dixmier--Douady class, chosen relative
to sections $\sigma_{\a\b}$ as before, then $g_{\a\b\c} = h_{\a\b} \,
\bar{h}_{\a\c} \, h_{\b\c}$ where $h_{\a\b}$ satisfies $h_{\bar \a\bar\b} = \bar{h}_{\a\b} \circ\tau$. We have
$$
\sigma_{\a\b} \, \sigma_{\b\c} = \sigma_{\a\c} \, h_{\a\b} \,
\bar{h}_{\a\c} \, h_{\b\c} \ ,
$$
and hence
$$
\sigma_{\a\b}\, \bar{h}_{\a\b}\, \sigma_{\b\c}\, \bar{h}_{\b\c} =
\sigma_{\a\c} \, \bar{h}_{\a\c} \ .
$$
Therefore we may define sections $\hat{\sigma}_{\a\b} =
\sigma_{\a\b}\, \bar{h}_{\a\b}$ which satisfy the cocycle condition. They further satisfy the condition $\hat{\sigma}_{\bar\a\bar\b} = \tau \hat{\sigma}_{\a\b}\circ\tau$ since
$$
\hat{\sigma}_{\bar\a\bar\b}	= \sigma_{\bar\a\bar\b}\, \bar{h}_{\bar\a\bar\b}
					= (\tau
                                        {\sigma}_{\a\b}\circ\tau )\, ({h}_{\a\b}\circ \tau)
					= \tau(\sigma_{\a\b}\, \bar{h}_{\a\b} \circ\tau)
					= \tau
                                        \hat{\sigma}_{\a\b}\circ\tau \
                                        .
$$
Define $R^\a \to \pi^{-1}(U_\a)$ by $R^\a_y = P_{(y, s_\a
  \pi(y))}$. Then $\coprod_{\a\in I}\, R^\a$ defines a bundle over
$\coprod_{\a\in I}\, \pi^{-1}(U_\a)$ and $\hat{\sigma}_{\a\b}(\pi(y))
\in P_{(s_\a \pi(y), s_\b \pi(y))} = P_{(y, s_\a \pi(y))}^* \otimes
P_{(y, s_\b \pi (y))}$ give descent data for $\coprod_{\a\in I}\,
R^\a$. This determines a bundle $R \to Y$ such that $P = \d R$. Note
that $R$ is Real since $\tau (p) \in P^*_{(\tau (y) , \tau s_\a \pi
  (y) )} = P^*_{(\tau (y) ,  s_{\bar{\a}} \pi (\tau( y)) )}$ for $p \in P_{(y, s_\a \pi (y))}$. Thus $P$ is Real trivial.

Suppose instead that $P = \d R$, where $R \to Y$ is a Real bundle with Real structure $\tau_R \colon R \to R^*$; then $(s_\a, s_\b)^{-1}(P) = s_\a^{-1}(R)^* \otimes s_\b^{-1}(R)$. Choose sections $h_\a \colon U_\a \to s_\a^{-1}(R)$ and define $h_{\bar\a} = \tau_R h_\a\circ \tau$ and sections $\sigma_{\a\b}$ of $(s_\a, s_\b)^{-1}(P)$ by $\sigma_{\a\b} = h^*_\a\, h_\b$. Since $P  = \d R$ as Real bundles these sections satisfy the Reality condition $\sigma_{\bar\a\bar\b} = \tau \sigma_{\a\b}\circ \tau$. It follows that $g_{\a\b\c} = 1$, and hence the Real Dixmier--Douady class of $P$ is trivial.
\end{proof}

We say that two Real bundle gerbes $(P, Y)$ and $(Q, X)$ are
\emph{Real stably isomorphic} if $Q \otimes P^*$ is Real trivial. A
\emph{Real stable isomorphism} $P\to Q$ is a Real trivialisation of $Q \otimes P^*$. Propositions~\ref{P:additive} and \ref{P:obstruction} imply that $P$ and $Q$ are Real stably isomorphic if and only if $\DD_R(P) = \DD_R(Q)$, and we have 

\begin{proposition} \label{P:Real classification}
The Real Dixmier--Douady class induces a bijection between Real bundle gerbes modulo Real stable isomorphism and $  H^2(M; \ZZ_2, \overline{\cU(1)})$.
\end{proposition}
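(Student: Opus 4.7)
The plan is to split the statement into two parts: well-definedness combined with injectivity, and surjectivity. The first part is essentially formal given the two preceding propositions, while the second part requires an explicit construction of a Real bundle gerbe from a Real \v{C}ech cocycle.

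For well-definedness and injectivity, by Proposition \ref{P:additive} one has $\DD_R(Q \otimes P^*) = \DD_R(Q) - \DD_R(P)$, so $\DD_R(P) = \DD_R(Q)$ if and only if $\DD_R(Q \otimes P^*) = 0$. By Proposition \ref{P:obstruction}, this is equivalent to $Q \otimes P^*$ being Real trivial, which is exactly the definition of Real stable isomorphism between $P$ and $Q$. In particular, Real stably isomorphic Real bundle gerbes have the same Real Dixmier--Douady class, so the map descends to equivalence classes, and it is injective there. This step mirrors verbatim the argument of \cite{MurSte2} for ordinary bundle gerbes, and the remark just before the proposition already essentially states the claim.

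For surjectivity, given a Real good cover $\cU = \{U_\a\}_{\a\in I}$ as constructed in Section~\ref{sec:HR} with $I = I_+ \sqcup I_-$ interchanged by the involution, and a Real cocycle $g_{\a\b\c}$ representing a prescribed class in $H^2(M;\ZZ_2,\overline{\cU(1)})$, I would construct a Real bundle gerbe directly. Take $Y = \coprod_{\a \in I}\, U_\a$ with the obvious projection $\pi \colon Y \to M$ and define $\tau_Y$ to map $y \in U_\a$ to $\tau(y) \in U_{\bar\a}$, which is an involution covering $\tau \colon M \to M$. Then $Y^{[2]} = \coprod_{(\a,\b)}\, U_{\a\b}$. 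Define a trivial $U(1)$-bundle $P_{\a\b} = U_{\a\b} \times U(1)$ over each $U_{\a\b}$, with bundle gerbe multiplication $P_{\a\b}\otimes P_{\b\c} \to P_{\a\c}$ given by $(m,z_1)\otimes (m,z_2) \mapsto (m, z_1 \, z_2\, g_{\a\b\c}(m))$; associativity follows from the cocycle identity on $g_{\a\b\c}$. Define a conjugate involution $\tau_P \colon P_{\a\b} \to P_{\bar\a\bar\b}$ by $(m,z) \mapsto (\tau(m), \bar z)$. The Reality condition $g_{\bar\a\bar\b\bar\c} = \bar{g}_{\a\b\c}\circ \tau$ is precisely what is needed for $\tau_P$ to commute with the bundle gerbe multiplication, and $\tau_P^2 = \id$ is immediate. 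A straightforward computation using the canonical local sections $\sigma_{\a\b}(m) = (m,1) \in P_{\a\b}$ recovers $g_{\a\b\c}$ as a representative for $\DD_R(P)$, so that the constructed Real bundle gerbe realises the prescribed class.

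The main obstacle is checking that $\tau_P$ is a bundle gerbe isomorphism, that is, that it commutes with the gerbe product; this reduces to unwinding the definitions and directly comparing the two sides using the Reality condition on $g_{\a\b\c}$ and the conjugate linearity of $\tau_P$. Nothing deeper than bookkeeping is required, so after handling the combinatorics of the indexing set $I$ once, the proof proceeds as in the ordinary $U(1)$ case.
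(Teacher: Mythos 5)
Your proposal is correct and follows essentially the same route as the paper: well-definedness and injectivity are deduced from Propositions \ref{P:additive} and \ref{P:obstruction} exactly as in the discussion preceding the statement, and surjectivity is established by the same nerve construction $Y=\coprod_{\a\in I} U_\a$, $P=\coprod_{\a,\b} U_{\a\b}\times U(1)$ with multiplication twisted by $g_{\a\b\c}$ and Real structure $(m,z)\mapsto(\tau(m),\bar z)$ on $P_{\a\b}\to P_{\bar\a\bar\b}$, with the Reality condition on the cocycle ensuring compatibility with the product. No gaps.
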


\begin{proof}
This follows from the discussion above and all that remains is to show that every class in $  H^2(M; \ZZ_2, \overline{\cU(1)})$ gives rise to a Real bundle gerbe. We use the same approach as \cite{Mur} for $U(1)$-bundle gerbes.
Suppose $[g_{\a\b\c}] \in   H^2(M; \ZZ_2, \overline{\cU(1)} ) $. Let $Y := \coprod_{\a\in
  I}\,  U_{\a}$ be the nerve of the open cover $\cU$, and let $P \to Y^{[2]}$ be given by $\coprod_{\a,\b\in I}\, U_{\a\b} \times U(1)$. The bundle gerbe multiplication on $P$ is given by $(m, \a, \b, z) \otimes (m, \b, \c, w) = (m, \a , \c, z\, w \, g_{\a\b\c}(m))$ and the Real structure is $(m, \a, \b, z) \mapsto (\tau(m), \bar\a, \bar\b, \bar z)$. It is straightforward to show that the condition of being a Real cocycle implies that the Real structure
  commutes with the bundle gerbe product.
\end{proof}

Let $\cH$ be a complex separable Hilbert space  with a conjugation $v \mapsto \bar v$. This induces a complex anti-linear
involution $\sigma$  on the unitary operators $U(\cH)$ by $\sigma(g)(v) = \overline{ g (\bar v) }$ which also descends to the projective unitary group $PU(\cH)$. Then $\sigma$ is a group homomorphism. 
For our discussion of twisted $\KR$-theory later on
we need
\begin{proposition}
\label{prop:PU(H)}
There is a bijection between isomorphism classes of Real $PU(\cH)$-bundles and Real stable isomorphism classes of Real bundle gerbes on $M$.
\end{proposition}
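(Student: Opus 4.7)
The plan is to establish the bijection by identifying both sides with $H^2(M; \ZZ_2, \overline{\cU(1)})$ in a natural way. For Real bundle gerbes this is already accomplished by Proposition~\ref{P:Real classification} via the Real Dixmier--Douady class. The task is therefore to show that isomorphism classes of Real $PU(\cH)$-bundles are also classified by the same group, and that the bijection is realised geometrically by the Real lifting bundle gerbe of Example~\ref{ex:lifting}.

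First I would set up the non-abelian classification. For any Real Lie group $G$, a \v{C}ech argument with a good Real cover (exactly of the type already used in Sections~\ref{sec:HR} and~\ref{sec:RDD}) shows that isomorphism classes of Real $G$-bundles on $M$ are in bijection with $H^1(M; \ZZ_2, \overline{G})$, where the Real structure is induced by the anti-involution $\sigma$ on $G$ and $\tau$ on $M$. Local sections of a Real $G$-bundle over a Real good cover give cocycles $g_{\alpha\beta}$ satisfying $g_{\bar\alpha\bar\beta} = \sigma(g_{\alpha\beta})\circ\tau$, and conversely such cocycles clutch together to a Real $G$-bundle. Applying this to $G=PU(\cH)$ with the involution induced by the conjugation on $\cH$ gives the desired classification of Real $PU(\cH)$-bundles by $H^1(M; \ZZ_2, \overline{PU(\cH)})$.

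Next I would use the Real central extension
$$
1 \ \longrightarrow \ U(1) \ \longrightarrow \ U(\cH) \ \longrightarrow \ PU(\cH) \ \longrightarrow \ 1
$$
to produce a short exact sequence of sheaves of Real groups $1\to\overline{\cU(1)}\to\overline{U(\cH)}\to\overline{PU(\cH)}\to 1$, and the associated (low-degree, non-abelian) six-term exact sequence in Grothendieck's equivariant cohomology. The connecting homomorphism $H^1(M; \ZZ_2,\overline{PU(\cH)})\to H^2(M; \ZZ_2,\overline{\cU(1)})$ is, by construction, exactly the Real Dixmier--Douady class of the lifting bundle gerbe of Example~\ref{ex:lifting}. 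To conclude that this connecting map is a bijection it suffices to prove that $H^p(M; \ZZ_2, \overline{U(\cH)})=0$ for $p=1,2$. This is the main obstacle: in the non-equivariant setting it follows from Kuiper's theorem that $U(\cH)$ is contractible, so that $\overline{U(\cH)}$ is a soft/flabby sheaf of groups. In the Real setting one needs the analogous equivariant statement that $\overline{U(\cH)}$-valued Real cochains can be trivialised locally and patched together using a $\tau$-invariant partition of unity. I expect this can be handled by the equivariant version of Kuiper's argument, using that the fixed-point subgroup under the conjugation is also contractible and that one may average over the $\ZZ_2$-action.

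Finally, I would check that the composite bijection $\{\text{Real }PU(\cH)\text{-bundles}\}/{\simeq}\longrightarrow H^2(M; \ZZ_2,\overline{\cU(1)})\longleftarrow\{\text{Real bundle gerbes}\}/{\simeq_{\mathrm{st}}}$ is induced by the lifting bundle gerbe construction of Example~\ref{ex:lifting}. Given a Real $PU(\cH)$-bundle $P\to M$ and a Real good cover, the associated Čech $PU(\cH)$-cocycle $g_{\alpha\beta}$ lifts locally to $\hat g_{\alpha\beta}\colon U_{\alpha\beta}\to U(\cH)$ with $\hat g_{\bar\alpha\bar\beta}=\sigma(\hat g_{\alpha\beta})\circ\tau$; the failure of the cocycle condition for $\hat g$ defines a Real $U(1)$-valued 2-cocycle whose class agrees with both the connecting homomorphism applied to $[P]$ and with $\DD_R$ of the Real lifting bundle gerbe. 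Together with naturality this closes the argument and establishes the bijection.
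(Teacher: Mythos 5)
Your route is sound and it differs from the paper's in a meaningful way. The paper's proof is short because it outsources the hard half: it cites \cite{Fok2015} for the fact that Real $PU(\cH)$-bundles are classified up to isomorphism by $H^1(M;\ZZ_2,PU(\cH))\simeq H^2(M;\ZZ_2,\overline{\cU(1)})$, then merely observes that the Real Dixmier--Douady class of a Real $PU(\cH)$-bundle coincides with that of its lifting bundle gerbe (Example~\ref{ex:lifting}) and invokes Proposition~\ref{P:Real classification} --- exactly the two ingredients you also use for the bundle gerbe side and for matching the two classifications. What you do differently is to prove the $PU(\cH)$-side classification directly, via the Real central extension $1\to U(1)\to U(\cH)\to PU(\cH)\to 1$ and the connecting map in equivariant \v{C}ech cohomology, i.e.\ an equivariant Dixmier--Douady argument. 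That buys self-containedness and makes transparent why the connecting homomorphism is literally $\DD_R$ of the lifting bundle gerbe (the paper's ``straightforward to check'' step), at the cost of having to supply the equivariant analogue of the classical vanishing theorem for $\underline{U(\cH)}$.

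That vanishing step is precisely where your argument is not yet complete, and you should be aware that it is not a side remark: it \emph{is} the content of the result the paper imports from \cite{Fok2015}. Two points to tighten. First, ``$H^2(M;\ZZ_2,\overline{U(\cH)})=0$'' is not literally meaningful for a non-abelian sheaf; what you actually need is (i) triviality of Real $U(\cH)$-bundles (vanishing of the non-abelian $H^1$, including its cocycle-twisted versions, which gives injectivity of the connecting map) and (ii) surjectivity of the connecting map, both of which follow from equivariant softness of the sheaf of Real $U(\cH)$-valued functions rather than from a degree-two group. Second, the equivariant Kuiper input you sketch is correct and available: the fixed-point subgroup of the conjugation involution on $U(\cH)$ is the orthogonal group of the underlying real Hilbert space, which is contractible by Kuiper's theorem in the real case, and together with contractibility of $U(\cH)$ itself this yields $\ZZ_2$-equivariant contractibility; combined with $\tau$-invariant partitions of unity (averaged as in Section~\ref{sec:HR}) this gives the softness statement. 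So either carry out that argument or cite it (it is in \cite{Fok2015}, and in \cite{Mou} in the groupoid setting); as written, asserting ``I expect this can be handled'' leaves the pivotal lemma unproved, whereas the rest of your outline, including the identification of the connecting cocycle $\hat g_{\alpha\beta}\hat g_{\beta\gamma}\hat g_{\alpha\gamma}^{-1}$ with the Real lifting bundle gerbe cocycle, matches what the paper needs and is fine.
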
 
\begin{proof}
We apply the Real lifting bundle gerbe construction of Example~\ref{ex:lifting} to $PU(\cH)$-bundles. 
In \cite{Fok2015} it is shown that Real $PU(\cH)$-bundles are classified up to isomorphism by their Real Dixmier--Douady classes in $  H^1(M; \ZZ_2,  PU(\cH)) \simeq   H^2(M; \ZZ_2, \overline{\cU(1)} ) $.  It is straightforward to check that the Real Dixmier--Douady class of a Real $PU(\cH)$-bundle
 is the same as that of its lifting bundle gerbe and the result follows from Proposition \ref{P:Real classification}.
\end{proof}

\subsection{Equivariant line bundles and Real structures}
Consider the following part of the long exact sequence \eqref{eq:beginlongsequence} from Section \ref{sec:HR} 
\begin{equation}
\label{eq:shortlong}
\xymatrix@R=2ex{ 
  &   \qquad\qquad\qquad \cdots \ar[r] & \   H^1(M, \cU(1)) \ \ar[r]^{1 \times \tau^*} & \   H^1(M; \ZZ_2, \cU(1)) \ \ar`r[d]`[lll]`[ddlll] `[ddll][ddll]&\\
&&&&\\
 & \   H^2(M; \ZZ_2, \overline{\cU(1)}) \ \ar[r] & \   H^2(M, \cU(1)) \ \ar[r]^{1 \times \tau^*} & \   H^2(M; \ZZ_2, \cU(1)) \ \ar[r]  & \ \cdots \ .
}
\end{equation}
We showed in the case of line bundles that $H^1(M; \ZZ_2, \cU(1))$ classified isomorphism classes of equivariant line bundles. In a similar fashion 
it is possible to show that $H^2(M; \ZZ_2, \cU(1))$ classifies equivariant bundle gerbes up to equivariant stable isomorphism.  As we do not need 
this notion for our applications we will not spell out the details other than to note that it amounts to simply removing the conjugate
requirement in Definition \ref{def:Rbg} and appropriately modifying the definitions and proofs in Section \ref{sec:RDD}. 

In fact it is 
possible as in \cite{GSW} to cover both the Real and equivariant cases at once.   Just as in the line 
bundle case $P \otimes \tau^*(P)$ has a natural equivariant structure and the map ${1 \times \tau^*}$ is induced by the map 
$P \mapsto P \otimes \tau^*(P)$. 
If $P$ is a bundle gerbe with Dixmier-Douady  class in $H^2(M, \cU(1))$ in the kernel of $1 \times \tau^*$ then it is stably isomorphic, 
as a bundle gerbe, to a bundle gerbe with  a Real structure. The latter we have seen are classified  up to Real stable isomorphism 
by their Real Dixmier-Douady class in $H^2(M; \ZZ_2, \overline{\cU(1)})$.  Of course this will generally not be unique. In fact for a given class in the kernel of $1 \times \tau^*$ the set of inequivalent Real bundle gerbes up to Real stable isomorphism is a torsor over $  H^1(M;\ZZ_2, \cU(1))/  H^1( M, \cU(1))$. We have already described the map $  H^1( M, \cU(1)) \to   
H^1(M; \ZZ_2,  \cU(1))$. Similarly, if $P$ is a bundle gerbe with
Dixmier--Douady class in $  H^2( M, \cU(1))$ the map $  H^2(M,
U(1)) \to   H^2(M; \ZZ_2,  \cU(1))$ is induced by the map that sends $P$ to the equivariant
bundle gerbe $P \otimes \tau^{-1}(P)$.   The map $  H^2(M; \ZZ_2,
\overline{\cU(1)}) \to   H^2( M, \cU(1))$ is induced by the forgetful map
sending a Real bundle gerbe to the underlying $U(1)$-bundle
gerbe. Of course at the level of cohomology it sends the Real stable isomorphism class of a Real bundle gerbe to the stable isomorphism class of the underlying $U(1)$-bundle gerbe.  Therefore, to understand this sequence geometrically it remains
to show how an equivariant line bundle gives rise to a Real bundle gerbe.

Suppose that $P \to M$ is an equivariant bundle so that $\tau^{-1}(P) = P$. Let $Y = M \times \ZZ_2$ with the involution $\tau \colon (m , x)  \mapsto (\tau(m), x + 1)$ covering $\tau$ on $M$. Let $\pi_M \colon Y \to M$ be the projection. Then $Y$ is two copies of $M$ labelled by $0$ and $1$ so that any bundle $Q \to Y$ is a pair of bundles $(Q_0, Q_1)$
on $M \times \{0\}$ and $M \times \{1\}$, respectively. For such a
bundle $Q$ we have $\tau^{-1}(Q) = (\tau^{-1}(Q_1), \tau^{-1}(Q_0))$, and if $L \to
M$ is a line bundle then 
$\pi_M^{-1}(L) = (L, L) \to Y$.   Consider the bundle $  (U(1)_M, P) \to Y$ where $U(1)_M = M \times U(1)$ is the trivial $U(1)$-bundle on $M$.
Then $ (U(1)_M, P)$ is not Real, since 
\begin{align*}
\tau^{-1}(( U(1)_M, P)^*) &= \tau^{-1}( U(1)_M, P^*) \\[4pt] &=(\tau^{-1}(P^*), U(1)_M)
  \\[4pt] &= (P^*, U(1)_M) \\[4pt] &= (P^*, P^*) \otimes ( U(1)_M, P) 
\end{align*}
so that $\tau^{-1} ((U(1)_M, P))^* \otimes \pi_M^{-1}(P) = (U(1)_M, P)$. Hence $\d \tau^{-1} (U(1)_M, P)^* = \d  (U(1)_M, P)$ because $\delta(\pi^{-1}_M(P))$ is canonically trivial. It is straightforward to check that the Real structure this defines satisfies $\tau^2 =1 $.  Hence $\delta(U(1)_M, P)$
is a Real bundle gerbe. The coboundary map $  H^1(M; \ZZ_2, \cU(1)) \to   H^2(M; \ZZ_2, \overline{\cU(1)})$ is then induced by  $ P \mapsto \d (U(1)_M, P)$. Then $\d (U(1)_M, P)$ is trivial as a bundle gerbe, so it is in the kernel of the forgetful map $  H^2(M; \ZZ_2, \overline{\cU(1)}) \to   H^2( M, \cU(1))$, but it is not in general trivial as a Real bundle gerbe. In fact $\d (U(1)_M, P)$ is Real trivial if and only if $P = K \otimes \tau^{-1}(K)$ for some bundle $K$ on $M$. To see this note first that 
any Real bundle on $Y$ has the form $(K^*, \tau^{-1}(K))$, so if  $\d (U(1)_M, P)$ is Real trivial then $\d (U(1)_M, P) = \d (K^*, \tau^{-1}(K))$. Since
 $ (U(1)_M, P) $ and $ (K^*, \tau^{-1}(K))$ are two trivialisations of the same bundle gerbe, they differ by a line bundle $L$ on $M$ and thus
 $$
(U(1)_M, P) =  (K^*, \tau^{-1}(K)) \otimes   (L, L) = (K^* \otimes L,   \tau^{-1}(K) \otimes L )
$$
so that $P = K \otimes \tau^{-1}(K)$. Conversely if $P = K\otimes \tau^{-1}(K)$ for some bundle $K \to M$ then
$$
( U(1)_M, P)  = (U(1)_M, K \otimes \tau^{-1}(K)) =   (K^*, \tau^{-1}(K)) \otimes
(K, K) = (K^*, \tau^{-1}(K))\otimes \pi_M^{-1}(K) \ ,
$$
hence $\d  ( U(1)_M, P) = \d (K^*, \tau^{-1}(K))$ and thus $\d ( U(1)_M, P)$ is Real trivial.

\begin{example} \label{ex:gerbept}
Let $M=\pt$, so that $  H^1(\pt, \cU(1))=  H^2(\pt, \cU(1))=0$. The long exact
sequence \eqref{eq:beginlongsequence} gives 
$$
0 \ \longrightarrow \   H^1(\pt; \ZZ_2, \cU(1)) \ \longrightarrow \
  H^2(\pt; \ZZ_2, \overline{\cU(1)}) \ \longrightarrow \ 0
$$
and hence $  H^2(\pt; \ZZ_2, \overline{\cU(1)})=H^1(\pt; \ZZ_2, \cU(1)) = \ZZ_2$ by
Example~\ref{ex:pt}. Recall that the trivial line bundle has two possible
lifts $\pm\, 1$ of the trivial involution on a point. Then the construction above gives rise to two Real bundle gerbes over $\pt$ which are not Real stably isomorphic to each other. 

Consider the Real open cover $U_0 =\{{\rm   pt}\}=U_{\bar 0}$, and
define a Real two-cochain $g$ by taking $g_{0\bar 0 0}=-1=g_{\bar 0 0\bar 0}$ and
$g_{\a\b\c}=1$ otherwise. Then $\delta(g)=1$ so $g$ is a Real cocycle. Suppose
$\sigma_{\a\b}$ is Real so that $\sigma_{00}=\bar\sigma_{\bar0\bar0}$ and
$\sigma_{0\bar0}=\bar\sigma_{\bar0 0}$, and set
$g_{\a\b\c}=\sigma_{\a\b}\,\sigma_{\a\c}^{-1}\, \sigma_{\b\c}$. Then
we find $\sigma_{00}=\sigma_{\bar0\bar0}=1$ and $|\sigma_{0\bar0}|^2=-1$, and so
$g$ is a non-trivial cocycle in $  H^2({\rm pt};\ZZ_2, \overline{\cU(1)})=\ZZ_2$
which gives rise to a non-trivial Real bundle gerbe over $\pt$ by the construction in the proof of Proposition \ref{P:Real classification} (cf.\ also~\cite[Example~3.27]{Fok2015}).
\end{example}

\begin{example}
Let $M=S^2$. 
Since $  H^2(S^2, \cU(1)) = H^3(S^2, \ZZ) = 0$, it follows from
Example~\ref{ex:S2} and the long exact sequence \eqref{eq:beginlongsequence} that if $\tau $ is homotopic
to the identity then $  H^2(S^2; \ZZ_2, \overline{\cU(1)}) \simeq \ZZ_2 \oplus \ZZ_2$, if $\tau $ is 
equivalent to the antipodal map then $  H^2(S^2; \ZZ_2, \overline{\cU(1)}) = 0$,
while if $\tau$ is equivalent to the reflection about the equator then $  H^2(S^2; \ZZ_2, \overline{\cU(1)}) \simeq \ZZ_2$.
\label{ex:RealH2S2}\end{example}

\begin{example}
Let $G$ be a compact, connected, simply-connected, simple Lie group,
for example $G=SU(n)$. Then $G$ is two-connected so $H^1(G, \cU(1)) = 0$. We have seen in Example \ref{ex:2connected-line-bundle}  that $H^1(G; \ZZ_2, \cU(1)) = \ZZ_2$ so that the long exact sequence \eqref{eq:beginlongsequence} reduces in part to
$$
0 \ \longrightarrow \ \ZZ_2 \ \longrightarrow \ H^2(G; \ZZ_2 , \overline{\cU(1)})
\ \longrightarrow \ H^2(G, \cU(1)) = \ZZ \ .
$$
The final map is surjective since, as shown in Example
\ref{ex:basic}, the basic bundle gerbe on $G$ admits a Real structure. 
Hence  $H^2(G; \ZZ_2 , \overline{\cU(1)}) = \ZZ_2 \oplus \ZZ$, and hence each stable isomorphism class of bundle 
gerbe on $G$ arises from exactly two  Real stable isomorphism classes of Real bundle gerbes over $G$. This result also follows as a special case of~\cite[Proposition~4.2]{Fok2015}.
\label{ex:RealH2G}\end{example}

\subsection{Jandl gerbes}\label{sec:Jandl}
A prior and alternative approach to the notion of Real bundle gerbes is that of {\em Jandl gerbes}  introduced  in \cite{SchSch}.  These can be regarded
as replacing the isomorphism between $\tau^{-1}(P)^*$ and $P$ with a stable isomorphism.  As we will not need this notion for our results, we restrict
ourselves here to some general remarks indicating the connection with Real bundle gerbes.  Note that the discussion below refers to Jandl gerbes without connective structure.

Recall from Remark \ref{remark:real-bundle} that a Real structure on a line bundle $L$ can be understood as an invariant section of $L \otimes \tau^{-1}(L)$. Applying this idea to bundle gerbes, we could have defined a Real bundle gerbe to be a bundle gerbe $P$ with an equivariant trivialisation 
of $P \otimes \tau^*(P)$.   It is not difficult to see 
that  $P \otimes \tau^{-1}(P)$ being equivariantly trivial is equivalent to $P$ being a Jandl gerbe and that a choice of an equivariant trivialisation of $P \otimes \tau^{-1}(P)$ is a choice  of a Jandl structure for $P$.   

Notice  that given this,  the Dixmier-Douady class of a  bundle gerbe $P$  is zero under the second map ${1 \times \tau^*}$ below
$$
 H^2(M; \ZZ_2, \overline{\cU(1)}) \longrightarrow  H^2(M, \cU(1))
 \xrightarrow{ \ 1 \times \tau^* \ }  H^2(M; \ZZ_2, \cU(1)) 
$$
if and only if it admits a Jandl structure. This is in contrast to the Real case where the vanishing 
of $({1 \times \tau^*})( DD(P))$ only implies that $P$ is stably isomorphic (as a $U(1)$-bundle gerbe) to a bundle gerbe that has a Real structure.

\section{Real bundle gerbe modules and twisted $\KR$-theory}
\label{sec:KR}

In this section we introduce the Real version of the notion of bundle gerbe module
which was defined in Section~\ref{sec:bg}, and use it to provide a
geometric picture of twisted $\KR$-theory; an analogous description
also appears in \cite{Mou} in terms of Real twisted vector bundles.

\subsection{Bundle gerbe $\KR$-theory}

We begin with some preliminary remarks that will be tacitly used below. Let $V$ be a hermitian vector space and $R$ a $U(1)$-torsor. Define an equivalence relation on $R \times V$ by $(r\,z, v) \sim (r, v\,z)$ for any $z \in U(1)$. 
Denote the set of equivalence classes $[r, v]$ by $R \otimes V$ and make it into a vector space by defining $[r, v] + [r, w] = [r, v+w]$ and $\lambda\, [r, v ] = [r,  \lambda\, v] $ for $\lambda \in \CC$. Finally define an inner product by $\langle [r, v] , [r, w] \rangle = \langle v, w \rangle $. For any $r \in R$ the map 
$V \to R \otimes V$ defined by $v \mapsto [r, v]$ is a hermitian linear isomorphism. There is a natural isomorphism 
$$
\overline{ R \otimes V } \simeq  {R}^* \otimes \overline{V} 
$$
induced by the obvious identity on sets. If $L$ is a one-dimensional hermitian vector space and $R$ is the set of vectors
in $L$ of length one, then $R \otimes V \simeq L \otimes V$.

\begin{definition}\label{realgerbemodule}
Let $M$ be a Real manifold and $(P, Y)$ a Real bundle gerbe on $M$. Let $E$ be a vector bundle on $Y$ and $\tau_E \colon E \to E$ a conjugate linear involution of fibres commuting with 
the Real structure on $Y$. 
 We say that $E$ is a {\em Real bundle gerbe module} if it is a bundle gerbe module and the Real structure
 commutes with the bundle gerbe action on $E$ in the sense that for every pair $(y_1, y_2) \in Y^{[2]}$ there is a commutative
 diagram
$$ 
\xymatrix{ 
P_{(y_1, y_2)} \otimes E_{y_2} \ \ar[d]_{\tau \otimes \tau_E} \ar[r] & \ E_{y_1} \ar[d]^-{\tau_E} \\ 
 P_{(\tau(y_1), \tau(y_2))}^* \otimes \overline E_{\tau(y_2)} \ \ar[r] & \ \overline E_{\tau(y_1)} } 
$$ 
\end{definition}

We say that two Real bundle gerbe modules are \emph{isomorphic} if
they are isomorphic as Real vector bundles and the isomorphism
preserves the action of the Real bundle gerbe $(P, Y)$. Denote by
$\Rmod(P, Y)$ the set of all isomorphism classes of Real bundle gerbe
modules. It is straightforward to check that it is a commutative
semi-group under direct sum.

\begin{remark}
\label{remark:torsion}  A Real trivialisation of $(P, Y)$ is precisely a rank one Real bundle
gerbe module. As $P^{\otimes r}$ acts on the top
exterior power $\bigwedge^r E$, where $r={\rm rank}(E)$, it follows
that if the Real bundle gerbe $(P, Y)$ admits a finite-dimensional
bundle gerbe module of rank $r$, then the Real  Dixmier--Douady
class $\DD_R(P)$ is a torsion element in $  H^2(M; \ZZ_2,
\overline{\cU(1)} ) $ of order dividing $r$. 
\end{remark}

Denoting by $\RVect(M)$
the semi-group of Real vector bundles on $M$ in the sense of Atiyah~\cite{Ati},  we have

\begin{proposition} \label{modules}
\begin{enumerate}\thmenumhspace
\item If $(P, Y)$ and $(Q, X)$ are Real bundle gerbes, then a Real
  stable isomorphism $(P, Y)\to (Q, X)$ induces an isomorphism
of semi-groups $\Rmod(P, Y) \to \Rmod(Q, X)$. 
\item If $(P, Y)$ is a trivial Real bundle gerbe, then a choice of
  Real trivialisation defines an isomorphism of semi-groups $\Rmod(P, Y) \to \RVect(M)$.
\end{enumerate}
\end{proposition}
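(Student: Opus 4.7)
The plan is to lift the non-Real argument of~\cite{BouCarMat} for bundle gerbe modules to the Real setting by tracking Real structures through the descent constructions and verifying that they intertwine with the bundle gerbe actions.

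For part~(1), we would realise the Real stable isomorphism as a Real line bundle $R\to Y\times_M X$ with Real structure, together with an isomorphism $Q\otimes P^*\simeq \delta R$ of Real bundle gerbes; this packages into coherent fibre-wise isomorphisms
$$
P_{(y_1,y_2)}\otimes R_{(y_2,x_2)} \simeq R_{(y_1,x_1)}\otimes Q_{(x_1,x_2)}
$$
that intertwine the Real structures on $P$, $Q$ and $R$. Given a Real bundle gerbe module $(E,\tau_E)$ for $(P,Y)$, we form the bundle $R^*\otimes\pi_Y^{-1}(E)\to Y\times_M X$. Specialising the above isomorphism to $x_1=x_2=x$ and combining with the $P$-action on $E$ furnishes descent data that assembles $R^*\otimes\pi_Y^{-1}(E)$ into a bundle $F\to X$. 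A $Q$-action on $F$ is inherited from the stable isomorphism, while the Real structures on $R$ and $E$ combine to a conjugate-linear involution compatible with the descent, giving $F$ the structure of a Real bundle gerbe module for $(Q,X)$. The inverse of the induced semi-group map is obtained from the dual Real stable isomorphism, and additivity under direct sum together with mutual invertibility reduce to the corresponding statements in the non-Real case.

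For part~(2), the Real trivialisation supplies a Real line bundle $R\to Y$ with $P\simeq \delta R$ as Real bundle gerbes. The $P$-action on a Real bundle gerbe module $E$ reads $R_{y_1}^*\otimes R_{y_2}\otimes E_{y_2}\simeq E_{y_1}$, which rearranges to descent data $R_{y_2}\otimes E_{y_2}\simeq R_{y_1}\otimes E_{y_1}$ for $R\otimes E\to Y$ over $M$, producing a vector bundle $V\to M$. The Real structures on $R$ and $E$ induce a conjugate-linear involution on $R\otimes E$ that is compatible with the descent, equipping $V$ with a Real structure. In the reverse direction, a Real vector bundle $V\to M$ gives rise to the Real bundle gerbe module $R^*\otimes\pi^{-1}(V)$ with the canonical $\delta R$-action, and the two constructions are mutually inverse and additive under direct sum.

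The main obstacle in both parts is the bookkeeping: we must verify at the level of fibres that every Real structure commutes with every bundle gerbe action and with each descent isomorphism, and that the resulting Real structures are involutive. The crucial diagrammatic check is that the defining square of Definition~\ref{realgerbemodule} is preserved by the tensor and descent operations, which reduces to the conjugate-involutivity of $\tau_P$, $\tau_Q$, $\tau_R$ and $\tau_E$ together with the Real cocycle identity satisfied by $R$. Once these compatibilities are in place, functoriality under composition of Real stable isomorphisms and independence of the induced semi-group isomorphism from the auxiliary choices follow from the corresponding non-Real statements of~\cite{BouCarMat}.
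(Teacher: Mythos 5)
Your proposal is correct and follows essentially the same route as the paper, which simply invokes the descent constructions of \cite[Propositions 4.2 and 4.3]{BouCarMat} and observes that the Real structures are carried along (respectively descend); your explicit descent of $R^*\otimes\pi_Y^{-1}(E)$ to $X$ and of $R\otimes E$ to $M$ is exactly what those proofs do, with the Real compatibility checks the paper leaves implicit. No substantive difference.
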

\begin{proof}
For (1) it is enough to follow the proof of \cite[Proposition 4.3]{BouCarMat} and notice that the bundle gerbe module in $\Rmod(Q, X)$ 
carries a Real structure.  Similarly the proof of (2) follows that of~\cite[Proposition 4.2]{BouCarMat} and it is straightforward to verify that the 
Real structure descends.
\end{proof}

If $(P, Y)$ is a Real bundle gerbe with torsion Dixmier--Douady class over a Real compact manifold $M$,
we denote by $\KR_{\bg}(M,P)$ the Grothendieck group of the
semi-group $\Rmod(P, Y)$ and call it the \emph{$\KR$-theory group} of the Real bundle gerbe. As an immediate corollary of Proposition~\ref{modules}~(1), we note that any choice of Real stable isomorphism $(P, Y)\to (Q, X)$ induces an isomorphism  
$\KR_{\bg}(M,P) \simeq  \KR_{\bg}(M,Q)$. In particular, the isomorphism class of the bundle
gerbe $\KR$-theory group depends only on the cohomology class of the Real Dixmier--Douady invariant. 
 By Proposition~\ref{modules}~(2) it follows that $\KR_{\bg}(M,P)
\simeq \KR(M)$ for any trivial Real bundle gerbe $(P,
Y)$. Furthermore, $\KR_{\bg}(M,P)$ is naturally a module over
$\KR(M)$ under tensor product with pullback of $\KR$-theory classes to
$Y$. More generally, if $(P,Y)$ and $(Q,X)$ are Real bundle gerbes on $M$, then there is a homomorphism $\KR_{\bg}(M,P)\otimes \KR_{\bg}(M,Q)\to\KR_{\bg}(M,P\otimes Q)$. One easily checks that $\KR_{\bg}(\cdot)$ is contravariant under pullback and is thus a well-defined functor from the category of Real spaces equipped with Real bundle gerbes to the category of abelian groups.

\begin{remark}\label{rem:K notation}
We note that just as in the case of complex twisted $K$-theory, the isomorphism $\KR_{\bg}(M,P) \simeq  \KR_{\bg}(M,Q)$ depends on a choice of stable isomorphism $P \simeq Q$.  The latter can be changed by  pullback and tensor product with a Real line bundle on $M$. Hence the isomorphism on $\KR$-theory is only defined up to the action on $\KR_{\bg}(M,Q)$ by the  Picard group of Real line bundles on $M$. When we have a specific Real bundle gerbe $(P, Y)$ representing a class $[H] \in H^2(M; \ZZ_2, \overline{\cU(1)})$,  we will often abuse notation and write $\KR_{\bg}(M, [H])$ for $\KR_{\bg}(M, P)$.\end{remark}

\begin{example}
If $M$ is a Real compact manifold with a trivial involution and $(P, Y)$ is a Real bundle
gerbe with $\tau$ acting trivially on $Y$, then the bundle gerbe
$\KR$-theory is related to the twisted $\KO$-theory defined in~\cite{MatMurSte}
via 
$$\KR_{\bg}(M,P) \simeq K\hspace{-1pt}O (M, [H])$$ where
$[H]=\DD_R(P) \in Tor(H^2(M,\ZZ_2))\subseteq H^2(M; \ZZ_2, \overline{\cU(1)})$. For this, we recall from Example \ref{ex:Z2} that in this case a  Real bundle gerbe reduces to a real gerbe when the involution acts trivially on the space, and similarly a Real vector bundle becomes an ordinary real vector bundle \cite{Ati}. It is straightforward to check that the gerbe action gives rise to a real bundle gerbe module and the result then follows by \cite[Proposition~7.3]{MatMurSte}.
\end{example}

\begin{example}
Let $N$ be any compact manifold and let $M=N\times\ZZ_2$ with the involution $\tau$ defined in Example~\ref{ex:double}. Recall that any Real bundle gerbe $(Q,Z)$ on $M$ is of the form $Q=(P,P^*)$, $Z=Y\times\ZZ_2$ for a bundle gerbe $(P,Y)$ on $N$, with $\tau_Q$ acting as $(P,P^*)\mapsto (P^*,P)$. The Real Dixmier--Douady class $\DD_R(Q)=(\DD(P),-\DD(P))$ is an element of the anti-diagonal subgroup $\ker(1\times\tau^*)$ of $H^2(M, \cU(1))= H^2(N, \cU(1))\oplus H^2(N, \cU(1))$. It follows that
$$
\KR_{\bg}(M,Q)\simeq K_{\bg}(N,P)\simeq K(N,[H])
$$
where $[H]=\DD(P)\in Tor(H^2(N, \cU(1)))\subseteq H^2(M;\ZZ_2,\overline{\cU(1)})$.
\end{example}

We recall from Proposition \ref{prop:PU(H)} that there is a bijective
correspondence between stable isomorphism classes of Real bundle
gerbes and isomorphism classes of Real principal $PU(\cH)$-bundles.
Every Real $PU(\cH)$-bundle determines a Real lifting bundle gerbe
with the same Real Dixmier--Douady class. Conversely, given any Real
bundle gerbe $(P, Y)$ and a Real bundle gerbe module $E\to Y$, the
projectivisation of $E$ descends to a Real projective bundle
$\mathcal{P}_E \to M$ due to the bundle gerbe action, and it is
straightforward to check that the class of the Real $PU(\cH)$-bundle
associated to $\mathcal{P}_E$ is $\DD_R(P)$. In the case of a torsion
class in $  H^2(M; \ZZ_2, \overline{\cU(1)})$, we have the Real
analogue of the Serre--Grothendieck Theorem (cf. \cite{AtiSeg,
  DonKar}).

\begin{theorem}[Real Serre--Grothendieck Theorem] \label{RSG} Any torsion class in $  H^2(M; \ZZ_2, \overline{\cU(1)} ) $ can be represented by a Real principal $PU(n)$-bundle.
\label{thm:RealSG}\end{theorem}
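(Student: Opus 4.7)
The plan is to prove the theorem by producing a finite-rank Real bundle gerbe module for a representative of the given torsion class and then descending its projectivisation to $M$. Let $[H]\in H^2(M;\ZZ_2,\overline{\cU(1)})$ be torsion and, by Proposition~\ref{P:Real classification}, represented by a Real bundle gerbe $(P,Y)$ with Real structure $(\tau_P,\tau_Y)$ and $\DD_R(P)=[H]$.

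First I would forget the Real structure: the image of $[H]$ under the forgetful map $H^2(M;\ZZ_2,\overline{\cU(1)})\to H^2(M,\cU(1))$ is still torsion, so the classical Serre--Grothendieck theorem for bundle gerbes~\cite{BouCarMat} produces a bundle gerbe module $E\to Y$ of some finite rank $r$ for the underlying $U(1)$-bundle gerbe. This $E$ carries no anti-linear involution on its own, so the second step is to symmetrise it. Concretely I would form
$$
F \;=\; E \;\oplus\; \tau_Y^{-1}\bigl(\overline{E}\bigr) \;\longrightarrow\; Y,
$$
so that $F_y = E_y \oplus \overline{E_{\tau_Y(y)}}$. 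The summand $\tau_Y^{-1}(\overline E)$ acquires a $P$-module structure by combining the conjugate of the $E$-action with the linear identification $P_{(y_1,y_2)}\simeq P^*_{(\tau_Y(y_1),\tau_Y(y_2))}$ supplied by the anti-linear $\tau_P$, so $F$ is a bundle gerbe module.

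The swap-and-conjugate map
$$
\tau_F(e,\overline f) \;=\; (f,\overline e) \,\in\, E_{\tau_Y(y)}\oplus\overline{E_y}
$$
is conjugate linear, squares to the identity (using $\tau_Y^2=\id_Y$ together with the canonical $\overline{\overline E}\simeq E$), and a short check using $\tau_P^2=\id_P$ shows that it intertwines the direct-sum action of $P$ on $F$. Hence $F$ becomes a Real bundle gerbe module of rank $n=2r$ in the sense of Definition~\ref{realgerbemodule}.

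Finally, the projectivisation $\mathbb{P}(F)$ is a $\CC P^{n-1}$-bundle over $Y$ on which the $P$-action becomes trivial, so it descends to a principal $PU(n)$-bundle $\mathcal{P}_F\to M$ exactly as in the paragraph preceding the theorem. The involution $\tau_F$ descends to a Real structure on $\mathcal{P}_F$ covering $\tau\colon M\to M$ and compatible with the conjugation involution on $PU(n)$, making it a Real $PU(n)$-bundle, and applying Example~\ref{ex:lifting} to $\mathcal{P}_F$ identifies its Real Dixmier--Douady class with $\DD_R(P)=[H]$. The main obstacle will be the verification that $\tau_F$ intertwines the $P$-action; this amounts to tracking the natural isomorphism $\overline{\overline E}\simeq E$ against $\tau_P^2=\id_P$ and the conjugate-linearity of $\tau_P$, whereas the other steps follow formally from classical Serre--Grothendieck~\cite{BouCarMat} and Proposition~\ref{prop:PU(H)}.
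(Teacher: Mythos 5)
Your proposal is correct and follows essentially the same route as the paper: represent the torsion class by a Real bundle gerbe, use the classical (complex) torsion result to obtain a finite-rank module, symmetrise it as the direct sum with the pullback of its conjugate under the involution, equip it with the swap-and-conjugate Real structure, and descend the projectivisation to a Real $PU(n)$-bundle with the given Real Dixmier--Douady class. The only differences are notational (your $E$ and $F$ are the paper's $F$ and $E$) and that you spell out the intertwining check and the descent step slightly more explicitly than the paper does.
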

\begin{proof}
The torsion class can be represented by a Real bundle gerbe $(P, Y)$. The Dixmier--Douady class of this bundle gerbe is torsion so it is represented by a $PU(n)$-bundle.  This is equivalent to the bundle gerbe $(P, Y)$ admitting a rank $n$ bundle gerbe module $F \to Y$.  Define $E = F \oplus \tau^{-1}(\,\overline F\,)$. We show that $E$ is a Real bundle gerbe module for $(P, Y)$.  First note that there is the bundle gerbe module action
$$
P_{(y_1, y_2)} \otimes F_{y_2} \longrightarrow F_{y_1}
$$
and thus an induced action 
$$
P^*_{(y_1, y_2)} \otimes \overline F_{y_2} \longrightarrow \overline
F_{y_1} \ .
$$
Defining $E_y = F_y \oplus \overline F_{\tau(y)}$ we have 
\begin{align*}
P_{(y_1, y_2)} \otimes E_{y_2} &= P_{(y_1, y_2)} \otimes (F_{y_2} \oplus \overline F_{\tau(y_2)})\\[4pt]
&= (P_{(y_1, y_2)} \otimes F_{y_2}) \oplus  (P_{(y_1, y_2)} \otimes \overline F_{\tau(y_2)})\\[4pt]
&= (P_{(y_1, y_2)} \otimes F_{y_2}) \oplus  (P^*_{(\tau(y_1), \tau(y_2))} \otimes \overline F_{\tau(y_2)})\\[4pt] & \simeq F_{y_1} \oplus \overline F_{\tau(y_1)} \\[4pt]
 &= E_{y_1} \ .
\end{align*}
Clearly this is a bundle gerbe module action.  

Moreover we have 
$$
\tau^{-1}(E_y) = E_{\tau(y)} = F_{\tau(y)} \oplus \overline F_{y}
$$
and flipping elements maps this complex linearly to
$$
 \overline F_{y} \oplus F_{\tau(y)}  = \overline E_y
 $$
 so that $E$ is a Real bundle gerbe module.
 
 The existence of the Real bundle gerbe module implies that the Real Dixmier--Douady class of  $(P, Y)$
 is associated to a Real principal $PU(n)$-bundle. 
 \end{proof}
 
\begin{remark}
If $(P, Y)$ is a Real bundle gerbe, then we have constructed a map from the twisted $K$-theory with respect to the underlying $U(1)$-bundle gerbe to Real twisted $K$-theory of $(P, Y)$. This is a generalisation of the corresponding construction from \cite[p.~371]{Ati} in the untwisted case.

Notice also that this proof does not actually use the fact that the
class in $H^2(M; \ZZ_2, \overline{\cU(1)})$ is torsion, but rather that its image in
$H^3(M, \ZZ)$ is torsion. So we have proved that every class in $H^2(M; \ZZ_2, \overline{\cU(1)})$ which is torsion in $H^3(M, \ZZ)$ is actually torsion in
$H^2(M; \ZZ_2, \overline{\cU(1)})$ by Remark~\ref{remark:torsion}.
\end{remark}

\subsection{Twisted $\KR$-theory}\label{sec:twistedKRtheory}
 
We now turn to the problem of showing that the bundle gerbe
$\KR$-theory of a Real bundle gerbe on a compact Real manifold $M$ is in fact the same as the twisted $\KR$-theory of $M$.

Let $\cH$ be a complex separable Hilbert space with a conjugation $v \mapsto \bar v$ and take it to be $\ZZ_2$-stable, that is the  one-dimensional irreducible representation of $\ZZ_2$ occurs in $\cH$ with infinite multiplicity. The space of Fredholm operators $\Fred(\cH)$ acquires a natural involution
defined by $\sigma(T)( v) = \overline{T(\bar{v})}$ for all $T\in
\Fred(\cH)$ and $v\in \cH$. In order to get a representing space for $\KR$-theory, with a continuous action by the Real projective unitary group in the compact-open topology, we proceed as in \cite{AtiSeg} and replace $\Fred(\cH)$ by another space of Fredholm operators $\Fred^{(0)}(\hat\cH)$ where $\hat\cH = \cH \otimes \CC^2$. We refer to Section 3 in \cite{AtiSeg} for a detailed description of this space and its topology. The involution on $\Fred(\cH)$ extends naturally to $\Fred^{(0)}(\hat\cH)$ and the Real projective unitary group
$PU(\hat \cH)$ acts continuously on $\Fred^{(0)}(\hat\cH)$ by conjugation. For a
Real $PU(\hat\cH)$-bundle $\cP\to M$ we can form the associated Real
bundle $\Fred^{(0)}_{\mathcal{P}}= \mathcal{P}\times_{PU(\hat\cH)}
\Fred^{(0)}(\hat\cH)$ classified by its invariant $\DD_R(\mathcal{P})
\in   H^2(M; \ZZ_2, \overline{\cU(1)} ) $.  The twisted $\KR$-theory
group of $M$ is defined as the group of  Real homotopy classes of
continuous sections of
$\Fred^{(0)}_\mathcal{P}$,\footnote{\label{fn:admissible} For locally compact Real manifolds, this definition still works by restricting to Real \emph{admissible} sections, see Definition 2.1 in \cite{CareyWang2}.}
\begin{equation}
\KR(M, \mathcal{P}) =
\pi_0\big(\Gamma_M^{\ZZ_2}(\Fred_\mathcal{P}^{(0)}) \big) \ ,
\label{eq:tachyondef}\end{equation}
or equivalently as the space of all homotopy classes of $ PU(\hat\cH)\rtimes\ZZ_2$-equivariant maps
$$
\KR(M, \mathcal{P}) =  [\mathcal{P},\Fred^{(0)}(\hat\cH)]_{ PU(\hat\cH)\rtimes\ZZ_2} \ ,
$$
where the homotopies are through equivariant maps and the $\ZZ_2$-action is via the Real structures on the spaces.
At this point we abuse notation again as in Remark \ref{rem:K notation} and write 
$$
\KR(M, [H]) = \KR(M, \mathcal{P}) = [\mathcal{P},\Fred^{(0)}(\hat\cH)]_{ PU(\hat\cH)\rtimes\ZZ_2} \ ,
$$
where $\cP$ is chosen such that $\DD_R(\cP) = [H]$. 

\begin{theorem}\label{theorem}\label{K-isom}
Let $M$ be a Real compact manifold, $\cP$ be a Real $PU(n)$-bundle over $M$ with torsion Real Dixmier-Douady class $[H]$  and $(L_\cP,\mathcal{P})$ denote the corresponding lifting bundle gerbe. 
Then there is an isomorphism of abelian groups
$$
\KR_{\bg}(M,L_\cP) \simeq \KR(M, [H]) \ .
$$
\end{theorem}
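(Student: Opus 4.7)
The plan is to adapt the strategy of Carey--Wang \cite{CareyWang2} for complex twisted $K$-theory to the Real setting, using the Real Serre--Grothendieck Theorem \ref{RSG} to obtain a rank-$n$ Real bundle gerbe module as the starting point, and then showing that the Real $C^*$-algebra of its endomorphisms computes both sides.

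First, I would apply Theorem \ref{RSG} to produce a Real bundle gerbe module $E_0 \to \cP$ of rank $n$ for $(L_\cP, \cP)$; the projectivisation $\mathbb{P}(E_0)$ recovers the Real $PU(n)$-bundle $\cP$. Next, I would establish a Real Morita-type equivalence: the functor $F \mapsto F \otimes E_0^*$ sends a Real bundle gerbe module $F$ to a vector bundle on $\cP$ on which the gerbe $L_\cP$ acts trivially (the bundle gerbe actions cancel), hence descends to a Real bundle $F_M \to M$ that is a module for the Real endomorphism bundle $\cA := \End(E_0)_M$. Conversely, tensoring a Real $\cA$-module with $\pi^{-1}(E_0)$ recovers a Real bundle gerbe module. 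This should give a semi-group isomorphism $\Rmod(L_\cP, \cP) \simeq \Rmod(\cA)$ of Real $\cA$-modules on $M$, hence $\KR_{\bg}(M, L_\cP) \simeq \KR_0(C(M, \cA))$, where $C(M, \cA)$ carries its natural Real $C^*$-algebra structure inherited from the Real structure on $E_0$.

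The next step is to stabilise. Tensoring $E_0$ with the Real Hilbert space $\hat \cH$ of Section~\ref{sec:twistedKRtheory} (or equivalently tensoring $\cA$ with $\cK(\hat\cH)$) yields a Real bundle of compact operators $\cK_\cP = \cP \times_{PU(n)} \cK(\hat\cH) \to M$, and Real Morita equivalence of $\cA$ and $\cA \otimes \cK(\hat\cH) \simeq \cK_\cP$ gives $\KR_0(C(M, \cA)) \simeq \KR_0(C(M, \cK_\cP))$. Finally, I would identify $\KR_0(C(M, \cK_\cP))$ with $\pi_0\bigl(\Gamma_M^{\ZZ_2}(\Fred_\cP^{(0)})\bigr) = \KR(M, [H])$ via a Real version of the Atiyah--J\"anich/Mingo theorem: the projection $\Fred^{(0)}(\hat \cH) \to \Fred^{(0)}(\hat\cH)/GL(\hat\cH)$ is a $PU(\hat\cH)\rtimes\ZZ_2$-equivariant classifying map, and Real sections of $\Fred_\cP^{(0)}$ correspond, up to homotopy, to finitely generated projective Real $C(M, \cK_\cP)$-modules via their index.

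The main obstacle will be the last step: verifying the Real Atiyah--J\"anich identification in equivariant form. One must check that $\ZZ_2$-equivariant homotopy classes of Fredholm sections biject with $\KR_0$-classes of the Real section algebra, which requires a Real Kuiper theorem for $U(\hat\cH)^{\ZZ_2}$ together with care that the involution on $\Fred^{(0)}(\hat\cH)$ (extending $\sigma$) is compatible with the stabilisation used to form $\cK_\cP$. Naturality with respect to Real stable isomorphism, together with Proposition \ref{modules}~(1) and Proposition \ref{prop:PU(H)}, then ensures the isomorphism is well-defined on cohomology classes, completing the proof.
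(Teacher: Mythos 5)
Your proposal is a valid strategy, but it follows a genuinely different route from the paper. The paper identifies Real bundle gerbe modules for $(L_\cP,\cP)$ with Real $U(n)$-equivariant vector bundles on the total space $\cP$ of central weight one, so that $\KR_{\bg}(M,L_\cP)\simeq \KR_{U(n),(1)}(\cP)$; it then invokes the (Real forms of the) equivariant Fredholm-operator results of Atiyah--Segal \cite{AtiSeg} to write this as $[\cP,\Fred^{(0)}(\hat\cH_{(1)})]_{PU(n)\rtimes\ZZ_2}$, and finally induces along $PU(n)\to PU(\hat\cH_{(1)})$ to land in the Fredholm-section definition of $\KR(M,[H])$, following the template of \cite{CareyWang2}. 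You instead take the Azumaya/Morita route: descend $\End(E_0)$ to a Real matrix-algebra bundle $\cA$ on $M$, identify Real gerbe modules with Real $\cA$-module bundles, stabilise to the compact-operator bundle $\cK_\cP$, and compare with the Fredholm-section picture via a Real twisted Atiyah--J\"anich theorem. What the paper's route buys is that the only genuinely new verification is that Real structures are carried through Atiyah--Segal's equivariant machinery; what your route buys is a more $C^*$-algebraic and conceptually self-contained path (closer to Donovan--Karoubi), at the cost of having to supply Real versions of Serre--Swan/Morita equivalence, stability, and the Atiyah--J\"anich/Kuiper step that you correctly single out as the main obstacle — an input of comparable weight to what the paper borrows from \cite{AtiSeg}.

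Two smaller points to tighten. First, Theorem \ref{RSG} does not directly hand you a rank-$n$ Real module whose projectivisation is the given $\cP$ (its proof produces a rank-$2n$ module from a complex one); what you actually want is the canonical module $E_0=\cP\times\CC^n$ given by the defining representation of $U(n)$, with the conjugation Real structure — this is a Real module for $L_\cP$ and does the job. Second, the inverse of $F\mapsto F\otimes E_0^*$ must be the tensor product over $\pi^{-1}(\cA)$ (not over $\CC$), and since $C(M,\cK_\cP)$ is non-unital the phrase ``finitely generated projective Real modules'' in the last step needs the usual reformulation through the unitalisation or through Kasparov-type $\KR$-groups before the index argument can be run.
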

\begin{proof} First we note that by Theorem~\ref{RSG}, there always
  exists a Real $PU(n)$-bundle $\mathcal P \to M$ associated to any torsion Real 
  Dixmier--Douady class $[H]\in   H^2(M; \ZZ_2,
  \overline{\cU(1)})$. The proof  proceeds along the same lines as the
  proof of Proposition 3.2 in \cite{CareyWang2}, although we only
  consider the compact-open topology which by Appendix 3 in
  \cite{AtiSeg} is equivalent to using the norm topology. The basic idea is to identify the additive category of Real bundle gerbe modules on $M$ with the additive category of Real $U(n)$-equivariant vector bundles of Real central character $1$ on the Real compact manifold $\cP$. 

Namely, any element of the center $g\in U(1)\subset U(n)$ gives rise to a $U(n)$-equivariant vector bundle automorphism $g_E\colon E\to E$, which is a central character if $g_E = \chi(g)\id_E$. As explained at the end of Section 6.2 in \cite{BouCarMat},  the bundle gerbe multiplication implies that $\chi(g)=1$, i.e. center  must act by scalar multiplication. It is further straightforward to check that the compatibility between the bundle gerbe multiplication and the Real structure in Definition \ref{realgerbemodule} corresponds precisely to the compatibility between the Real structure and the $U(n)$-action on  vector bundles on $\cP$. Therefore, if follows that $\KR_{\bg}(M,L_\cP)$ is isomorphic to the submodule $\KR_{U(n),(1)}(\cP) \subset \KR_{U(n)}(\cP)$ of weight $1$, where $\KR_{U(n)}(\cP)$ is regarded as an $RR(U(1))$-module and $RR(U(1))\subset \KR_{U(1)}(\pt)$ is the Real representation ring of $U(1)$.
 
Next we can view $\cP$ as a reduction of
  a Real $PU(\hat\cH)$-bundle $\tilde{\mathcal{P}}$ with the same
  Dixmier--Douady invariant, via the embedding $PU(n) \to PU(\CC^n
  \otimes \hat \cH)$, $g \mapsto g\otimes 1$ and by choosing an isomorphism $\CC^n \otimes \hat \cH \simeq \hat \cH$ of Real Hilbert spaces, where the involution on $\CC^n$ is by complex conjugation. We further require that $\hat\cH$ is a Real stable $U(n)$-Hilbert space and denote by $\hat\cH_{(1)}$ the Real $U(n)$-Hilbert subspace of weight $1$ under the Real U(1)-action. By the results in Section 6 and Appendix 3 of \cite{AtiSeg}, it follows that 
$$\KR_{U(n),(1)}(\cP) = [\cP, \Fred^{(0)}(\hat\cH_{(1)})]_{PU(n)\rtimes \ZZ_2 }$$
and we have
\begin{align*}
\KR_{\bg}(M,L_\cP) &\simeq \KR_{U(n),(1)}(\cP) \\
&=  [\cP, \Fred^{(0)}(\hat\cH_{(1)})]_{PU(n)\rtimes \ZZ_2 }\\
&\simeq[\cP\times_{PU(n)} PU(\hat\cH_{(1)}), \Fred^{(0)}(\hat\cH_{(1)})]_{PU(\hat\cH_{(1)})\rtimes \ZZ_2} \\
&\simeq \KR(M, [H]). 
\end{align*}
 
\end{proof}

\begin{remark}\label{RealSS}
We note that  a similar argument as in \cite{Seg}  shows that every Real bundle gerbe module is a direct summand of a trivial Real bundle gerbe module, so it follows that every element in
$\KR_{\bg}(M,L_\cP) $ can be represented in the form $[E]-[\CC^N_\cP]$ where
$\CC_\cP^N$ is the trivial Real vector bundle on $\cP$. 
\end{remark}

We now sketch the generalisation of this construction to bigraded
$\KR$-theory groups. For this, let $e_{p,q} \colon \RR^n\to\RR^n$ be the involution acting on $(x,y)\in
\RR^{p}\times\RR^q$ as $(x,y)\mapsto
(x,-y)$, where $p+q=n$; we denote the Real space $\RR^n$ with this involution as
$\RR^{p,q}$. Let $\CC\ell(n)$ be
the complex $\ZZ_2$-graded Clifford $C^*$-algebra on $n$
generators $e_1,\dots,e_{n}$ of degree one with the relations
$$
e_i\,e_j+e_j\, e_i= -2\, \delta_{ij} \ ,
$$
together with the linear embedding of $\RR^{n}$ into $\CC\ell(n)$
which sends the
standard basis of $\RR^{n}$ to $e_1,\dots,e_{n}$. The involution
$e_{p,q} \colon  \RR^{n}\to \RR^{n}$ induces an involutive automorphism of
$\CC\ell(n)$, also denoted $e_{p,q}$, and the corresponding Real algebra
$\CC\ell(n)$ is denoted $C\ell(\RR^{p,q})$.

Let $\hat\cH$ be a $\ZZ_2$-graded Real separable Hilbert space which is a
$*$-module over the Real Clifford algebra $C\ell(\RR^{p,q})$; we assume that each simple subalgebra of $C\ell(\RR^{p,q})$ is represented with infinite multiplicity on $\hat\cH$. Let
$\Fred^{(0)}_{p,q}(\hat\cH)$ be the Real space of Fredholm operators of odd
degree on
$\hat\cH$ which commute with the $C\ell(\RR^{p,q})$-action and topologised as in \cite{AtiSeg}; it is a
classifying space for the bigraded $\KR$-theory $\KR^{p,q}$. Let
$PU_{p,q}(\hat\cH)\subseteq PU(\hat\cH)$ be the subgroup of projective
unitaries commuting with the $C\ell(\RR^{p,q})$-action. Then $PU_{p,q}(\hat\cH)$ preserves $\Fred^{(0)}_{p,q}(\hat\cH)$. The
bigraded $(p,q)$ twisted $\KR$-theory group of $M$ is defined for a Real principal
$PU_{p,q}(\hat\cH)$-bundle $\cP\to M$ by
$$
\KR^{p,q}(M,\DD_R(\mathcal{P})) =
[\mathcal{P},\Fred^{(0)}_{p,q}(\hat\cH)]_{PU_{p,q}(\hat\cH)\rtimes
 \ZZ_2 }
$$
as above.

Let $\pi_{p,q} \colon M\times\RR^{p,q}\to M$ be the projection and define
$$
\KR_{\bg}^{p,q}(M,P):= \KR_{\bg}(M\times\RR^{p,q}, \pi_{p,q}^{-1}(P))
$$
where $\pi_{p,q}^{-1}(P)$ is the pullback Real bundle gerbe over the Real space
$M\times\RR^{p,q}$ and we are implicitly using $\KR$-theory with compact support, see Footnote~\ref{fn:admissible}. Then we immediately deduce from
Theorem \ref{K-isom} that
$$
\KR_{\bg}^{p,q}(M,P)\simeq \KR^{p,q}(M,\DD_R(P))
$$
for all $(p,q)$. This identifies $\KR_{\bg}^{p,q}(M,P)$ as the group
of virtual Real bundle gerbe modules with an
action of the Real Clifford algebra $C\ell(\RR^{p,q})$.

\begin{remark}
In the case of a non-torsion Dixmier--Douady class, it is possible to
introduce a Real analogue of infinite-rank $U_{\cK}$-bundle gerbe
modules as in \cite{BouCarMat}. We leave the formulation to the
reader. We will in fact see in Section~\ref{sec:Geometric
  cycles} that for the construction of geometric cycles for
$\KR$-homology twisted by an arbitrary Dixmier--Douady class, only finite-rank Real bundle gerbe modules are required.
\end{remark}
 
\section{Orientifolds and Real bundle gerbe D-branes}
\label{sec:orientifolds}

In this section we describe how our Real bundle gerbe
constructions find applications in the orientifold construction of
Type~II string theory, which includes Type~I string theory. In particular,
our bundle gerbe $\KR$-theory provides an appropriate receptacle for
the quantization of Ramond-Ramond charges and fluxes on these
backgrounds in a manner that we explain below. Our considerations here motivate the definition of twisted $\KR$-homology that we give in Section~\ref{sec:Geometric cycles}. The reader uninterested in the physics background behind our constructions may safely skip this section.

In the following, by a ``$B$-field'' we mean a gerbe with connection or a
class in a suitable differential cohomology theory as specified for
example in \cite{DFM, DFM1}. By ``quantum flux'' we mean the Dixmier--Douady class
of this gerbe: in string theory the $H$-flux usually refers to the $3$-form
curvature of the gerbe with connection, but the key feature is that it
represents the Dixmier--Douady class so has integer periods, and that is
what we shall mean by ``quantum''.

\subsection{D-branes and anomalies}

Let us begin by reviewing the well-known case
without involution, see e.g.~\cite{Szabo}, recast
into the context of this paper. We interpret our
manifold $M$ as spacetime of Type~II string theory which comes with
various geometric fields $F$, such as a Riemannian metric $g$ and a $B$-field whose three-form
flux $H$ defines a class $[H]\in H^3(M,\ZZ)$ by (generalised)
Dirac charge quantization~\cite{DFM1}; we can take $[H]=\DD(P)$ to be the Dixmier--Douady
class of a bundle gerbe $(P,Y)$ over $M$. In the worldsheet theory,
these fields are given by background functions $F(\phi(x))$ of closed string field configurations which are specified by a closed oriented Riemann surface $\Sigma$
and a smooth map $\phi \colon \Sigma\to M$. The string sigma-model associates to this data an exponentiated Euclidean action functional, one of whose factors is the amplitude
\begin{equation}\label{eq:Bfieldampl}
A_{g,H}(\phi,\Sigma) = \exp\big(-S_{\rm kin}(\phi)\big) \ \hol(\Sigma,\phi^*H) \ ,
\end{equation}
where $S_{\rm kin}(\phi)=\frac12\,\int_\Sigma\, \|{\rm d}\phi\|^2$ is the kinetic term which involves the orientation and a conformal structure on $\Sigma$ as well as the metric on $M$; in this generality the Wess--Zumino--Witten term $\hol(\Sigma,\phi^*H)$ from \eqref{eq:WZW} is usually called the \emph{$B$-field amplitude}.

If $\Sigma$ has a boundary, then
one needs to specify suitable boundary conditions for the maps $\phi \colon \Sigma\to M$ which are represented by a choice of the additional geometric data of a submanifold $f \colon Z\hookrightarrow M$ such that 
$\phi(\partial\Sigma)\subseteq Z$; this submanifold specifies the
\emph{worldvolume} of a wrapped D-brane. The open string field
configurations on the D-brane include a ``bundle'' $E$ on $Z$, which is its \emph{Chan--Paton
bundle}; we shall clarify its precise geometric meaning presently.

General
considerations from string theory imply that $E$ is not always a complex
vector bundle on $Z$ but should be more precisely described as defining a class $[E]$ in the
$K$-theory of $Z$ twisted by the class $f^*[H]+W_3(\nu)\in
H^3(Z,\ZZ)$. Here  the $2$-torsion class
$W_3(\nu)\in H^3(Z,\ZZ)$ is the third integral Stiefel--Whitney
class of the normal bundle $\nu\to Z$, which is the obstruction to a spin$^c$ structure on $\nu$
and will be regarded as the Dixmier--Douady class of the corresponding lifting
bundle gerbe $L_\nu$~\cite{Mur} associated
to the central extension 
$$
1 \ \longrightarrow \ U(1) \ \longrightarrow \ 
Spin^c(r) \ \longrightarrow \ SO(r) \ \longrightarrow \ 1
$$
where $r$ is the codimension of $Z$ in $M$. This is due to the Freed--Witten anomaly~\cite{FreedWitten} in the string
sigma-model associated to the space of smooth maps $\phi \colon \Sigma\to M$, that is, a factor of the exponentiated action which takes values in a (non-canonically trivialised) line bundle rather than $\CC$. Then the induced \emph{Ramond-Ramond
  charge} is computed by pushforward $f_! \colon
K_{\bg}(Z,f^*[H]+W_3(\nu))\to K(M,[H])$ under the inclusion $f
\colon Z\hookrightarrow M$~\cite{Szabo}, where $f^*[H]+W_3(\nu)$ is
the Dixmier--Douady class of the bundle gerbe $f^{-1}(P) \otimes L_\nu$. For vanishing $H$-flux and when the D-brane
is a stack of identical D-branes wrapping $Z$, the
Chan--Paton bundle $E$
can be regarded as a bundle gerbe module of rank $n$ for this lifting bundle gerbe
with $[E]\in K_{\bg}(Z,W_3(\nu))$; in particular, for a single D-brane
$n=1$ the complex line bundle $E\to\nu$ provides a trivialization for the
lifting bundle gerbe $L_\nu$ and describes a spin$^c$ structure on the 
normal bundle $\nu\to Z$, as expected in this case~\cite{FreedWitten}.

Anomaly free D-branes wrapping $Z$ satisfy the constraint~\cite{Kapustin,CarJohnMur}
$$
f^*[H] +W_3(\nu)=\beta\big(y(E)\big)
$$
in $H^3(Z,\ZZ)$,
where the \emph{'t~Hooft flux} $y(E)\in H^2(Z,\ZZ_n)$ is the obstruction to
an $SU(n)$-structure on the principal bundle associated to the corresponding projective vector bundle $\cP_E\to Z$, which may be regarded as the Dixmier--Douady class of the corresponding lifting
bundle gerbe associated to the central extension 
\begin{equation}
1\ \longrightarrow \ \ZZ_n\ \longrightarrow \ 
SU(n)\ \longrightarrow \ PU(n)\ \longrightarrow \ 1 \ , 
\label{eq:Zncentral}\end{equation}
and $\beta \colon H^2(Z,\ZZ_n)\to H^3(Z,\ZZ)$ is the Bockstein homomorphism associated to the exponential sequence $0\to\ZZ\xrightarrow{\times n} \ZZ\to\ZZ_n\to1$. For $n=1$ this is precisely the condition that the normal bundle $\nu\to Z$ admits an
$H$-twisted spin$^c$ structure~\cite{Wang}; in this case the Chan--Paton
bundle $E$ is a $\ZZ_2$-graded vector bundle on $Z$ with class $[E]\in
K(Z)$. For vanishing $H$-flux the anomaly cancellation condition for $n=1$ reduces to $W_3(\nu)=0$
and, when $M$ is spin, the worldvolume $Z$ is a spin$^c$ manifold.

Another way to deal with the anomaly is to maintain 
the requirement that the worldvolume $Z$ is a spin$^c$ manifold;
this ensures that the choice of boundary conditions represented by the
D-brane preserves a certain amount of supersymmetry in the string
sigma-model on the space of maps $\phi \colon \Sigma\to M$. In this case $[E]\in
K_{\bg}(Z,f^*[H])$, and combined with anomaly cancellation we then arrive at

\begin{definition}
A \emph{bundle gerbe D-brane} of a bundle gerbe
$(P,Y)$ over $M$ is a triple $(Z,E,f)$, where $f \colon Z\hookrightarrow M$
is a closed, embedded spin$^c$ submanifold and $E$ is a bundle gerbe
module of rank $n$ for the bundle gerbe $f^{-1}(P,Y)$ on $Z$.
\label{def:bgDbrane}\end{definition}

Note that this definition does not require the quantum $H$-flux on $M$ to be
torsion, but rather only that $n\,[H]\in\ker(f^*)\subseteq H^3(M,\ZZ)$; in particular, for a single
D-brane $n=1$ the Chan--Paton bundle $E\to f^{-1}(Y)$ gives a trivialization of the bundle gerbe $f^{-1}(P,Y)$. A similar notion of D-brane
was considered in~\cite{CareyWang}.

Deformation invariance, gauge symmetry enhancement and the possibility
of branes within branes imply that any bundle gerbe D-brane $(Z,E,f)$
should be subjected to the usual equivalence relations of geometric
$K$-homology~\cite{BaumDouglas}: bordism, direct sum and vector
bundle modification, respectively~\cite{ReisSzabo}. For the
topological classification of bundle gerbe D-branes, however,
we need to consider a larger class of triples wherein the spin$^c$ submanifold
$Z\subseteq M$ is generalised to an arbitrary continuous map $f \colon Z\to M$; non-embeddings
$f \colon Z\to M$ correspond to ``non-representable'' D-branes which are
physically significant in the correspondence between D-branes and
$K$-homology, see~\cite{ReisSzabo}. Geometric twisted $K$-homology is
defined in the present context by~\cite{Liu};
see~\cite{MathaiSinger,BCW,DeeleyGoffeng} for related approaches based on projective
bundles.

\subsection{Orientifold constructions}

Let us now apply the \emph{orientifold} construction to this setting,
which introduces an involution $\tau$ making $M$ into a
Real manifold. The connected components of the fixed point set $M^\tau$ of the orientifold involution
are called {orientifold planes}, or \emph{O-planes} for short.
In the worldsheet theory, the compact Riemann surface $\Sigma$ is not
oriented and need not even be orientable. The string fields $\phi$
should now be regarded as smooth maps from $\Sigma$ to the orbifold quotient of
$M$ by the involution $\tau$, which represents the
physical points of the orientifold spacetime. To make this precise,
following~\cite{SchSch,DFM} we introduce the orientation
double cover $\hat\pi \colon \widehat\Sigma\to\Sigma$ corresponding to the first
Stiefel--Whitney class $w_1(\Sigma)\in H^1(\Sigma,\ZZ_2)$; it is
canonically oriented with a canonical orientation-reversing involution
$\Omega \colon \widehat\Sigma\to\widehat\Sigma$, called \emph{worldsheet parity}, which permutes the sheets and
preserves the fibres. The string fields are
then smooth maps $\hat\phi \colon \widehat\Sigma\to M$ which are
equivariant in the sense that there is a commutative diagram
$$
\xymatrix{
\widehat\Sigma \ \ar[r]^{\hat\phi} \ar[d]_\Omega & \ M \ar[d]^\tau \\
\widehat\Sigma \ \ar[r]_{\hat\phi} & \ M
}
$$

Because of the orientation-reversing involution $\Omega$, the
geometric fields $F$ on $M$, which are background functions $F(\hat\phi(x))$ of the maps $\hat\phi \colon \widehat\Sigma\to M$, are required to satisfy equivariance conditions
under $\tau$ in order to survive to the orientifold quotient. In particular, the analog of the amplitude from \eqref{eq:Bfieldampl},
$$
\widehat{A}_{g,H}(\hat\phi,\Sigma) := \exp\big(-S_{\rm kin}(\hat\phi\,)\big) \ \big(\hol(\,\widehat{\Sigma},\hat\phi{}^*H)\big)^{1/2} \ ,
$$
involves $w_1(\Sigma)$-twisted forms, that is, forms on $\widehat{\Sigma}$ which are anti-invariant under pullback by $\Omega$ (cf.~\cite{SchSch,GSW,DFM1}). We require that $\widehat{A}_{g,H}(\hat\phi,\Sigma)$ be invariant under the combined actions of the involutions $\Omega$ and $\tau$; this forces the metric to be invariant, $\tau^*(g)=g$ (to ensure that the kinetic term $S_{\rm kin}(\hat\phi\,)$ is invariant) whereas the
three-form flux of the $B$-field is anti-invariant, $\tau^*(H)=-H$ (ensuring that the $B$-field amplitude $\hol(\,\widehat{\Sigma},\hat\phi{}^*H)$
is invariant). By Dirac charge quantization, the $H$-flux thus
determines a class 
$[H]\in\ker(1 \times \tau^*)\subseteq H^3(M,\ZZ)$. Recalling the
discussion from Section~\ref{sec:HR}, this is a necessary condition
for $[H]$ to lift to a class in $H^2(M; \ZZ_2, \overline{\cU(1)})$, but it is not
sufficient: In general the
vanishing condition must be imposed in equivariant cohomology as
dictated by the long exact sequence
\eqref{eq:beginlongsequence}. A Real bundle gerbe connection
whose $3$-curvature $H$ obeys $\tau^*(H)=-H$ renders the orientifold
$B$-field amplitude invariant, but to obtain a Real structure on a
given bundle gerbe with Dixmier--Douady class $[H]$ typically requires
assumptions on the topology of spacetime $M$; a situation where this
occurs is provided by the tautological bundle gerbe of
Example~\ref{ex:tautological}. A precise definition of connective structures on Real bundle gerbes and their holonomy will be provided in \cite{HMSV}. For the purposes of the present discussion, we offer
\begin{definition}
An \emph{orientifold $B$-field} is a $B$-field on $M$ whose quantum flux
$[H]\in \ker(1 \times \tau^*)\subseteq H^3(M,\ZZ)$ has a lift to the equivariant cohomology
$  H^2(M;\ZZ_2, \overline{\cU(1)})\simeq H^3_{\ZZ_2}(M, \ZZ(1))$.
\end{definition}
This definition agrees with those of~\cite[Section 6]{GSW}, \cite[Definition 2]{DFM} and \cite[Section 3.2]{DFM1}.   
 From the discussion above we have
\begin{proposition}
A $B$-field on $M$ is an orientifold $B$-field if and only if $(1\times\tau^*)[H]$ vanishes as a class in $H^2(M;\ZZ_2, \cU(1))\simeq H^3_{\ZZ_2}(M,\ZZ)$.
\label{prop:liftingH}\end{proposition}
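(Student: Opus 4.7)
The plan is to read the statement directly off the long exact sequence \eqref{eq:beginlongsequence} in Grothendieck's equivariant sheaf cohomology, together with the identifications made at the end of Section~\ref{sec:HR}.

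First I would isolate the relevant segment of \eqref{eq:beginlongsequence}:
$$
\cdots \ \longrightarrow \ H^2(M;\ZZ_2,\overline{\cU(1)}) \ \xrightarrow{\ \iota \ } \ H^2(M,\cU(1)) \ \xrightarrow{1\times\tau^*} \ H^2(M;\ZZ_2,\cU(1)) \ \longrightarrow \ \cdots
$$
where $\iota$ is the homomorphism that forgets the Real structure, and interpret $[H]\in H^3(M,\ZZ)\simeq H^2(M,\cU(1))$ in the middle term. Exactness at the middle term gives $\mathrm{im}(\iota)=\ker(1\times\tau^*)$, so $[H]$ lifts to a class in $H^2(M;\ZZ_2,\overline{\cU(1)})$ if and only if $(1\times\tau^*)[H]=0$ in $H^2(M;\ZZ_2,\cU(1))$. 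This is precisely the biconditional to be proved.

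Next I would handle the translation of targets. The definition of an orientifold $B$-field uses the identification $H^2(M;\ZZ_2,\overline{\cU(1)})\simeq H^3_{\ZZ_2}(M,\ZZ(1))$ obtained from the exponential sequence for $\overline{\cU(1)}$ together with the fineness of $\overline{\cR}$, as recalled at the end of Section~\ref{sec:HR}. The analogous argument applied to the trivially acted sheaves gives $H^2(M;\ZZ_2,\cU(1))\simeq H^3_{\ZZ_2}(M,\ZZ)$, i.e.\ ordinary Borel equivariant cohomology with integer coefficients. I would briefly remark that the horizontal map $1\times\tau^*$ in the sheaf-theoretic sequence is intertwined with the natural map $H^3(M,\ZZ)\to H^3_{\ZZ_2}(M,\ZZ)$ induced by pullback along $E\ZZ_2\times_{\ZZ_2}M\to M/\ZZ_2$ (or equivalently by symmetrisation $[H]\mapsto [H]+\tau^*[H]$ at the level of cochains), so the kernel condition at the level of Grothendieck cohomology is the same as vanishing in $H^3_{\ZZ_2}(M,\ZZ)$.

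Finally, combining these two observations: a $B$-field on $M$ is orientifold, in the sense of the Definition preceding the proposition, precisely when $[H]$ admits a preimage under $\iota$; by the exactness extracted above this is equivalent to $(1\times\tau^*)[H]=0$ as a class in $H^2(M;\ZZ_2,\cU(1))\simeq H^3_{\ZZ_2}(M,\ZZ)$. There is no real obstacle here beyond correctly identifying the two Borel-type cohomology groups on the two sides and noting that the forgetful map $\iota$ and the symmetrisation map $1\times\tau^*$ are compatible with these identifications; the rest is a one-line application of exactness.
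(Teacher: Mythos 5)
Your proposal is correct and follows essentially the same route as the paper, which deduces the proposition directly from exactness of the long exact sequence \eqref{eq:beginlongsequence} at $H^2(M,\cU(1))$ together with the definition of an orientifold $B$-field and the identification $H^2(M;\ZZ_2,\cU(1))\simeq H^3_{\ZZ_2}(M,\ZZ)$. (Your aside describing $1\times\tau^*$ as induced by pullback along $E\ZZ_2\times_{\ZZ_2}M\to M/\ZZ_2$ is loose---it is really the transfer/symmetrisation $P\mapsto P\otimes\tau^{-1}(P)$---but this plays no role in the argument.)
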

In this case we can
take $[H]$ to be the Real Dixmier--Douady class $\DD_R(P)$ of a Real
bundle gerbe on $M$. Hence spacetime $M$ is
now endowed with a Real bundle gerbe~$(P,Y)$.
\begin{remark}
Arguing similarly to Section~\ref{sec:HR} via the Cartan--Leray spectral sequence, the free part of the Borel equivariant cohomology $H^3_{\ZZ_2}(M,\ZZ)$ is isomorphic to the invariants in ordinary cohomology $H^3(M,\ZZ)^{\ZZ_2}$. The class $(1\times\tau^*)[H]$ is automatically invariant, so the lifting condition of Proposition~\ref{prop:liftingH} on its free part is the necessary condition $\tau^*[H]=-[H]$ in $H^3(M,\ZZ)$ which usually appears in the string theory literature. However, the vanishing of $(1\times\tau^*)[H]$ in the torsion subgroup of $H^2(M;\ZZ_2, \cU(1))$  is required to obtain a sufficient condition.
\end{remark}

Similarly to the previous situation, we specify a submanifold
$Z\subseteq M$ such that the
string fields $\hat\phi\colon \widehat{\Sigma}\to M$ satisfy the boundary condition
$\hat\phi(\partial\widehat\Sigma)\subseteq Z$. Demanding that a D-brane be isomorphic to its orientifold image (in a suitable sense) defines the orientifold projection of open string states. We assume that $Z$ is preserved by $\tau$; for topological considerations a natural choice is to take $Z\subseteq M^\tau$ to coincide with an O-plane. Again the open string
field configurations on the D-brane include a bundle gerbe module $E$
for some Real bundle gerbe on $Z$ which we will specify momentarily;
the worldsheet parity involution $\Omega$ induces a map
$E\to\overline{E}$. Equivariance requires that there be an isomorphism
$\tau_E \colon \tau^{-1}(E)\to \overline{E}$ satisfying $(\tau_E\circ\tau^{-1})^2=1$,
hence $E$ is naturally a Real bundle gerbe module and defines an element in some twisted $\KR$-theory group.

At present there is no computation of the Freed--Witten anomaly
available for orientifold (or even orbifold) string theories. However,
we can glean it from the definition of Real twisted spin$^c$
structures given by Fok~\cite{Fok2015}---which we generalise and extend
in Section~\ref{sec:Geometric cycles}---and by demanding that the
induced Ramond-Ramond charges can be computed by suitable pushforward
to classes in the twisted $\KR$-theory $\KR(M,[H])$ under
the inclusion $f \colon Z\hookrightarrow M$; this pushforward will be constructed
explicitly in Section~\ref{sec:Geometric cycles}, see in particular Example~\ref{ex:RRcharge}. Then our Chan--Paton bundles 
will generically be bundle gerbe modules defining classes in
$\KR_{\bg}(Z,f^*[H]+ \WR_3(\nu))$, where $\WR_3(\nu)$ is the Real
Dixmier--Douady invariant of the Real lifting bundle gerbe corresponding to the normal bundle $\nu\to Z$ which is the obstruction to a Real spin$^c$
structure on $\nu$. (cf. also \cite[Remark (g)]{DFM}).

We shall elucidate these definitions and the precise meaning of this obstruction in some
detail in Section~\ref{sec:Geometric cycles}. Again the open string field configurations on the D-brane include a class in $H_{\ZZ_2}^3(Z,\ZZ(1))$ associated with \eqref{eq:Zncentral}, regarded now as a Real central extension, and by equating twisting classes as before we generalize Definition~\ref{def:bgDbrane} to
\begin{definition}\label{def:RealDbrane}
A \emph{Real bundle gerbe D-brane} of a Real bundle gerbe
$(P,Y)$ over the Real manifold $M$ is a triple $(Z,E,f)$, where $f \colon Z\hookrightarrow M$
is a closed, embedded Real spin$^c$ submanifold such that $\tau(Z)= Z$ and $E$ is a Real bundle gerbe
module of rank $n$ for $f^{-1}(P,Y)$.
\end{definition}
A similar definition of D-brane is given by~\cite{GSW} using Jandl gerbes. In Section~\ref{sec:Geometric cycles} we will generalize this definition in the category of Real spaces by considering arbitrary continuous equivariant maps $f \colon Z\to M$ between Real spaces, and defining 
geometric cycles for twisted $\KR$-homology by suitably combining them
with an equivariant construction. For vanishing quantum $H$-flux, equivariant geometric $K$-homology is
constructed in~\cite{BHSZ2,SzVal}; this definition is extended
to the twisted case by~\cite{Barcenas} in the
language of $PU(\cH)$-bundles. 

\begin{example}[Discrete torsion]\label{discretetorsion}
The difference between orientifold
group actions on a fixed $B$-field is known as \emph{discrete torsion}. In our setting, the
orientifold {discrete torsion} is parameterized by $  H^2(M;\ZZ_2, \overline{\cU(1)})$ via the map
$$
  H^2(M;\ZZ_2, \overline{\cU(1)}) \ \longrightarrow \ {\rm
  Tors}\big(H^3_{\ZZ_2}(M, \ZZ(1))\big) \ .
$$
In particular, the subgroup of discrete
$B$-fields is classified by the inclusion $  H^2({\rm pt}; \ZZ_2,
\overline{\cU(1)})\subset H^2(M;\ZZ_2, \overline{\cU(1)})$ under pullback by the projection $M\to{\rm pt}$; in Example~\ref{ex:gerbept} we gave an explicit construction of the non-trivial discrete
$B$-field in $  H^2({\rm pt};\ZZ_2, \overline{\cU(1)})\simeq \ZZ_2$, reflecting the fact even a point has over it a non-trivial Real bundle gerbe. Alternatively, it is classified by the equivariant cohomology $H^3_{\ZZ_2}(\pt, \ZZ(1))$, which is computed 
by~\cite{Fok2015} to be
$$
H^3_{\ZZ_2}(\pt, \ZZ(1))=\ZZ_2 \ .
$$
This coincides with the group
cohomology $H^2(B\ZZ_2,\overline{\cU(1)})\simeq H_{\rm gp}^3(\ZZ_2,{\ZZ}(1))$, which also classifies
non-central extensions of the orientifold group $\ZZ_2$ by $U(1)$, or equivalently projective Real representations of $\ZZ_2$~\cite{DFM,BraunStef};
this is used by~\cite{BraunStef} to provide projectivised group actions
on D-branes and a definition of twisted $\KR$-theory in terms of
projective Real vector bundles for torsion quantum $H$-flux in this subgroup. For the two inequivalent Real structures on the
trivialisable gerbe here, the corresponding twisted $\KR$-theory groups are $\KO$ and $\KO^4=\KSp$ (this is also a special case of~\cite[Proposition~3.29]{Fok2015}); more generally, the non-trivial projective Real representation of $\ZZ_2$ is a Real representation of the cyclic group $\ZZ_4$ and the $\KR$-theory twisted by the generator $\xi$ of $  H^2({\rm pt}; \ZZ_2, \overline{\cU(1)})\subset H^2(M;\ZZ_2, \overline{\cU(1)})$ can be computed from the equivariant $\KR$-theory $\KR_{\ZZ_4}(M)=\KR(M)\oplus\KR(M,\xi)$ for any Real manifold $M$~\cite{BraunStef}.
\end{example}

\begin{example}[Type I D-branes]
Consider the Real involution $\tau$ that acts trivially on
$M$; this is the receptacle for 
Type~I string theory. The $B$-field reduces to a discrete field with quantum flux $[H]\in H^2(M,\ZZ_2)\oplus \ZZ_2$ by Example~\ref{ex:trivialtau}, and
the Chan--Paton bundles $E$ now define classes $[E]$ in
$\KO_\bg(Z,f^*[H]+w_2(\nu))$ or $\KSp_\bg(Z,f^*[H]+w_2(\nu))$ corresponding to $\pm\,1\in \ZZ_2$, respectively, where $w_2(\nu)\in H^2(Z,\ZZ_2)$ is the
second Stiefel--Whitney class of the normal bundle $\nu\to Z$. Thus in this case we recover Type~I D-branes which support either an orthogonal or symplectic gauge theory. For vanishing quantum $H$-flux,
geometric $\KO$-homology is constructed
in~\cite{BaumHigsonSchick,ReisSzaboValentino}.
\end{example}

\begin{example}[D-branes in $S^{1,3}$]
Let $M=S^{1,3}$ be the unit sphere in $\RR^{1,3}$, or equivalently the Lie group $SU(2)\simeq S^3$ with group inversion $g\mapsto g^{-1}$ as Real structure. The orientifold fixed point set consists of two elements, the
identity and its negative which comprise the center of $SU(2)$, corresponding respectively to the north and south poles $(\pm\,1,0,0,0)\in\RR^{1,3}$. By Example~\ref{ex:RealH2G} we have
$$
H^2(S^{1,3};\ZZ_2,\overline{\cU(1)})=\ZZ_2\oplus\ZZ \ ,
$$
where the basic gerbe over $SU(2)$ is $(0,1)$ while the gerbe coming from the coboundary map on $H^1(S^{1,3};\ZZ_2, \cU(1))=\ZZ_2$  is $(-1,0)$. Generally, \emph{symmetric D-branes} in Lie groups correspond to (integral) conjugacy classes~\cite{Alekseev}. For $SU(2)$ the conjugacy class of an element corresponding to $(x,y)\in\RR^{1,3}$ is the intersection of $S^3$ with a hyperplane with fixed first coordinate $x\in\RR$. For any $x\neq\pm\,1$ these are two-spheres $S_x^2\subset\RR^{0,3}$ which are preserved by the involution $e_{1,3}$ and are Real spin$^c$, and by Example~\ref{ex:RealH2S2} we have $H^2(S_x^2;\ZZ_2,\overline{\cU(1)})=0$; hence these conjugacy classes correspond to single (rank~$1$) spherical Real bundle gerbe D2-branes. For $x=\pm\,1$ the conjugacy classes correspond to Real bundle gerbe D-particles sitting at the O0-planes which can support either real or symplectic bundles since $H^2({\rm pt}; \ZZ_2, \overline{\cU(1)})= \ZZ_2$. These results are in agreement with those of~\cite{Brunner,Huiszoon,Bachas}.
\end{example}

\section{Real bundle gerbe cycles and twisted $\KR$-homology}
\label{sec:Geometric cycles}

In this section we define cycles for a
geometric realisation of the homology theory dual to the bundle gerbe 
$\KR$-theory constructed in this paper. Amongst other things,
this will provide the topological classification of the Real bundle
gerbe D-branes discussed in Section~\ref{sec:orientifolds}.

\subsection{Real spin$^c$ structures}

Let $Z$ be a Real space, and let $V\to Z$ be an equivariant oriented 
real vector bundle of even rank $n$ equipped with a fibrewise inner product
with
respect to which the involution $\tau_V \colon V\to V$ is orthogonal. The bundle ${\sf F}(V)$ of oriented
orthonormal frames of $V$ is a principal $SO(n)$-bundle on $Z$. Following~\cite{Fok2015}, we can make its
structure group $SO(n)$ into a Real Lie
group $SO(\RR^{p,q})$ by assigning the involutive automorphism $g\mapsto \sigma_{p,q}(g)= e_{p,q}\, g \, e_{p,q}$ of $SO(n)$ which corresponds to
the involution $e_{p,q} \colon \RR^n\to\RR^n$ introduced in Section~\ref{sec:twistedKRtheory}, where $p+q=n$. Note that $e_{p,q}\in SO(n)$ if $q$ is even and
$e_{p,q}\in O(n)$ if $q$ is odd; moreover $e_{0,n}$ acts trivially and
$e_{0,n}\, e_{p,q}=e_{q,p}$, so that $\sigma_{p,q}=\sigma_{q,p}$.
\begin{definition}
The vector bundle $V$ is \emph{Real $(p,q)$-oriented} if its frame bundle ${\sf F}(V)$ is a Real $SO(\RR^{p,q})$-bundle.
\end{definition}
Let us examine some necessary and sufficient conditions under which $V$ admits a Real
$(p,q)$-orientation in this sense.
For this, let ${\sf F}(V)^{\sigma_{p,q}}\to Z$ be the $SO(n)$-bundle which as a
manifold is equal to ${\sf F}(V)$ but with twisted group action $u
\cdot_{\sigma_{p,q}}g=u \, \sigma_{p,q}(g)$ for $u\in {\sf F}(V)$ and $g\in SO(n)$. Then a Real
structure $\tau_{{\sf F}(V)}^{p,q} \colon {\sf F}(V)\to {\sf F}(V)$ commuting
with the involution $\tau \colon Z\to Z$ and satisfying
$\tau_{{\sf F}(V)}^{p,q}(u\, g)= \tau_{{\sf F}(V)}^{p,q} (u) \, \sigma_{p,q}(g)$ is the same
thing as an $SO(n)$-bundle morphism $\tau_{{\sf F}(V)}^{p,q} \colon {\sf F}(V)\to {\sf F}(V)^{\sigma_{p,q}}$ covering $\tau$, since ${\sf F}(V)= {\sf F}(V)^{\sigma_{p,q}}$ as manifolds, it makes sense to
demand that $\tau_{{\sf F}(V)}^{p,q}$ be an involution. Such an involution is easy to construct; with the involutive bundle morphism $\tau_{{\sf F}(V)}$ on ${\sf F}(V)$ induced fibrewise by $\tau_V$ that
satisfies $\tau_{{\sf F}(V)}(u\, g)= \tau_{{\sf F}(V)}(u)\, g$, the involution $e_{p,q} \colon \RR^n\to\RR^n$ induces a fibrewise involutive map which composed with $\tau_{{\sf F}(V)}$ yields the desired isomorphism $\tau_{{\sf F}(V)}^{p,q}$ when either $\tau_V$ is orientation-preserving and $q$ is even or $\tau_V$ is orientation-reversing and $q$ is odd. Then a
necessary condition is that ${\sf F}(V)$ and $\tau^{-1}\big({\sf F}(V)^{\sigma_{p,q}}\big)$ are isomorphic as $SO(n)$-bundles. Now if
$f \colon Z\to BSO(n)$ is a classifying map for ${\sf F}(V)$ then
$B(\sigma_{p,q})\circ f\circ \tau$ is a classifying map for $\tau^{-1}\big({\sf F}(V)^{\sigma_{p,q}}\big)$, where $B(\sigma_{p,q}) \colon BSO(n)\to BSO(n)$
is the involution induced by $\sigma_{p,q}$. It can be checked that
$\sigma_{p,q}=\sigma_{q,p}$ is an inner automorphism of $SO(n)$ if and
only if $q$ is even, in which case it can be deformed via
automorphisms to the identity map. Then $B(\sigma_{p,q})$ can be
deformed to the identity so that $B(\sigma_{p,q})\circ f\circ \tau$
and $f\circ \tau$ are homotopic maps, and hence $\tau^{-1}\big({\sf F}(V)^{\sigma_{p,q}}\big)\simeq \tau^{-1}\big({\sf F}(V)\big) \simeq {\sf F}(V)$ since $\tau_{{\sf F}(V)}$ commutes with $\tau$. We conclude that if $\tau_V \colon V\to V$ is an orientation-preserving involution and $q$ is even, then $V$ is Real $(p,q)$-oriented.
  
Henceforth we assume that the bundle $V\to Z$ is Real
$(p,q)$-oriented. Its $\ZZ_2$-invariant fibrewise inner product
defines a Real bundle of Clifford algebras 
$$
C\ell(V):={\sf F}(V)\times_{SO(\RR^{p,q})}C\ell(\RR^{p,q}) 
$$
on $Z$. The Lie group $Spin^c(n)\subseteq\CC\ell(n)$ is a central extension
$Spin^c(n):=Spin(n)\times_{\ZZ_2}U(1)$ of $SO(n)$, which is a Real
Lie group $Spin^c(\RR^{p,q})$ under the involutive automorphism which descends to the Real
structure on
$SO(\RR^{p,q})$ and restricts to complex conjugation on $U(1)$. In particular, there is a Real central extension
\begin{equation}
1 \ \longrightarrow \ U(1) \ \longrightarrow \ Spin^c(\RR^{p,q}) \
\longrightarrow \ SO(\RR^{p,q}) \ \longrightarrow \ 1 \ .
\label{eq:RealcentralSpin}
\end{equation}
Following again~\cite{Fok2015} we have
\begin{definition}
Let $V$ be an equivariant oriented real vector bundle of even Real
rank
$n=p+q$ over
a Real space $Z$ which is Real $(p,q)$-oriented. A \emph{Real $(p,q)$-spin$^c$ structure} or
\emph{$\KR$-orientation of type $(p,q)$} on $V$ is an
extension of the frame bundle ${\sf F}(V)$ to a Real
$Spin^c(\RR^{p,q})$-bundle $ \widehat{\sf F}(V)$ over $Z$ whose structure group
lifts that of ${\sf F}(V)$ as the Real central extension \eqref{eq:RealcentralSpin}. The bundle $V$ with a given Real spin$^c$ structure is
called a \emph{Real spin$^c$} or \emph{$\KR$-oriented} vector bundle.
\end{definition}

\begin{remark}
If $V\to Z$ has odd rank $n$, we apply the above considerations to $C\ell(V\oplus \RR_Z)$ instead, where $\RR_Z:=Z\times\RR$ is the trivial real line bundle with the trivial involution on its fibre.
\end{remark}

For a $\KR$-oriented bundle $V$ of type $(p,q)$, the extension $ \widehat{\sf F}(V)$
may be regarded as a Real $U(1)$-bundle over ${\sf F}(V)$ which fits in a
diagram of fibrations
$$
\xymatrix@C=5mm@R=3mm{
 & Spin^c(\RR^{p,q})~\ar[rr]\ar[dd] & & ~SO(\RR^{p,q}) \ar[dd] \\
U(1) ~ \ar[ru]\ar[rd] & & & \\
 & ~ \widehat{\sf F}(V)~\ar[rr]\ar[rd]& & ~ {\sf F}(V) \ar[ld] \\
 & & ~ Z ~ & 
}
$$
The topological obstruction to the existence
of a Real spin$^c$ structure on $V$ is the Real Dixmier--Douady class of the Real lifting bundle
gerbe associated to the Real central extension~\eqref{eq:RealcentralSpin}. It is easy to see that when the involution on $Z$ is trivial then a Real spin$^c$ structure is the same thing as a spin structure on $V$.

\begin{lemma}\label{Realsum}
If $V$ and $W$ are Real spin$^c$ vector bundles, then their Whitney sum
$V\oplus W$ carries a natural Real spin$^c$ structure.
\label{lem:Whitneyspin}\end{lemma}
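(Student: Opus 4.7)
The plan is to lift the standard group-theoretic construction of a spin$^c$ structure on a Whitney sum to the Real setting. Suppose $V$ is Real $(p,q)$-spin$^c$ of rank $m=p+q$ and $W$ is Real $(p',q')$-spin$^c$ of rank $n=p'+q'$, with Real $Spin^c(\RR^{p,q})$- and $Spin^c(\RR^{p',q'})$-bundles $\widehat{\sf F}(V)$ and $\widehat{\sf F}(W)$, respectively. The reordering isomorphism
\[
\RR^{p,q}\oplus\RR^{p',q'} \ \simeq \ \RR^{p+p',q+q'}
\]
is an isometric isomorphism of Real quadratic spaces and therefore induces a Real Lie group embedding $SO(\RR^{p,q})\times SO(\RR^{p',q'}) \hookrightarrow SO(\RR^{p+p',q+q'})$. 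Clifford multiplication lifts this to a Real Lie group homomorphism
\[
\mu\colon Spin^c(\RR^{p,q})\times_{U(1)} Spin^c(\RR^{p',q'}) \ \longrightarrow \ Spin^c(\RR^{p+p',q+q'}) \ ,
\]
where the fibre product identifies the central $U(1)$-subgroups via the anti-diagonal embedding $z\mapsto(z,z^{-1})$. Since the Real structure on each $Spin^c(\RR^{\cdot,\cdot})$ restricts to complex conjugation on its central $U(1)$, by the defining property of the Real central extension \eqref{eq:RealcentralSpin}, this anti-diagonal subgroup is Real-invariant; hence both the fibre product and $\mu$ are Real-equivariant.

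I would then define
\[
\widehat{\sf F}(V\oplus W) \ := \ \bigl(\widehat{\sf F}(V)\times_Z\widehat{\sf F}(W)\bigr)\big/U(1) \ \times_\mu \ Spin^c(\RR^{p+p',q+q'}) \ ,
\]
which is a principal $Spin^c(\RR^{p+p',q+q'})$-bundle on $Z$. The product of the Real structures on $\widehat{\sf F}(V)$ and $\widehat{\sf F}(W)$ descends through the anti-diagonal $U(1)$-quotient and combines with the Real structure on $Spin^c(\RR^{p+p',q+q'})$, endowing $\widehat{\sf F}(V\oplus W)$ with an involution covering $\tau\colon Z\to Z$ that is compatible with the right $Spin^c(\RR^{p+p',q+q'})$-action in the appropriate Real sense. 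Projecting to $SO(\RR^{p+p',q+q'})$ recovers the frame bundle ${\sf F}(V\oplus W)$ with its natural Real $(p+p',q+q')$-orientation coming from $\tau_V\oplus\tau_W$; hence $\widehat{\sf F}(V\oplus W)$ is a Real extension of ${\sf F}(V\oplus W)$ along \eqref{eq:RealcentralSpin}, giving the sought Real spin$^c$ structure on $V\oplus W$. Naturality with respect to isomorphisms of Real spin$^c$-bundles is immediate from the construction.

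The main obstacle is keeping the Real-equivariance consistent through the quotient by $U(1)$ and the extension along $\mu$; the key observation that unlocks this is uniform in $(p,q)$, namely that the Real structure on every $Spin^c(\RR^{p,q})$ acts as complex conjugation on its central $U(1)$. This makes the anti-diagonal $U(1)$-subgroup automatically invariant under the combined Real structure on $Spin^c(\RR^{p,q})\times Spin^c(\RR^{p',q'})$, reducing the remaining checks to routine bookkeeping analogous to the untwisted case.
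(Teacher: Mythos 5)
Your proposal is correct and follows essentially the same route as the paper: the paper's proof uses the equivariant (reordering) inclusions of $C\ell(\RR^{p,q})$ and $C\ell(\RR^{r,s})$ into $C\ell(\RR^{p+r,q+s})$ to induce a Real homomorphism $Spin^c(\RR^{p,q})\times Spin^c(\RR^{r,s})\to Spin^c(\RR^{p+r,q+s})$ covering $SO\times SO\to SO$, and then extends the structure group, exactly as you do. Your explicit quotient by the anti-diagonal central $U(1)$ and the remark that it is Real-invariant (since each Real structure restricts to conjugation on the centre) just unpack the extension along $\mu$ that the paper leaves implicit.
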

\begin{proof}
Let $n=p+q$ and $m=r+s$ be the respective  ranks of $V$ and $W$. The maps
$e_i\mapsto e_i$, $i=1,\dots,p$ and $e_i\mapsto e_{i+r}$,
$i=p+1,\dots,n$, and $e_j\mapsto e_{j+p}$,
$j=1,\dots,r$ and
$e_j\mapsto e_{j+n}$,
$j=r+1,\dots,m$ give respective equivariant inclusions of $C\ell(\RR^{p,q})$ and $C\ell(\RR^{r,s})$ in $C\ell(\RR^{p+r,q+s})$. These inclusions induce a
diagram
$$
\xymatrix{
Spin^c(\RR^{p,q})\times Spin^c(\RR^{r,s}) \ \ar[d] \ar[r] & \ Spin^c(\RR^{p+q,r+s}) \ar[d] \\
SO(\RR^{p,q})\times SO(\RR^{r,s}) \ \ar[r] & \ SO(\RR^{p+r,q+s})
}
$$
which gives the desired Real $(p+r,q+s)$-spin$^c$ structure on $V\oplus W$.
\end{proof}

Let $V\to Z$ be any Real spin$^c$ vector bundle with Real spin$^c$
structure $ \widehat{\sf F}(V)\to {\sf F}(V)$ of type $(p,q)$. Any fixed equivariant
orientation-reversing isometry $\eta$ of $\RR^{p,q}$ induces an
equivariant automorphism of $C\ell(\RR^{p,q})$,
and hence of $Spin^c(\RR^{p,q})$, which is
also denoted $\eta$. Define a Real $U(1)$-bundle
$ \widehat{\sf F}^\eta(V)\to {\sf F}(V)$ with the same Real total space as
$ \widehat{\sf F}(V)$, but with the action of the Real group $Spin^c(\RR^{p,q})$
twisted by $\eta$; it defines the \emph{opposite} Real spin$^c$ vector
bundle $-V$.

If $Z$ is a Real manifold, a Real orientation of its tangent bundle $TZ$ can be specified by choosing a complete Riemannian metric on $Z$ and taking $\tau \colon Z\to Z$ to be an isometric involution. A \emph{Real spin$^c$ structure} on $Z$ is a Real
spin$^c$ structure on $TZ$. A Real manifold
together with a given Real spin$^c$ structure is called a \emph{Real
  spin$^c$ manifold}. 

\begin{lemma}
If $Z$ is a Real spin$^c$ manifold, then its boundary $\partial Z$
carries a natural Real spin$^c$ structure.
\end{lemma}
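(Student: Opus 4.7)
The plan is to adapt the classical argument for the inheritance of spin$^c$ structures by boundaries to the Real setting, keeping track of how the involution interacts with the outward normal direction. First I would verify that $\partial Z$ is naturally a Real manifold: since $\tau \colon Z\to Z$ is an isometric involution sending $Z$ to itself, it must send $\partial Z$ to $\partial Z$, and its restriction remains an isometric involution with respect to the induced Riemannian metric on $\partial Z$.

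Next I would analyse the outward unit normal field $N \colon \partial Z \to TZ|_{\partial Z}$. Because $\tau$ preserves both $\partial Z$ and the interior of $Z$, its differential must send inward-pointing vectors to inward-pointing vectors, and combined with $\tau$ being an isometry this forces $d\tau(N_z) = N_{\tau(z)}$ for every $z\in\partial Z$. Consequently the normal line bundle $\nu \to \partial Z$ is a trivial Real line bundle with trivial action on its fibre; in particular it is canonically Real $(1,0)$-oriented, and its trivial extension to $U(1)=Spin^c(\RR^{1,0})$ endows $\nu$ with a canonical Real spin$^c$ structure.

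I would then exploit the orthogonal decomposition $TZ|_{\partial Z} = T(\partial Z) \oplus \nu$ of Real equivariant oriented vector bundles. If $TZ$ has Real type $(p,q)$, the fact that $\nu$ has type $(1,0)$ forces $T(\partial Z)$ to be Real $(p-1,q)$-oriented. The augmentation map $(e_1,\dots,e_{n-1}) \mapsto (e_1,\dots,e_{n-1}, N)$ defines a $\tau$-equivariant embedding
$$ \iota \colon {\sf F}(T(\partial Z)) \ \hookrightarrow \ {\sf F}(TZ)|_{\partial Z} $$
that lifts the inclusion of $SO(\RR^{p-1,q})$ into $SO(\RR^{p,q})$ as the stabiliser of the last basis vector. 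Pulling back the Real $Spin^c(\RR^{p,q})$-bundle $\widehat{\sf F}(TZ)|_{\partial Z}$ along $\iota$ yields a Real $U(1)$-bundle over ${\sf F}(T(\partial Z))$ whose structure group is the preimage $Spin^c(\RR^{p-1,q})$ of $SO(\RR^{p-1,q})$ inside $Spin^c(\RR^{p,q})$. The Real structure on the pullback descends from that on $\widehat{\sf F}(TZ)$ via restriction, and its compatibility with the Real central extension \eqref{eq:RealcentralSpin} is inherited from the corresponding property of $TZ$. This is by construction a Real $(p-1,q)$-spin$^c$ structure on $T(\partial Z)$, and since every step is canonical, it depends only on the Real spin$^c$ structure of $Z$.

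If the rank of $TZ$ is odd, one stabilises by the trivial Real line bundle $\RR_Z$ (with trivial action on its fibre, so of type $(1,0)$) before running the same argument, and the result then descends by Lemma~\ref{lem:Whitneyspin}. The main obstacle is the bookkeeping with the $(p,q)$-grading---verifying that the involution on the outward normal is trivial and that the inclusions $SO(\RR^{p-1,q}) \hookrightarrow SO(\RR^{p,q})$ and $Spin^c(\RR^{p-1,q}) \hookrightarrow Spin^c(\RR^{p,q})$ respect the Real structures---but once the equivariance of the outward normal is established, the construction reduces to the classical one.
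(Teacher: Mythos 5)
Your argument is essentially the paper's proof: the paper likewise maps ${\sf F}(T\partial Z)$ into the restriction of ${\sf F}(TZ)$ to the boundary via the $\tau$-fixed normal direction and pulls back the Real $Spin^c$ extension along the pullback diagram induced by the equivariant inclusion $C\ell(\RR^{p-1,q})\hookrightarrow C\ell(\RR^{p,q})$, $e_i\mapsto e_{i+1}$, yielding a Real $(p-1,q)$-spin$^c$ structure on $\partial Z$ exactly as you describe. The one adjustment is that, with the convention $e_{p,q}(x,y)=(x,-y)$, the $\tau$-fixed normal must be inserted into a slot of the trivially-acted $\RR^{p}$ block (the paper uses the first slot), not the last slot as in your augmentation map, since otherwise the stabiliser carries the involution of $SO(\RR^{p,q-1})$ rather than $SO(\RR^{p-1,q})$; this is purely an indexing fix and does not affect the substance of your construction.
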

\begin{proof}
The frame bundle $ {\sf F}(T\partial Z)$ can be mapped to
$\partial\, {\sf F}(TZ)$. If $Z$ has Real dimension $n=p+q$, then the Real
$(p,q)$-spin$^c$ structure on $Z$ can be pulled back to a Real $(p-1,q)$-spin$^c$
structure on $\partial Z$ via the pullback diagram
$$
\xymatrix{
Spin^c(\RR^{p-1,q}) \ \ar[r] \ar[d] & \ Spin^c(\RR^{p,q}) \ar[d] \\
SO(\RR^{p-1,q}) \ \ar[r] & \ SO(\RR^{p,q})
}
$$
induced by the equivariant inclusion
$C\ell(\RR^{p-1,q})\hookrightarrow C\ell(\RR^{p,q})$
which sends $e_i\mapsto e_{i+1}$ for $i=1,\dots,n-1$.
\end{proof}

\subsection{Bundle gerbe $\KR$-homology}

Let $M$ be a Real space and let $(P,Y)$ be a Real bundle gerbe over
$M$ with Dixmier--Douady class $[H]=\DD_R(P)\in   H^2(M;\ZZ_2,\overline{\cU(1)})$.

\begin{definition}
\label{def:bg-cycle}
A \emph{bundle gerbe $\KR$-cycle} is a triple $(Z,E,f)$ where
$Z$ is a compact Real spin$^c$ manifold without boundary, $f \colon  Z\to M$
is a continuous equivariant map, and $E$ is a Real bundle gerbe module for $f^{-1}(P^*,Y)$.
\end{definition}

Notice that the definition of a Real bundle gerbe $D$-brane  (Definition \ref{def:RealDbrane}) is a special case of this 
definition.

We note that since $E$ is of finite rank, the pullback to $Z$  of the Real Dixmier--Douady class $\DD_R(f^{-1}(P^*,Y))=-f^*(\DD_R(P,Y))$ must be torsion. Moreover, the manifold $Z$ need not be connected, hence the
disjoint union
$$
(Z_1,E_1,f_1)\amalg (Z_2,E_2,f_2):= (Z_1\amalg Z_2,E_1\amalg
E_2,f_1\amalg f_2)
$$
is a well-defined operation on the set of all bundle gerbe
$\KR$-cycles. We say that two bundle gerbe $\KR$-cycles $(Z_1,E_1,f_1)$ and $(Z_2,E_2,f_2)$  are
\emph{isomorphic} if there exists an equivariant diffeomorphism $h \colon Z_1\to Z_2$
preserving the Real spin$^c$ structures such that $f_1=f_2\circ h$ and $h^{-1}(E_2)\simeq E_1$
as Real bundle gerbe modules for $f_1^{-1}(P^*,Y)$.
We denote the set of isomorphism classes of bundle gerbe $\KR$-cycles 
by $\RCyc(P,Y)$; it
is a commutative semi-group with addition $+$
induced by disjoint union of bundle gerbe $\KR$-cycles. Henceforth when we
refer to a bundle gerbe $\KR$-cycle we shall  mean an isomorphism class
of bundle gerbe $\KR$-cycles.

\begin{definition}\label{def:bordism}
Two bundle gerbe $\KR$-cycles $(Z_1,E_1,f_1)$ and $(Z_2,E_2,f_2)$  are 
\emph{{Real} spin$^c$ bordant} if there exists a compact Real
spin$^c$ manifold $\underline{Z}$ with a $\ZZ_2$-invariant boundary, a continuous equivariant map
$\underline{f}  \colon \underline{Z} \to M$ and a Real bundle gerbe module
$\underline{E}$ for $\underline{f}^{-1}(P^*,Y)$ such that the two bundle gerbe
$KR$-cycles $\partial(\,\underline{Z}\,,\,\underline{E}\,,\,
\underline{f}\, ):= (\partial
\underline{Z}\,,\,\underline{E}|_{\partial
  \underline{Z}}\,,\,\underline{f}|_{\partial \underline{Z}}\, )$ and
$(Z_1,E_1,f_1)\amalg (-Z_2,E_2,f_2)$ are isomorphic, where $-Z_2$
denotes the Real manifold $Z_2$ with the opposite Real spin$^c$
structure on its tangent bundle $TZ_2$. The triple
$(\,\underline{Z}\,,\,\underline{E}\,,\, \underline{f}\, )$ is called
a \emph{Real spin$^c$ bordism} of bundle gerbe $\KR$-cycles.
\end{definition}

The most intricate equivalence relation on the semi-group
$\RCyc(P,Y)$ is a twisted Real version of vector bundle
modification. For this, let $S^{p,q}$ be the unit sphere of dimension $p+q-1$ in
$\RR^{p,q}$ with respect to the standard flat Euclidean metric on
$\RR^p\times \RR^q$; then $S^{n,0}=S^{n-1}$ is the standard
$n-1$-sphere with the trivial Real involution. The frame bundle of
$T\RR^{p,q}$ can be identified with $\RR^{p,q}\times SO(\RR^{p,q})$, and
we can equip $\RR^{p,q}$ with the trivial Real spin$^c$ structure
$\RR^{p,q}\times Spin^c(\RR^{p,q})$. Then the associated Real spin$^c$ structure
on $S^{p,q}$ is the Real $Spin^c(\RR^{p-1,q})$-bundle $\widehat{\sf F}(TS^{p,q})$ with fibre at
$x\in S^{p,q}$ given by the space of all elements of $Spin^c(\RR^{p,q})$ whose
image in $SO(\RR^{p,q})$ is a matrix with first column equal to $x$. 

Let $V_{p,q}$ be a Real spin$^c$ vector bundle of type $(p,q)$ with even-dimensional fibres over a
compact Real
spin$^c$ manifold $Z$. Then the Whitney sum $V_{p,q}\oplus\RR_Z$ is a
Real spin$^c$
vector bundle over $Z$ of type $(p+1,q)$, with the trivial involution on
the trivial real line bundle $\RR_Z$ and bundle projection $\lambda$. Fixing a representative
within the $\ZZ_2$-homotopy class of Real
$Spin^c(\RR^{p+1,q})$-bundles $ \widehat{\sf F}(V_{p,q}\oplus\RR_Z)$ over $Z$, we define a
$\ZZ_2$-invariant metric on $V_{p,q}\oplus \RR_Z$. Let ${Z}_{{p,q}}$ be the unit sphere bundle of
$V_{p,q}\oplus\RR_Z$; it is Real spin$^c$ bordant to any other sphere bundle
defined by choosing a different representative of the $\ZZ_2$-homotopy
class. The Real manifold ${Z}_{{p,q}}$ may be described explicitly as the
fibre bundle
$$
{Z}_{{p,q}}= \widehat{\sf F}(V_{p,q}\oplus\RR_Z)\times_{Spin^c(\RR^{p+1,q})}S^{p+1,q}
$$
over $Z$ with a Real structure commuting with $\tau$ and projection $\rho_{p,q}$; here $Spin^c(\RR^{p+1,q})$ acts on the
Real sphere $S^{p+1,q}$ by projection to its isometry group $SO(\RR^{p+1,q})$. The tangent bundle of
$V_{p,q}\oplus\RR_Z$ sits in a split exact sequence
$$
0 \ \longrightarrow \ \lambda^{-1}(V_{p,q}\oplus\RR_Z) \ \longrightarrow \
T(V_{p,q}\oplus\RR_Z) \ \longrightarrow \ \lambda^{-1}(TZ) \ \longrightarrow \ 0
$$
and upon choosing a splitting we can identify the tangent bundle
$$
TZ_{p,q} \simeq \rho_{p,q}^{-1}(TZ) \ \oplus \ \big(\, \widehat{\sf
  F}(V_{p,q}\oplus\RR_Z) \times_{Spin^c(\RR^{p+1,q})} TS^{p+1,q} \, \big) \ .
$$
It follows that the Real spin$^c$ structures on $TZ$ and $V_{p,q}$ naturally
induce a Real spin$^c$ structure on $TZ_{p,q}$, so $Z_{p,q}$ is a compact Real
spin$^c$ manifold.  There are two special instances
of this construction that we are interested in, which will
respectively implement the periodicities of complex and real $K$-theory. 

Firstly, consider the case $(p,q)=(k,k)$ for $k\geq 1$. As a Real space
$\RR^{k,k}\simeq\CC^k$ with the involution given by complex
conjugation, and
$C\ell(\RR^{k,k})\simeq\CC\ell(2k)=\CC\ell^+(2k)\oplus\CC\ell^-(2k)$
is the complex Clifford algebra with its natural $\ZZ_2$-grading. The group
$Spin^c(2k)$ has two irreducible half-spin representations
$\Delta_{k,k}^{\pm}$ of equal
dimension $2^{k-1}$, and the
associated bundles of half-spinors
$S_{k,k}^\pm:=\widehat{\sf F}(TS^{k+1,k})\times_{Spin^c(\RR^{k,k})}
\Delta_{k,k}^\pm$ on $S^{k+1,k}$ are Real vector
bundles. By
the Atiyah--Bott--Shapiro construction, the dual of the positive
spinor bundle $(S_{k,k}^+)^*$ together with the trivial line bundle
generate $\KR(S^{k+1,k})$~\cite{Ati,Lawsonbook}; for $k=1$ this is
essentially the Bott generator constructed from the Hopf 
bundle $H\to S^2=\CC P^1$ with its natural Real structure induced
by complex conjugation, see Example~\ref{ex:S2}~(c).

Secondly, let $(p,q)=(8k,0)$ for $k\geq 1$. Then $\RR^{8k,0}\simeq\RR^{8k}$ is endowed
with the trivial involution and $C\ell(\RR^{8k,0})\simeq C\ell(8k)$ is
a real Clifford algebra. The group
$Spin(8k)$ has two irreducible real half-spin representations
$\Delta_{8k,0}^{\pm}$ of equal
dimension $2^{4k-1}$, and the
associated bundles of half-spinors $S_{8k,0}^\pm:=\widehat{\sf F}(TS^{8k})\times_{Spin(8k)} \Delta_{8k,0}^\pm$ on $S^{8k}$ are real
vector bundles. Again by
the Atiyah--Bott--Shapiro construction, the dual of the positive
spinor bundle $(S_{8k,0}^+)^*$ together with the trivial line bundle
generate $\KR(S^{8k})\simeq \KO(S^{8k})$. 

In both of these instances, the bundle
$$
{\sf S}_{p,q}:= \widehat{\sf
  F}(V_{p,q}\oplus\RR_Z)\times_{Spin^c(\RR^{p+1,q})} \big( S_{p,q}^+\big)^*
$$
is a Real vector bundle over $Z_{p,q}$.

\begin{definition} \label{Realvb}
Let $(P,Y)$ be a Real bundle gerbe. Let
$(Z,E,f)$ be a bundle gerbe $\KR$-cycle and let $V_{p,q}\to Z$ be a Real
spin$^c$ vector bundle of type $(p,q)$. Let
$\tilde\pi_{p,q} \colon (f\circ\rho_{p,q})^{-1}(Y)\to Z_{p,q}$ be the pullback of the
surjective submersion $Y\to M$ to $Z_{p,q}$, and let $\tilde\rho_{p,q} \colon (f\circ
\rho_{p,q})^{-1}(Y)\to f^{-1}(Y)$ be the induced projection. Then the
\emph{Real vector bundle modification} of $(Z,E,f)$ by $V_{p,q}$ is the bundle gerbe $\KR$-cycle
$$
(Z,E,f)_{p,q}:= \big(Z_{p,q}\,,\,\tilde\rho_{p,q}^{-1}(E)\otimes
\tilde\pi_{p,q}^{-1}({\sf S}_{p,q}) \,,\,f\circ\rho_{p,q}\big)
$$
for $(p,q)=(k,k)$ and $(p,q)=(8k,0)$ with $k\geq 1$, which we respectively call the
\emph{complex} and \emph{real} modifications.
\end{definition}

The \emph{$\KR$-homology group} $\KR^{\bg}_\ast(M,P)$ of the Real
bundle gerbe $(P,Y)$ is defined to be the abelian group obtained by quotienting
$\RCyc(P,Y)$ by the equivalence relation $\sim$ generated by the
disjoint union/direct sum relation, that is $(Z,E_1,f)\amalg
(Z,E_2,f)\sim (Z,E_1\oplus E_2,f)$, Real spin$^c$ bordism, and Real
vector bundle modification. The homology class of a
bundle gerbe $\KR$-cycle $(Z,E,f)\in\RCyc(P,Y)$ is denoted $[Z,E,f]\in
\KR^{\bg}_\ast(M,P)$. The group operation is induced by disjoint union of
bundle gerbe $\KR$-cycles.
The identity element of the group $\KR^{\bg}_\ast(M,P)$ is represented by
$[\emptyset,\emptyset,\emptyset]$, or more generally by any null bordant $\KR$-cycle
$\partial [W, E, f]$, see Definition~\ref{def:bordism}. 
Inverses are induced by taking opposite Real spin$^c$
structures, that is $-[Z,E,f]:=[-Z,E,f]$; this follows from the Real spin$^c$ bordism
$(Z,E,f)\amalg (-Z,E,f)=\partial(Z\times[0,1],\pi_Z^{-1}(E),f\circ\pi_Z)$
with the trivial involution on $[0,1]$ and $\pi_Z \colon Z\times[0,1]\to Z$
the projection.

By construction, the equivalence relation on $\RCyc(P,Y)$
preserves the type $(p,q)$ of the Real spin$^c$ structure on $Z$
$\textrm{mod}~(1,1)$ and the dimension of $Z$ $\textrm{mod}~8$ in bundle gerbe $\KR$-cycles
$(Z,E,f)$, so one can define the subgroups $\KR^{\bg}_{p,q}(M,P)$
 consisting of classes of bundle gerbe $\KR$-cycles
$(Z,E,f)$ for which all connected components of $Z$ carry Real
spin$^c$ structures of type~$(p,q)~\textrm{mod}~(1,1)$ and are
of dimension $ n=p+q~\textrm{mod}~8$.
Then the abelian group
$$
\KR^{\bg}_\ast(M,P) = \bigoplus _{n=0}^7 \,\KR^{\bg}_{n}(M,P) 
$$
has a natural $\ZZ_8$-grading, where $\KR^{\bg}_n(M,P):= \KR^{\bg}_{0,n}(M,P)$.

\begin{lemma}\label{Kclass}
The homology class of a bundle gerbe $\KR$-cycle $(Z,E,f)$ depends only
  on the class of $E$ in $\KR_{\bg}(Z,f^{-1}(P^*))$.
\label{lem:KRclassdep}\end{lemma}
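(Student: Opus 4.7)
My plan is to unpack the statement into an assertion about actual (not virtual) Real bundle gerbe modules and then reduce to the direct sum relation that is built into the definition of $\KR^{\bg}_\ast(M,P)$. Concretely, it suffices to show that if $(Z,E,f)$ and $(Z,E',f)$ are bundle gerbe $\KR$-cycles (sharing the same $Z$ and $f$) and the modules $E$ and $E'$ represent the same class in $\KR_{\bg}(Z,f^{-1}(P^*))$, then $[Z,E,f]=[Z,E',f]$ in $\KR^{\bg}_\ast(M,P)$.

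The first step is to recall from the construction of the Grothendieck group (applied to the semi-group $\Rmod(f^{-1}(P^*),f^{-1}(Y))$ from Proposition~\ref{modules}) that the equality $[E]=[E']$ in $\KR_{\bg}(Z,f^{-1}(P^*))$ means precisely that there exists a Real bundle gerbe module $F$ for $f^{-1}(P^*,Y)$ such that
$$
E\oplus F \simeq E'\oplus F
$$
as Real bundle gerbe modules, where the isomorphism is required to intertwine the actions of $f^{-1}(P^*,Y)$ and the Real structures. Note that such an $F$ certainly exists, since by Remark~\ref{RealSS} every Real bundle gerbe module embeds as a direct summand of a trivial one.

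The second step is to invoke the disjoint union/direct sum relation in the equivalence relation on $\RCyc(P,Y)$, namely $(Z,E_1,f)\amalg (Z,E_2,f) \sim (Z,E_1\oplus E_2,f)$, which yields the identity
$$
[Z,E_1\oplus E_2,f] = [Z,E_1,f] + [Z,E_2,f]
$$
in $\KR^{\bg}_\ast(M,P)$. Applying this twice, together with the fact that isomorphic Real bundle gerbe modules produce isomorphic (hence identical) classes of bundle gerbe $\KR$-cycles, gives the chain
$$
[Z,E,f] + [Z,F,f] = [Z,E\oplus F,f] = [Z,E'\oplus F,f] = [Z,E',f] + [Z,F,f] \ ,
$$
and cancelling $[Z,F,f]$ in the abelian group $\KR^{\bg}_\ast(M,P)$ yields $[Z,E,f]=[Z,E',f]$, as desired.

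There is no real obstacle here: the argument is a direct unpacking of the Grothendieck group relation against the disjoint union/direct sum relation built into the definition of $\KR^{\bg}_\ast(M,P)$. The only point worth emphasising is that the isomorphism $E\oplus F\simeq E'\oplus F$ must respect both the bundle gerbe action and the Real structures, which is exactly what the semi-group $\Rmod(f^{-1}(P^*),f^{-1}(Y))$ encodes, so the step from $\KR_{\bg}$-equality to cycle-level isomorphism is automatic.
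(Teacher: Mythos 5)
Your proof is correct and follows essentially the same argument as the paper: equality in the Grothendieck group $\KR_{\bg}(Z,f^{-1}(P^*))$ gives a stable isomorphism $E\oplus F\simeq E'\oplus F$ of Real bundle gerbe modules, and the disjoint union/direct sum relation together with cancellation in the abelian group $\KR^{\bg}_\ast(M,P)$ finishes the proof. (The appeal to Remark~\ref{RealSS} is unnecessary, since the existence of $F$ is already part of the definition of equality in the Grothendieck group, but this does not affect correctness.)
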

\begin{proof}
Let $[E]$ denote the class of $E$ in $\KR_{\bg}(Z,f^{-1}(P^*))$ and suppose
that $[E]=[F]$ for another Real bundle gerbe module $F$. Then there exists
a Real bundle gerbe module $G$ such that $E\oplus G\simeq F\oplus
G$. Passing to equivalence classes in
$\KR^{\bg}_\ast(M,P)$ using the disjoint union/direct sum relation gives
$[Z,E,f]+[Z,G,f]=[Z,F,f]+[Z,G,f]$, and so
$[Z,E,f]=[Z,F,f]$ in $\KR^{\bg}_\ast(M,P)$.
\end{proof}

\begin{remark}\label{rem:isoKR}
Lemma~\ref{lem:KRclassdep} implies that any Real stable isomorphism
$(P,Y)\to (Q,X)$ induces a canonical isomorphism $\KR^{\bg}_\ast(M,P)\simeq \KR^{\bg}_\ast(M,Q)$, and in particular
the isomorphism class of the abelian group
$\KR^{\bg}_\ast(M,P)$ depends only on the Real Dixmier--Douady class of $P$. As in Remark \ref{rem:K notation}, when the bundle gerbe $P$ with class $[H]$ is understood, we write $\KR_\ast^{\bg}(M, [H])$. 
Since $\KR_{\bg}(Z,f^{-1}(P^*))\simeq \KR(Z)$ for any trivialisable Real bundle gerbe $(P,Y)$, our formalism includes also a definition of geometric $\KR$-homology in the untwisted case in terms of Real vector bundles. Moreover, we may use it to define an isomorphic version of bundle gerbe $\KR$-homology wherein
the Real bundle gerbe module $E$ is replaced by a class
$\xi\in \KR_{\bg}(Z,f^{-1}(P^*))$. Representing $\xi=[E]-[F]$ by two Real
bundle gerbe modules, we get a well-defined element
$[Z,\xi,f]\in \KR^{\bg}_\ast(M,P)$ by setting
$$
[Z,\xi,f]:=[Z,E,f]-[Z,F,f] \ .
$$
Conversely, there is a map $[Z,E,f]\mapsto[Z,[E],f]$.
\end{remark}

The functor $\KR^{\bg}_\ast$ is defined to be the $\ZZ_8$-graded covariant functor from
the category of pairs of Real manifolds with Real bundle gerbes to the category of abelian groups defined
on objects by $(M,P) \mapsto \KR^{\bg}_\ast(M,P)$ and on equivariant continuous maps
$\phi \colon (M,\phi^{-1}(Q))\to (N,Q)$ by the induced homomorphism of $\ZZ_8$-graded abelian groups
$$
\KR^{\bg}_\ast(\phi):=\phi_*\,\colon\, \KR^{\bg}_\ast(M,\phi^{-1}(Q)) \ \longrightarrow \
\KR^{\bg}_\ast(N, Q)
$$
with
$$
 \phi_*[Z,E,f]:= [Z,E,\phi\circ f] \ .
$$
Note that this transformation is well-defined and functorial; one has
$({\rm id}_{(M,P)})_*={\rm id}_{\KR^{\bg}_\ast(M,P)}$ and
$(\phi\circ\psi)_*=\phi_*\circ\psi_*$, and since Real bundle gerbe
modules over $Z$ extend to Real bundle gerbe modules over
$Z\times[0,1]$, it follows by Real spin$^c$ bordism that
induced homomorphisms depend only on their $\ZZ_2$-homotopy classes. By restricting our definitions to the category of manifolds with real bundle gerbes, our formalism also includes a definition of bundle gerbe $\KO$-homology.

Using the $\KR(Z)$-module structure of the bundle gerbe $\KR$-theory groups
$\KR_{\bg}(Z,f^{-1}(P^*))$, we can endow the bundle gerbe $\KR$-homology group $\KR^{\bg}_\ast(M,P)$ with the structure of a module over the 
$\KR$-theory ring
$\KR(M)$. We define the
$\ZZ_8$-graded left action
$$
\KR(M)\otimes \KR^{\bg}_\ast(M,P) \ \longrightarrow
\ \KR^{\bg}_\ast(M,P)
$$
which is given for any Real vector bundle $F\to M$ and any bundle gerbe $\KR$-cycle
class $[Z,E,f]\in \KR^{\bg}_{p,q}(M,P)$ by
$$
[F] \cdot [Z,E,f]:= [Z,(f\circ \tilde \pi)^{-1}(F)\otimes E,f]
$$
and extended linearly, where $\tilde \pi \colon f^{-1}(Y) \to Z$ is the pullback of the surjective submersion. 

If $(M_1,(P_1,Y_1))$ and $(M_2,(P_2,Y_2))$ are Real spaces endowed with Real
bundle gerbes, then the \emph{exterior product}
\begin{eqnarray*}
\KR^{\bg}_{p_1,q_1}(M_1,P_1)\otimes
\KR^{\bg}_{p_2,q_2}(M_2,P_2) \longrightarrow
\KR^{\bg}_{p_1+p_2,q_1+q_2}(M_1\times M_2,P_1\otimes P_2)
\end{eqnarray*}
is defined on $[Z_1,E_1,f_1]\in
\KR^{\bg}_{p_1,q_1}(M_1,P_1)$ and $[Z_2,E_2,f_2]\in
\KR^{\bg}_{p_2,q_2}(M_2,P_2)$ by
$$
[Z_1,E_1,f_1]\otimes [Z_2,E_2,f_2]:= [Z_1\times Z_2,E_1\otimes E_2, (f_1,f_2)] \ ,
$$
where $Z_1\times Z_2$ has the product Real $(p_1+p_2,q_1+q_2)$-spin$^c$
structure uniquely induced by the Real $(p_1,q_1)$ and $(p_2,q_2)$
spin$^c$ structures on $Z_1$ and $Z_2$, respectively
(cf. Lemma~\ref{lem:Whitneyspin}), and here $E_1\otimes E_2$
is the Real $f_1^{-1}(P_1^*)\otimes f_2^{-1}(P_2^*)$-bundle gerbe module with fibres $(E_1\otimes E_2)_{(y_1,y_2)}=(E_1)_{y_1}\otimes (E_2)_{y_2}$ for
$(y_1,y_2)\in f_1^{-1}(Y_1)\times f_2^{-1}(Y_2)$. This product is natural with
respect to continuous equivariant maps.

\subsection{Twisted $\KR$-homology}

We shall now review the constructions of twisted 
$\KR$-homology groups, which were defined in~\cite{Fok2015,Mou} using a Real version of Kasparov's
 $\KK$-theory.

\begin{definition} Let $A$ be a separable $\mathbb{Z}_2$-graded
  $C^*$-algebra.  A \emph{Real structure} on $A$ is an anti-linear,
  degree~$0$ involutive $*$-automorphism $\sigma$; the pair
  $(A,\sigma)$ is called a \emph{Real $\ZZ_2$-graded
    $C^*$-algebra}. An \emph{equivariant graded homomorphism} $A \to
  B$  is a grading preserving $*$-homomorphism that intertwines the Real structures.
\end{definition}		 

 If $A$ is a Real ungraded $C^*$-algebra, then we assign the trivial $\ZZ_2$-grading with $A$ as its even part and $0$ as its odd part.
  
\begin{example} Let $\cH$ be a separable $\ZZ_2$-graded Hilbert space
  equipped with an anti-linear, degree~$0$ involution $\tau_\cH$. The $\mathbb{Z}_2$-graded $C^*$-algebra $B(\cH)$  of bounded linear operators on $\cH$ inherits a Real structure $\sigma$  defined by
$$\sigma(T) = \tau_\cH\circ T \circ \tau_\cH \ ,$$ 
for all $T\in B(\cH)$. This further induces a Real structure on the
two-sided $*$-ideal of compact operators $\cK(\cH)$. Let
$B(\cH)^\sigma$ denote the fixed point set of the involution $\sigma$,
that is the set of operators which commute with $\tau_\cH$.
\end{example}	

\begin{example} Let $(M,\tau)$ be a Real manifold. Then the separable $C^*$-algebra $\cC(M)$ of continuous complex-valued functions $f\colon M\to \CC$ vanishing at infinity has an induced Real structure given by $\sigma(f)(m) = \overline{f(\tau(m))}$.
\end{example}		

\begin{definition}\label{FredMod}
	Let $A$ be a Real separable $\mathbb{Z}_2$-graded $C^*$-algebra. A \emph{$(p,q)$-graded Real Fredholm module} over $A$ is a triple $(\rho, \cH, F)$ where
	\begin{enumerate}
		\item $\cH$ is a Real $\ZZ_2$-graded separable Hilbert
                  space which is a
$*$-module  over the Real Clifford algebra $C\ell(\RR^{p,q})$ whose
generators are skew-adjoint operators of odd degree in $B(\cH)^\sigma$;
		\item $\rho\colon A\to B(\cH)$ is a Real graded
                  representation that commutes with the
                  $C\ell(\RR^{p,q})$-action; and
		\item $F\in B(\cH)^\sigma$ is a bounded operator of odd degree which commutes with the  $C\ell(\RR^{p,q})$-action and satisfies
		\[(F^2-1)\rho(a) \ , \ (F-F^*)\rho(a) \ ,\  [F,
                \rho(a)] \ \in \ \cK(\cH)\]
		for all $a\in A$.
	\end{enumerate}
\end{definition}

Let ${\rm RFMod}_{p,q}(A)$ denote the set of all $(p,q)$-graded Real
Fredholm modules over $A$. The direct sum of two Real Fredholm
modules  $(\rho, \cH, F)$ and  $(\rho', \cH', F'\, )$ is the Real
Fredholm module $(\rho\oplus \rho', \cH\oplus \cH', F\oplus F'\, )$
and $(0,0,0)$ is the zero module. We introduce an equivalence relation
$\sim$ on the semi-group $({\rm RFMod}_{p,q}(A), \oplus)$ generated by the relations:
\begin{enumerate}
		\item[$(i)$] \emph{Real unitary equivalence:} $(\rho,
                  \mathcal{H}, F)\sim (\rho', \mathcal{H}', F'\, )$
                  if and only if there is a degree preserving unitary
                  isomorphism $U\colon \mathcal{H}'\to \mathcal{H}$
                  that intertwines with the $C\ell(\RR^{p,q})$ generators and the Real structures, and satisfies
		\[(\rho', \mathcal{H}', F'\, )=(U^*\, \rho\, U,
                \mathcal{H}', U^*\, F\, U) \ .\]
		\item[$(ii)$] \emph{Real homotopy equivalence:}
                  $(\rho, \mathcal{H}, F)\sim (\rho', \mathcal{H}',
                  F'\, )$ if and only if there exists a norm
                  continuous function $t\mapsto F_t$ such that
                  $(\rho_t, \mathcal{H}_t, F_t)$ is a Real Fredholm
                  module for all $t\in[0, 1]$ with  $F_0=F$ and $F_1=F'$.
	\end{enumerate}

The \emph{$\KR$-homology group} of a Real separable
$\mathbb{Z}_2$-graded $C^*$-algebra $A$ is the free abelian group
$\KR^{p,q}(A)$ generated by ${\rm RFMod}_{p,q}(A)/\sim$ modulo
the relation $[x_0\oplus x_1]=[x_0]+[x_1]$ where $[x_0], [x_1] \in
{\rm RFMod}_{p,q}(A)/\sim$.
Equivalently, we could have defined $\KR^{p,q}(A) := \KR(A \hat \otimes C\ell(\RR^{p,q}))$ where the $(1, 1)$-periodicity is more discernible. 
The inverse of a class in $\KR^{p,q}(A)$ represented by the module
$(\rho, \mathcal{H}, F)$ is given by  $(\rho, \mathcal{H}^{\op}, -F)$,
where $\cH^{\op}$ is the Hilbert space $\cH$ with the opposite
$\ZZ_2$-grading, opposite Real structure and where the Clifford
algebra generators reverse their signs. The zero element in
$\KR^{p,q}(A)$ is represented by \emph{degenerate} Real Fredholm
modules, that is those for which the three operators listed in item (3) of Definition \ref{FredMod} are identically zero in $\cK(\cH)$.
For a Real manifold $M$ we define its {$\KR$-homology groups} by
$$\KR_{p,q}(M) := \KR^{p,q}(\cC(M)) \ . $$
As usual this is $(1,1)$-periodic in $(p,q)$, so that
$\KR_{p,q}(M)\simeq \KR_{q-p}(M)$, and $8$-periodic in $(0,q)$.

Recall that a Real $PU(\cH)$-bundle over $M$ is a principal
$PU(\cH)$-bundle $\mathcal{P}$ with a Real structure
$\tau_{\mathcal{P}}$ that commutes with the involution $\tau$ on $M$ and is
compatible with the right $PU(\cH)$-action, that is $\tau_{\mathcal{P}}(p\, g) = \tau_{\mathcal{P}}(p)\, \sigma(g)$, where $\sigma$ is the anti-linear involution on $PU(\cH)$ induced by complex conjugation on $\cH$. From Proposition \ref{prop:PU(H)}, we know that Real $PU(\cH)$-bundles are classified up to isomorphism by their Real Dixmier--Douady class $\DD_R(\mathcal{P})\in H^2(M; \ZZ_2; \overline{\cU(1)})$.  The Real projective unitary group $PU(\cH)$ acts by automorphisms on the Real elementary $C^*$-algebra $\cK(\cH)$ and the associated bundle 
$$
\cA = \mathcal P\times_{PU(\cH)} \cK(\cH)
$$
is called a \emph{Real Dixmier--Douady bundle}. It is a locally trivial $\cK(\cH)$-bundle with an induced involution that maps fibre to fibre anti-linearly. 
The opposite Real Dixmier--Douady bundle $\cA^{\op}$ is obtained by replacing each fiber $\cA_m$ by the opposite Real $C^*$-algebra $\cA^{\op}_m$, so in particular $\DD_R(\cA^{\op}) = - \DD_R(\cA)$. 

A \emph{Real spinor bundle} for $\cA$ is a Real bundle of Hilbert spaces $\mathcal S$ on $M$ such that $\mathcal{A}$ is isomorphic to $\mathcal{K}(\mathcal{S})$.  Two Real Dixmier--Douady bundles $\cA_1$ and $\cA_2$ are \emph{Morita isomorphic} if $\mathcal{A}_1\hat \otimes \mathcal{A}_2^{\op}$ admits a Real spinor bundle. 
Morita isomorphism is the appropriate notion of stable isomorphism for Real Dixmier--Douady bundles and we have
\begin{proposition}[\cite{Fok2015}]
	Real Dixmier--Douady bundles over $M$ are classified up to Morita isomorphisms by their  Real Dixmier--Douady class $\DD_R(\cA) \in H^2(M; \ZZ_2, \overline{\cU(1)})$.
\end{proposition}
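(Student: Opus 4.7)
The plan is to reduce the classification to Proposition~\ref{prop:PU(H)}, which tells us that Real $PU(\cH)$-bundles are classified up to isomorphism by their Real Dixmier--Douady class in $H^2(M; \ZZ_2, \overline{\cU(1)})$. Since every Real Dixmier--Douady bundle $\cA$ arises by construction as $\mathcal{P}\times_{PU(\cH)}\cK(\cH)$ for some Real $PU(\cH)$-bundle $\mathcal{P}$, the task is to show that the Morita isomorphism relation on Real Dixmier--Douady bundles corresponds precisely to the isomorphism relation on the underlying Real $PU(\cH)$-bundles. This reduces to two independent statements: first, that $\DD_R$ is additive under graded tensor product and anti-additive under the opposite construction, that is $\DD_R(\cA_1\hat\otimes\cA_2)=\DD_R(\cA_1)+\DD_R(\cA_2)$ and $\DD_R(\cA^{\op})=-\DD_R(\cA)$; and second, that $\cA$ admits a Real spinor bundle if and only if $\DD_R(\cA)=0$.

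For additivity, I would mimic the argument establishing Proposition~\ref{P:additive}. Choose a good Real open cover $\{U_\a\}_{\a\in I}$ and split the index set so that $\a\mapsto\bar\a$ has no fixed points. Over $U_\a$ trivialise the underlying Real $PU(\cH_i)$-bundles $\cP_i$ of $\cA_i$, with Real cocycles $g^{(i)}_{\a\b\c}$ representing $\DD_R(\cA_i)$. The tensor product bundle has transition functions built from the tensor product representations on $\cK(\cH_1\hat\otimes\cH_2)\simeq\cK(\cH_1)\hat\otimes\cK(\cH_2)$, so its Real Dixmier--Douady cocycle is $g^{(1)}_{\a\b\c}\,g^{(2)}_{\a\b\c}$. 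For the opposite algebra, note that the Real $PU(\cH)$-action on $\cK(\cH)^{\op}$ is the conjugate of the action on $\cK(\cH)$, so its local Real cocycle is $\overline{g_{\a\b\c}}$, which in additive notation gives $-\DD_R(\cA)$.

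For the equivalence between triviality of $\DD_R(\cA)$ and the existence of a Real spinor bundle, the easy direction is that a Real spinor bundle $\mathcal{S}$ reduces the structure group of $\cP$ from $PU(\cH)$ to the identity via the Real $U(\cH)$-bundle of unitary frames of $\mathcal{S}$, so $\DD_R(\cA)=0$. Conversely, if $\DD_R(\cA)=0$ then by Proposition~\ref{prop:PU(H)} the Real $PU(\cH)$-bundle $\cP$ is Real isomorphic to $M\times PU(\cH)$; pulling back the tautological Real Hilbert space $\cH$ gives a Real spinor bundle $\mathcal{S}=M\times\cH$ for $\cA\simeq M\times\cK(\cH)$. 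Combining these, $\cA_1\hat\otimes\cA_2^{\op}$ admits a Real spinor bundle iff $\DD_R(\cA_1)-\DD_R(\cA_2)=0$, completing the proof.

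The main obstacle will be the careful bookkeeping in establishing additivity: one must verify that the Real structures on $\cK(\cH_1)\hat\otimes\cK(\cH_2)$ and $\cK(\cH)^{\op}$ are genuinely compatible with the associated Real cocycles, and in particular that the choice of a Real isomorphism $\cH_1\hat\otimes\cH_2\simeq\hat\cH$ does not introduce a sign ambiguity in the resulting cocycle. This is analogous to the subtleties entering the classification of Real bundle gerbes in Section~\ref{sec:RDD}, and is addressed there by using the Real cover property that $\a$ and $\bar\a$ are distinct indices, together with the identities $\tau_P^2=\id_P$. I expect the same technical device to suffice here.
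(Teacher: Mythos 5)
The paper does not actually prove this proposition: it is imported wholesale from Fok~\cite{Fok2015}, so there is no internal argument to compare against, and your proposal is best judged as a reconstruction of the standard Dixmier--Douady-type proof. As such it is essentially sound, and it follows the expected route: everything is funnelled through the classification of Real $PU(\cH)$-bundles (Proposition~\ref{prop:PU(H)}, itself cited from~\cite{Fok2015}), together with additivity of $\DD_R$ under $\hat\otimes$ and anti-additivity under the opposite (the \v{C}ech bookkeeping of Section~\ref{sec:RDD}, mirroring Proposition~\ref{P:additive}; the statement $\DD_R(\cA^{\op})=-\DD_R(\cA)$ is already asserted in the text), and the criterion that a Real spinor bundle exists if and only if $\DD_R$ vanishes; surjectivity onto $H^2(M;\ZZ_2,\overline{\cU(1)})$ is immediate from Proposition~\ref{prop:PU(H)}. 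Two steps need tightening rather than being gaps in the strategy. First, in your ``easy direction'' a Real spinor bundle $\cS$ with $\cA\simeq\cK(\cS)$ does not ``reduce the structure group of $\cP$ to the identity''; what it does is lift $\cP$ to the Real $U(\cH)$-bundle of unitary frames of $\cS$, and one then concludes $\DD_R(\cA)=0$ either because such a Real lift Real-trivialises the Real lifting bundle gerbe of $\cP$ (so Proposition~\ref{P:obstruction} applies) or by appealing to the vanishing of $H^1(M;\ZZ_2,\cU(\cH))$, a Real Kuiper-type statement which is precisely the non-formal input hidden inside the classification you cite. Second, you tacitly assume that $\cS$ is equivariantly locally trivial with typical Real fibre $\cH$ (so that its frame bundle really is a Real $U(\cH)$-bundle for the standard conjugation) and that there is a Real unitary identification $\cH\hat\otimes\cH\simeq\cH$ underlying the additivity computation; both hold for the $\ZZ_2$-stable $\cH$ used throughout, but they should be stated, since they are exactly where the sign/compatibility worries you raise at the end are resolved. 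With those points made explicit, your argument is complete modulo the inputs already attributed to~\cite{Fok2015}.
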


	Let $M$ be a Real manifold with a Real Dixmier--Douady bundle
        $\cA$ and let $\Gamma_M(\cA)$ denote the Real separable
        $C^*$-algebra of sections of $\cA$ vanishing at infinity. The
        \emph{twisted $\KR$-homology group} of the pair $(M,\cA)$ is defined by 
	\[\KR_{p,q}(M, \mathcal{A}) :=\KR^{p,q}(\Gamma_M(\cA)) \ .\]

A \emph{Morita morphism} between two Real Dixmier--Douady bundles $(M_1,\cA_1), (M_2,\cA_2)$ locally modelled on $\cK(\cH_1), \cK(\cH_2)$ consists of a pair
$$(f, \mathcal{E})\colon (M_1, \cA_1) \ \longrightarrow \ (M_2, \cA_2)$$
where $f\colon M_1 \to M_2$ is an equivariant proper smooth map and
$\mathcal{E}$ is a Real $(f^{-1}(\cA_2), \cA_1)$-bimodule, that is a
Real bundle of Hilbert spaces on $M_1$ which is a Hilbert
$f^{-1}(\cA_2)^{\op}\hat\otimes \cA_1$-module locally modelled on the
$(\cK(\cH_1),\cK(\cH_2))$-bimodule $\cK(\cH_1, \cH_2)$. A Morita
morphism exists if and only if $\DD_R(\cA_1)  = f^* \DD_R(\cA_2)$. Any
two Morita morphisms are related by a Real line bundle via $(f,
\mathcal{E}) \mapsto (f, \mathcal{E}\otimes L)$ where $L$ is
classified by its Real Chern class in $H^1(M_1;\ZZ_2,\overline{\cU(1)})$. A
trivialisation of $L$ is called a \emph{$2$-isomorphism} between the
Morita morphisms. Twisted $\KR$-homology is then a covariant 2-functor relative to Morita morphisms $(f, \mathcal{E})\colon (M_1, \cA_1)\to (M_2, \cA_2)$,
$$f_*\colon  \KR_*(M_1, \cA_1) \ \longrightarrow \ \KR_*(M_2, \cA_2) \ , $$
where the induced pushforward map $f_*$ depends only on the 2-isomorphism class of $(f, \mathcal{E})$, and the Real Picard group $H^1(M_1;\ZZ_2,\overline{\cU(1)})$ acts on $\KR$-homology by Morita automorphisms. 

The notion of Real Fredholm modules generalises straightforwardly to
Real Kasparov $(\cA,\cB)$-modules, by substituting $\cH$ in Definition
\ref{FredMod} with Real Hilbert $(A,B)$-bimodules, leading to
bivariant $\KKR$-theory; see \cite[Chapter~9]{Mou} for more details on
the construction of the $\KKR$-bifunctor via correspondences and the
Real Kasparov product. Twisted $\KR$-theory groups of a pair $(M,\cA)$ can thus be defined as 
	$$
	\KR^{p,q}(M, \mathcal{A}):=\KR_{p,q}(\Gamma_M(\cA)) = \KKR_{p,q}(\CC, \Gamma_M(\cA))
	$$
	where the Real structure on the $C^*$-algebra $\CC$ is given by complex conjugation. 
	We have

	\begin{proposition}
	\label{prop:K-isom2}
Let $\cP \to M$ be a Real $PU(\cH)$-bundle with torsion Real Dixmier-Douady class, $L_\cP$ the associated lifting bundle gerbe  and $\cA$ the associated Real Dixmier-Douady bundle. Then there is a natural isomorphism
		\begin{equation*}  \KR_{\bg}^{p,q}(M, L_\cP) \simeq
          \KR^{p,q}(M, \mathcal{A}) \ ,	\end{equation*}
sending Real bundle gerbe modules to $\Gamma_M(\cA)$-modules.
	\end{proposition}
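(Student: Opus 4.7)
The plan is to bootstrap from Theorem \ref{K-isom} together with a Real Serre--Swan type identification for twisted $C^*$-algebras, and then pass to general bigrading via a Real Kasparov suspension argument. Because the Real Dixmier--Douady class of $\cP$ is torsion, Theorem \ref{RSG} lets me replace $\cP$ by an isomorphic Real $PU(r)$-bundle for some $r$, and correspondingly replace $\cA$ by the Morita-isomorphic finite-dimensional Real Dixmier--Douady bundle $\cA_{\mathrm{fin}}=\cP\times_{PU(r)}M_r(\CC)$; Morita-isomorphic Real Dixmier--Douady bundles have Morita-equivalent section algebras, so neither side of the claimed isomorphism changes under this reduction.

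I then construct the map explicitly and verify it is an isomorphism in the $(0,0)$ case. A Real bundle gerbe module $E\to Y$ for $L_\cP$ descends under the bundle gerbe action to a Real $PU(r)$-equivariant Hermitian bundle $\widehat E$ on $\cP$, equivalently (as in the proof of Theorem \ref{K-isom}) a Real $U(r)$-equivariant Hermitian bundle of central character $1$. Its space of continuous sections is naturally a Real finitely generated projective module $\Gamma(\cP,\widehat E)$ over $\Gamma_M(\cA_{\mathrm{fin}})$ via pointwise matrix action, and the stabilisation induced by the embedding $M_r(\CC)\hookrightarrow\cK(\hat\cH)$ after fixing $\CC^r\otimes\hat\cH\simeq\hat\cH$ upgrades this to a countably generated Real Hilbert $\Gamma_M(\cA)$-module. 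Paired with the zero operator and trivial Clifford action this yields a class in $\KKR_{0,0}(\CC,\Gamma_M(\cA))=\KR^{0,0}(M,\cA)$, giving a homomorphism $\Phi\colon\KR_{\bg}(M,L_\cP)\to\KR^{0,0}(M,\cA)$. That $\Phi$ is an isomorphism follows by assembling: Theorem \ref{K-isom} identifies $\KR_{\bg}(M,L_\cP)$ with the $RR(U(1))$-weight-one summand $\KR_{U(r),(1)}(\cP)$; a Real equivariant Serre--Swan correspondence identifies this with the Grothendieck group of Real finitely generated projective modules over $\Gamma_M(\cA_{\mathrm{fin}})$; and $C^*$-algebraic Morita invariance of $\KR$-theory gives an isomorphism to $\KR(\Gamma_M(\cA))=\KR^{0,0}(M,\cA)$. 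Naturality in $M$ follows from the naturality of each intermediate step.

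To extend to general $(p,q)$, I apply the $(0,0)$ result to the Real manifold $M\times\RR^{p,q}$ and the pullback bundle gerbe $\pi_{p,q}^{-1}(L_\cP)$, using the definition $\KR_{\bg}^{p,q}(M,L_\cP)=\KR_{\bg}(M\times\RR^{p,q},\pi_{p,q}^{-1}(L_\cP))$. Combined with the identification $\Gamma(M\times\RR^{p,q},\pi_{p,q}^{-1}(\cA))\simeq\Gamma_M(\cA)\,\hat\otimes\,C_0(\RR^{p,q})$ and the standard Real Kasparov suspension isomorphism $\KKR_{0,0}(\CC,A\,\hat\otimes\,C_0(\RR^{p,q}))\simeq\KKR_{p,q}(\CC,A)$ with $A=\Gamma_M(\cA)$, this yields the claimed natural isomorphism in arbitrary bigrading.

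The main technical obstacle is establishing the Real equivariant Serre--Swan correspondence rigorously, that is verifying that $\widehat E\mapsto\Gamma(\cP,\widehat E)$ furnishes an equivalence between the additive category of Real $U(r)$-equivariant Hermitian bundles of central character $1$ on $\cP$ and that of Real finitely generated projective modules over $\Gamma_M(\cA_{\mathrm{fin}})$, compatibly with the anti-linear involutions, the weight-one condition, and the stabilisation to $\cK(\hat\cH)$. Once this is in place, Theorem \ref{K-isom}, Morita invariance and Kasparov suspension combine formally to give the result.
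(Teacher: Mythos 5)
Your route is genuinely different from the paper's. The paper's own proof is a short assembly: it invokes Proposition \ref{prop:PU(H)} and Theorem \ref{K-isom}, which identify $\KR_{\bg}(M,L_\cP)$ with the Fredholm-section model $[\cP,\Fred^{(0)}(\hat\cH)]_{PU(\hat\cH)\rtimes\ZZ_2}$, together with the bigraded discussion in terms of $C\ell(\RR^{p,q})$-linear Fredholm operators, leaving the comparison with the Kasparov-theoretic group $\KKR_{p,q}(\CC,\Gamma_M(\cA))$ implicit. You instead work entirely inside operator $K$-theory: reduce via Theorem \ref{RSG} to a finite-dimensional algebra bundle $\cA_{\rm fin}$ (strictly, you replace $\cP$ by a Real $PU(r)$-bundle with the same Real Dixmier--Douady class, not an isomorphic bundle), pass from Real bundle gerbe modules to Real finitely generated projective modules over $\Gamma_M(\cA_{\rm fin})$ by a Real equivariant Serre--Swan argument, use $C^*$-Morita invariance to reach $\Gamma_M(\cA)$, and treat the bigrading by Kasparov suspension ($C_0(\RR^{p,q})$ versus $C\ell(\RR^{p,q})$) rather than by Clifford-linear Fredholm models. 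This buys an explicit cycle-level map into $\KKR$, at the cost of proving the Serre--Swan equivalence and tracking Clifford/suspension conventions; note also that your $(0,0)$ statement is established for compact $M$, so applying it to the noncompact space $M\times\RR^{p,q}$ requires the compactly supported (admissible-section) formulation --- a point the paper likewise treats only implicitly.

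One step fails as literally written and must be repaired. For a Real $U(r)$-equivariant Hermitian bundle $\widehat E\to\cP$ of central character $1$, the space $\Gamma(\cP,\widehat E)$ carries no natural ``pointwise matrix action'' of $\Gamma_M(\cA_{\rm fin})$: the fibres of $\widehat E$ are not $M_r(\CC)$-modules, and since the central $U(1)$ acts trivially on $\cP$ but by weight one on the fibres, the $U(r)$-invariant part of $\Gamma(\cP,\widehat E)$ is zero, so naive descent produces nothing. The standard fix is to tensor with the conjugate defining representation: $\widehat E\otimes\overline{\CC^r}$ has central character $0$, hence descends to a vector bundle $F$ on $M$, and the fibrewise action of $\End(\,\overline{\CC^r}\,)\simeq M_r(\CC)$ makes $F$ a bundle of modules over $\cA_{\rm fin}=\cP\times_{PU(r)}M_r(\CC)$ (equivalently $F=\mathrm{Hom}(V,E'\,)$, where $V$ is the tautological rank-$r$ twisted bundle determined by $\cP$ and $E'$ is the twisted bundle corresponding to $\widehat E$); then $\Gamma_M(F)$ is the Real finitely generated projective $\Gamma_M(\cA_{\rm fin})$-module you want, compatibly with the Real structures. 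With this correction your Serre--Swan step becomes the standard Morita-type equivalence, and the rest of your argument (Theorem \ref{K-isom}, Morita invariance, suspension) goes through.
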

	\begin{proof}
This follows from Proposition~\ref{prop:PU(H)}, Theorem~\ref{K-isom} and the subsequent discussion on extension to  bigraded groups.
\end{proof}

Let $V$ be a Real $(p,q)$-oriented vector bundle on $M$ and recall
that a $\KR$-orientation of type $(p,q)$ on $V$ corresponds to a lift of the frame bundle ${\sf F}(V)$ to a Real
$Spin^c(\RR^{p,q})$-bundle $ \widehat{\sf F}(V)$. The obstruction to
$\KR$-orientability can be equivalently characterised by the Clifford
bundle $C\ell(V)$: this is a Real Dixmier--Douady bundle and $V$ is
$\KR$-oriented if and only if $C\ell(V)$ admits a Real spinor
bundle, that is it is Morita trivial. In analogy with the complex
case, if the tangent bundle $TM$ is Real $(p, q)$-oriented, then
$(\cC(M), \Gamma_M(C\ell(TM)))$ is a Poincar\'e duality pair; that
is there exists a $\KR$-homology fundamental class $[M]\in \KR_{p,q}(M, C\ell(TM))$ which implements the Poincar\'e duality isomorphism 
$$\mathtt{PD}_M\colon \KR^{r, s}(M, \cA) \ \longrightarrow \ \KR_{p-r,
  q-s}(M, \cA^{\op}\hat \otimes C\ell(TM)) \ , \ \ [E] \longmapsto [E]\cap [M]
\ .$$
Poincar\'e duality in twisted $\KR$-theory can be proven along the
same lines as in \cite{Tu, EEK}, but using instead the framework of
$\KKR$-theory and Real Dixmier--Douady bundles. In particular, the cap
product $\cap$ corresponds to Kasparov product with $[M]$. 
\begin{remark}
In the case that
$M$ is a Real spin$^c$ manifold of type $(p,q)$, its fundamental class
$[M]\in\KR_{p,q}(M)$ can be represented by the (unbounded) $(p,q)$-graded
Real Fredholm module $(\rho,\cH,T)$, where $\cH$ is the Hilbert space
of $L^2$-sections of the Real spinor bundle $S_{p,q}=\widehat{\sf
  F}(TM)\times_{Spin^c(\RR^{p,q})} C\ell(\RR^{p,q})$,
$\rho$ is the natural module action of $\cC(M)$ on $\cH$ by
multiplication, and $T$ is the corresponding Dirac operator; in the
untwisted case Poincar\'e duality maps the class of a Real vector
bundle $E\to M$ to the Fredholm module obtained as above with the
spinor bundle replaced by $S_{p,q}\otimes E$ and $T$ the corresponding
twisted Dirac operator.
\end{remark}

Let $(M_1,\cA_1), (M_2,\cA_2)$ be pairs of Real manifolds with Real Dixmier--Douady bundles, and assume that $TM_1$ is Real $(p_1, q_1)$-oriented and $TM_2$ is Real $(p_2, q_2)$-oriented. 

\begin{definition}\label{gysinmap}
For any Morita morphism $(f, \mathcal{E})\colon (M_1, \cA_1)\to (M_2, \cA_2)$, the \emph{Gysin homomorphism} in twisted $\KR$-theory is the unique group homomorphism 
$$f_!\colon  \KR^{r, s}(M_1, \cA_1\hat\otimes C\ell(TM_1)) \
\longrightarrow \ \KR^{r+p_2-p_1,s+q_2-q_1}(M_2, \cA_2\hat\otimes  C\ell(TM_2) ) $$ defined by declaring the diagram 
\begin{equation}\label{gysin}
\xymatrix{
\KR^{r, s}(M_1, \cA_1\hat\otimes C\ell(TM_1)) \ \ar[d]_{\mathtt{PD}_{M_1}} \ar[r]^{\hspace{-0.9cm}f_!}  & \  \KR^{r+p_2-p_1,s+q_2-q_1}(M_2, \cA_2\hat\otimes   C\ell(TM_2) ) \ar[d]^{\mathtt{PD}_{M_2}} \\
\KR_{p_1-r, q_1-s}(M_1, \cA_1^{\op}) \ \ar[r]_{f_*} & \ \KR_{p_1-r, q_1-s}(M_2, \cA_2^{\op})
}
\end{equation}
to be commutative.
\end{definition}
\begin{remark} By construction the Gysin homomorphism is functorial, and in particular it depends only on the homotopy class of the map $f$.
\end{remark} 
We further have a corresponding Thom isomorphism in twisted $\KR$-theory.
\begin{proposition}[\cite{Mou}] Let $M$ be a Real manifold with Real
  Dixmier--Douady bundle $\cA$. If  $\pi\colon V\to M$ and $TM$ are Real $(p, q)$-oriented vector bundles, then there is an isomorphism of abelian groups 
$$\KR^{r+p, s+q}(M, \cA \hat\otimes C\ell(V))\simeq \KR^{r, s}(V,
\pi^{-1}(\cA)) \ .$$
\label{prop:Thomiso}\end{proposition}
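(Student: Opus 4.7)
The plan is to derive the theorem as a Real analogue of Kasparov's Thom isomorphism in twisted $\KK$-theory, using Poincar\'e duality applied on both the base $M$ and the total space $V$, together with the homotopy invariance of twisted $\KR$-homology. This strategy mirrors that of Tu \cite{Tu} and Echterhoff--Emerson--Kim \cite{EEK} in the complex case, adapted to the Real setting as developed by Moutuou \cite{Mou}.

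First, the hypotheses enable Poincar\'e duality on both $M$ and $V$: by Lemma~\ref{Realsum}, the splitting $TV \simeq \pi^{-1}(TM) \oplus \pi^{-1}(V)$ endows $TV$ with a canonical Real $(2p, 2q)$-orientation, with $C\ell(TV) \simeq \pi^{-1}(C\ell(TM) \hat\otimes C\ell(V))$. Applying $\mathtt{PD}_M$ to the left-hand side and $\mathtt{PD}_V$ to the right-hand side then converts the proposed isomorphism into an assertion about Real $\KR$-homology groups,
$$\KR^{r+p, s+q}(M, \cA \hat\otimes C\ell(V)) \ \simeq \ \KR_{-r, -s}\bigl(M, \cA^{\op} \hat\otimes C\ell(V)^{\op} \hat\otimes C\ell(TM)\bigr)$$
and
$$\KR^{r, s}(V, \pi^{-1}(\cA)) \ \simeq \ \KR_{2p-r, 2q-s}\bigl(V, \pi^{-1}(\cA^{\op} \hat\otimes C\ell(TM) \hat\otimes C\ell(V))\bigr) \ .$$

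Second, since $\pi \colon V \to M$ is an equivariant homotopy equivalence with inverse the zero section $s\colon M \hookrightarrow V$, its pushforward induces an isomorphism of twisted $\KR$-homology
$$\pi_*\colon \KR_*\bigl(V, \pi^{-1}(\mathcal{B})\bigr) \xrightarrow{\ \simeq \ } \KR_*(M, \mathcal{B})$$
for any Real Dixmier--Douady bundle $\mathcal{B}$ on $M$. Applied to the right-hand side above, this brings it down to an $\KR$-homology group on $M$ with coefficients twisted by $\cA^{\op} \hat\otimes C\ell(TM) \hat\otimes C\ell(V)$.

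Third, I match this to the Poincar\'e dual of the left-hand side by exploiting the $\KR$-orientation of $V$: the Real spinor bundle $\mathcal{S}_V = \widehat{\sf F}(V) \times_{Spin^c(\RR^{p,q})} \Delta_{p,q}$ implements a Morita equivalence $C\ell(V) \sim C\ell(\RR^{p,q})$ and correspondingly $C\ell(V)^{\op} \sim C\ell(\RR^{q,p})$ as Real Dixmier--Douady bundles over $M$. Combined with the Real $(1,1)$-periodicity $\KR_{a,b} \simeq \KR_{a+1, b+1}$ and the Morita triviality of $C\ell(\RR^{p,q}) \hat\otimes C\ell(\RR^{q,p})$, the bigraded indices $(-r,-s)$ and $(2p-r,2q-s)$ align, and the twists $C\ell(V)^{\op}$ and $C\ell(V)$ are identified. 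The main obstacle is precisely this last bookkeeping step: handling Clifford grade shifts, the opposite algebra $C\ell(V)^{\op}$, and the interplay of the $(1,1)$- and $8$-fold periodicities consistently. The sign ambiguities endemic to the Real Atiyah--Bott--Shapiro Clifford representation theory must be resolved coherently, but the required formal framework is developed in \cite{Mou, Fok2015}.
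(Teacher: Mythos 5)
The paper does not actually prove this proposition: it quotes it from Moutuou \cite{Mou}, so the only thing to assess is whether your argument could stand as a proof, and as written it cannot. The decisive problem is your second step. The projection $\pi\colon V\to M$ is not a proper map, so in the framework used here (pushforwards exist along Morita morphisms whose underlying map is proper) there is no map $\pi_*$ at all; and the claim behind it --- that an equivariant homotopy equivalence of locally compact spaces induces a degree-preserving isomorphism of twisted $\KR$-homology --- is false, because $\KR_{p,q}(X,\cB)=\KR^{p,q}(\Gamma_X(\cB))$ is built from sections vanishing at infinity, and $C_0(V)$ is not homotopy equivalent to $C_0(M)$ as a $C^*$-algebra. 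Already with trivial twist and trivial involution, $\KK_*(C_0(\RR^n),\CC)\simeq \KK_{*+n}(\CC,\CC)$, so the $K$-homology of the total space of a vector bundle agrees with that of the base only after a degree shift by the fibre dimension (and, in the twisted Real setting, a change of twist by $C\ell(V)$). That corrected statement --- that the proper zero-section inclusion induces an isomorphism onto the group with shifted bidegree and twist $C\ell(V)$ --- is precisely the homological form of the Thom isomorphism you are trying to prove, so invoking it here makes the argument circular.

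The third step has an independent gap: you use the Real spinor bundle $\widehat{\sf F}(V)\times_{Spin^c(\RR^{p,q})}\Delta_{p,q}$ to Morita-trivialise $C\ell(V)$, but the hypothesis is only that $V$ is Real $(p,q)$-oriented, not $\KR$-oriented. The lift $\widehat{\sf F}(V)$ of the frame bundle need not exist; its obstruction is a Real Dixmier--Douady class, and $C\ell(V)$ is Morita trivial precisely when it vanishes. The entire point of the twist $\cA\hat\otimes C\ell(V)$ in the statement is to avoid any spin$^c$ assumption on $V$, so your argument at best covers the special case of $\KR$-oriented bundles (and even there the bigraded bookkeeping with $C\ell(V)^{\op}\sim C\ell(\RR^{q,p})$ and $(1,1)$-periodicity needs to be rechecked rather than waved at). A workable route, and the expected one behind the citation to \cite{Mou}, is to construct the Thom element directly as a $\KKR$-equivalence between $\Gamma_V(\pi^{-1}(\cA))$ and $\Gamma_M(\cA\hat\otimes C\ell(V))$ by assembling the fibrewise Atiyah--Bott--Shapiro/Bott classes, and to prove invertibility by a local-to-global (Mayer--Vietoris) reduction to the trivial bundle case, where it is Kasparov's Real Bott periodicity; this needs neither Poincar\'e duality (which for the non-compact total space $V$ would itself require justification) nor any compactness or spin$^c$ hypothesis.
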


\subsection{The Real twisted assembly map}

We will finally define a natural isomorphism from bundle gerbe 
$\KR$-homology to twisted $\KR$-homology.   Throughout this section   $M$ is a Real manifold and $\cP \to M$  a Real $PU(\cH)$  
bundle with Real lifting bundle gerbe $L_\cP$ and associated Real Dixmier-Douady bundle $\cA$.  

 If $(Z,E,f)$ represents a
bundle gerbe $\KR$-cycle in $\KR^{\bg}_{p,q}(M,L_\cP)$,  then by  Proposition \ref{prop:K-isom2} the bundle gerbe module $E$ defines a class $[E]$ in $\KR(Z, f^{-1}(\cA^{\op}))$. Since all connected components of $Z$ carry Real
spin$^c$ structures of type~$(p,q)~\textrm{mod}~(1,1)$,  Poincar\'e
duality gives a homology class $\mathtt{PD}_Z[E] \in \KR_{p,q}(Z,
f^{-1}(\cA) ).$ 

To proceed we first need an alternative description of
Real vector bundle modification in terms of the Gysin homomorphism. Let $\rho_{p,q}\colon Z_{p,q} \to Z$ denote the
unit sphere bundle of $V_{p,q}\oplus\RR_Z$ as in
Definition~\ref{Realvb}. It admits a canonical north pole section
$s\colon Z \to Z_{p,q}$ defined by $x\mapsto (s_0(x),1)$, where $s_0$
is the zero section of $V_{p,q}$. By Definition \ref{gysinmap} and the
isomorphism in Proposition \ref{prop:K-isom2}, we obtain homomorphisms $s_!\colon
\KR_{\bg}^{r,s}(Z,f^{-1}(L^*_\cP)) \to \KR_{\bg}^{r+p,s+q}(Z_{p,q},(f\circ
\rho_{p,q})^{-1}(L^*_\cP) ) $.   

\begin{lemma} \label{Realvb2} 
Let $(Z,E,f)$ be a bundle gerbe $\KR$-cycle. Then its Real vector bundle modification $(Z,E,f)_{p,q}$ is Real spin$^c$ bordant to $ [Z_{p,q}, s_! [E], \,f\circ\rho_{p,q}]$. 
\end{lemma}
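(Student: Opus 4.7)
The plan is to apply Lemma \ref{Kclass} to reduce the bordism claim to an equality of twisted $\KR$-theory classes on $Z_{p,q}$, and then use the projection formula for the Gysin homomorphism together with an Atiyah--Bott--Shapiro identification to establish this equality. Both cycles $(Z,E,f)_{p,q}$ and $(Z_{p,q}, s_![E], f\circ\rho_{p,q})$ share the same underlying Real spin$^c$ manifold $Z_{p,q}$ and the same equivariant map $f\circ\rho_{p,q}$ to $M$, so by Lemma \ref{Kclass} it suffices to prove
$$
\big[\tilde\rho_{p,q}^{-1}(E)\otimes\tilde\pi_{p,q}^{-1}({\sf S}_{p,q})\big] \ = \ s_![E]
$$
in $\KR_{\bg}^{p,q}(Z_{p,q},(f\circ\rho_{p,q})^{-1}(L_{\cP}^*))$; in that case the two bundle gerbe cycles are already equivalent via the disjoint union/direct sum relation alone.

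The key tool is the projection formula for the Gysin map along the closed Real embedding $s\colon Z\to Z_{p,q}$. Since $\rho_{p,q}\circ s=\id_Z$, any class on $Z$ satisfies $[E]=s^*\rho_{p,q}^{-1}[E]$, and the projection formula (a consequence of the Kasparov product description of $s_!$ as in Definition \ref{gysinmap}) yields
$$
s_![E] \ = \ s_!\big(s^*\rho_{p,q}^{-1}[E]\big) \ = \ \rho_{p,q}^{-1}[E] \cdot s_![\CC_Z] \ ,
$$
where $\CC_Z$ is the trivial Real line bundle on $Z$. It therefore remains to identify $s_![\CC_Z]$ with $[\tilde\pi_{p,q}^{-1}({\sf S}_{p,q})]$.

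This last identification is the Atiyah--Bott--Shapiro construction carried out fibrewise on the sphere bundle. The normal bundle of the north pole section $s$ in $Z_{p,q}$ is canonically isomorphic to $V_{p,q}$, and by Definition \ref{gysinmap} the map $s_!$ factors as the Thom isomorphism for $V_{p,q}$ (Proposition \ref{prop:Thomiso}) followed by extension along the fibrewise stereographic projection $Z_{p,q}\setminus\{\text{south pole section}\}\simeq V_{p,q}$. As recalled in the paragraphs immediately preceding Definition \ref{Realvb}, the $\KR$-theoretic Thom class of $V_{p,q}\oplus\RR_Z$, when realised on the unit sphere bundle $Z_{p,q}$ via Clifford multiplication by the radial direction, is represented fibrewise by the Atiyah--Bott--Shapiro generator $(S_{p,q}^+)^*$ of $\widetilde{\KR}(S^{p+1,q})$ for $(p,q)=(k,k)$ and of $\widetilde{\KO}(S^{8k})$ for $(p,q)=(8k,0)$. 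Globalising this representative using the Real $Spin^c(\RR^{p+1,q})$-bundle $\widehat{\sf F}(V_{p,q}\oplus\RR_Z)$ gives precisely the bundle ${\sf S}_{p,q}$.

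The main obstacle is verifying that this Atiyah--Bott--Shapiro identification is compatible with the Real structures and with the bundle gerbe twist. The Real structure on ${\sf S}_{p,q}$ is automatic from the Real $Spin^c(\RR^{p+1,q})$-bundle structure on $\widehat{\sf F}(V_{p,q}\oplus\RR_Z)$ and the Real $C\ell(\RR^{p+1,q})$-module structure on the positive spinor bundle; the bundle gerbe twisting is transported via pullback along the proper retraction $\rho_{p,q}$, and naturality of the Thom isomorphism and the Gysin map in $\KKR$-theory (cf.\ Proposition \ref{prop:Thomiso}) yields the required equality. Combining these ingredients gives $s_![E] = \rho_{p,q}^{-1}[E]\cdot[\tilde\pi_{p,q}^{-1}({\sf S}_{p,q})] = [\tilde\rho_{p,q}^{-1}(E)\otimes\tilde\pi_{p,q}^{-1}({\sf S}_{p,q})]$, as claimed.
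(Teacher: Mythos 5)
Your proposal is correct and follows essentially the same route as the paper: the paper likewise reduces the bordism statement (via Lemma \ref{Kclass} and Remark \ref{rem:isoKR}) to the equality $[\tilde\rho_{p,q}^{-1}(E)\otimes\tilde\pi_{p,q}^{-1}({\sf S}_{p,q})]=s_![E]$ in $\KR_{\bg}(Z_{p,q},(f\circ\rho_{p,q})^{-1}(L_\cP^*))$, established through Proposition \ref{prop:Thomiso} as a Real twisted analogue of \cite[Lemma 3.5]{BOOSW}. Your projection-formula and Atiyah--Bott--Shapiro argument simply supplies the details that the paper explicitly leaves to the reader.
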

Here $s_! [E] \in \KR_{\bg} ^{p,q}(Z_{p,q},(f\circ
\rho_{p,q})^{-1}(L^*_\cP)) \simeq \KR_\bg(Z_{p,q},(f\circ
\rho_{p,q})^{-1}(L^*_\cP) )$ where the isomorphism is due to Clifford
periodicity since $(p,q)$ is either $(k,k)$ or $(8k,0)$. The proof of
Lemma~\ref{Realvb2} is a Real twisted analogue of the proof of
\cite[Lemma 3.5]{BOOSW} and amounts to showing that the bundle gerbe
$\KR$-theory classes $[\tilde\rho_{p,q}^{-1}(E)\otimes
\tilde\pi_{p,q}^{-1}({\sf S}_{p,q})]$ and $s_! [E]$ agree in
$\KR_\bg(Z_{p,q},(f\circ \rho_{p,q})^{-1}(L^*_\cP) )$, using Proposition~\ref{prop:Thomiso}. The details are left for the reader.

We define the \emph{assembly map} $\eta\colon \KR^{\bg}_{p,q}(M, L_\cP ) \to \KR_{p,q}(M,\cA)$ by
$$\eta[Z,E,f] = f_*(\mathtt{PD}_Z[E])$$
where $f_*\colon \KR^{p,q}\big(\Gamma_Z(f^{-1}(\cA)) \big) \to
\KR^{p,q}\big(\Gamma_M(\cA) \big)$ is the induced pushforward map.

\begin{proposition} The assembly map $\eta$ is well-defined and functorial. 
\end{proposition}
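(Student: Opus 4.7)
The plan is to verify compatibility of $\eta$ with the three equivalence relations defining $\KR^{\bg}_{p,q}(M,L_\cP)$ and then check functoriality at the level of representatives.

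Functoriality is essentially immediate from the definitions. For an equivariant map $\phi\colon M\to N$ and a Real bundle gerbe $Q$ on $N$, any cycle $(Z,E,f)$ for $\phi^{-1}(Q)$ satisfies
$$
\eta\big(\phi_*[Z,E,f]\big) = \eta[Z,E,\phi\circ f] = (\phi\circ f)_*\mathtt{PD}_Z[E] = \phi_*\big(f_*\mathtt{PD}_Z[E]\big) = \phi_*\,\eta[Z,E,f],
$$
using functoriality of pushforward in analytic twisted $\KR$-homology; this also immediately gives $\eta\circ\id_*=\id\circ\eta$.

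For well-definedness, the disjoint union/direct sum relation is automatic because both $\mathtt{PD}_Z(\cdot)$ and $f_*$ are additive. The bordism relation is handled by the standard argument with manifolds with boundary: if $(\underline{Z},\underline{E},\underline{f})$ witnesses $(Z_1,E_1,f_1) \sim (Z_2,E_2,f_2)$, then the relative Poincar\'e duality pairing of $[\underline{E}]$ with $[\underline{Z},\partial\underline{Z}]$ yields a class whose image under the boundary map in the long exact sequence of the pair $(\underline{Z},\partial\underline{Z})$ equals $\mathtt{PD}_{Z_1}[E_1] - \mathtt{PD}_{Z_2}[E_2]$. Exactness implies that after pushforward along $\underline{f}$, which restricts to $f_1\amalg f_2$ on the boundary, this difference vanishes in $\KR_{p,q}(M,\cA)$; the opposite Real spin$^c$ structure on $-Z_2$ accounts for the sign in the inverse relation $-[Z_2,E_2,f_2]=[-Z_2,E_2,f_2]$.

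The genuinely nontrivial relation is Real vector bundle modification. By Lemma \ref{Realvb2}, the cycle $(Z,E,f)_{p,q}$ is Real spin$^c$ bordant to $(Z_{p,q},s_![E],f\circ\rho_{p,q})$, so the bordism case just established reduces the problem to verifying
$$
(f\circ\rho_{p,q})_*\,\mathtt{PD}_{Z_{p,q}}(s_![E]) = f_*\,\mathtt{PD}_Z[E].
$$
Applying the defining diagram \eqref{gysin} of the Gysin homomorphism to the equivariant section $s\colon Z\hookrightarrow Z_{p,q}$ gives $\mathtt{PD}_{Z_{p,q}}\circ s_! = s_*\circ\mathtt{PD}_Z$, and combining this with the identity $\rho_{p,q}\circ s = \id_Z$ and functoriality of pushforward yields the required equality. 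The main obstacle is justifying that the Gysin map arising from the Thom isomorphism of Proposition \ref{prop:Thomiso} indeed matches the class $s_![E]$ appearing in Lemma \ref{Realvb2}; this is precisely the content of that lemma, which in turn relies on the $(1,1)$- and $(8,0)$-Clifford periodicities that restrict the types $(p,q)$ admissible for the modification construction in Definition \ref{Realvb}.
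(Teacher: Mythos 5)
Your argument is correct and follows essentially the same route as the paper: additivity for disjoint union/direct sum, Lemma \ref{Realvb2} together with the commutative diagram \eqref{gysin} and $\rho_{p,q}\circ s=\id_Z$ for vector bundle modification, and functoriality of the analytic pushforward for functoriality. Your bordism step, phrased via relative duality and the long exact sequence of the pair, is the same mechanism the paper uses (compatibility of the interior's Poincar\'e dual with the connecting homomorphism, adapted from \cite{BOOSW}, plus $i_*\circ\partial=0$), so the proposal matches the paper's proof in substance.
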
 
\begin{proof} 
Functoriality of $\eta$ follows by the naturality property of the pushforward map in twisted $\KR$-homology. To show that $\eta$ is well-defined, we verify that it respects the three equivalence relations on bundle gerbe $\KR$-cycles.
For the disjoint union/direct sum relation, we have
\begin{align*}\eta([Z,E_1,f]\amalg
[Z,E_2,f])  &= \eta[Z\amalg Z,E_1\amalg E_2,f\amalg f] \\[4pt] &= (f\amalg f)_*(\mathtt{PD}_Z[E_1]\oplus \mathtt{PD}_Z[E_2]) \\[4pt]
&= \eta[Z,E_1,f] + \eta[Z,E_2,f] \\[4pt] &= \eta[Z,E_1\oplus
                                               E_2,f] \ .
\end{align*}
If $[Z_{p,q}, s_![E], f\circ \rho_{p,q}]$  is the Real vector bundle
modification of a bundle gerbe $\KR$-cycle $(Z,E,f)$ using the description in Lemma \ref{Realvb2}, then 
$$ \eta[Z_{p,q}, s_![E], f\circ \rho_{p,q}] = f_* \rho_{p,q*}\mathtt{PD}_{Z_{p,q}}(s_![E]) = f_* \rho_{p,q*}s_*\mathtt{PD}_Z[E] = \eta[Z,E,f]$$ 
where the second equality follows by the commutative diagram
\eqref{gysin} while the last equality is due to $\rho_{p,q}\circ s =
{\rm id}_Z$. Finally, if $(\,\underline{Z}\,,\,\underline{E}\,,\,
\underline{f}\, )$ is any Real spin$^c$ bordism, then we need to show
that $\eta [\partial \underline{Z}\,,\,\underline{E}|_{\partial
  \underline{Z}}\,,\,\underline{f}|_{\partial \underline{Z}}\,
]=0$. By adapting the proof of \cite[Lemma 3.8]{BOOSW} to the Real
twisted setting, it follows that $\partial [\, \accentset{\circ}{\underline{Z}}\, ]
= [\partial \underline{Z}\, ]$ where $\partial \colon \KR_{p,q}(\,
\accentset{\circ}{\underline{Z}} \, ) \to  \KR_{p-1,q}( \partial \underline{Z}\, )
$ is the connecting boundary homomorphism. If $i\colon \partial  \underline{Z} \hookrightarrow  \underline{Z}$ denotes the inclusion, then
$$\eta[\partial \underline{Z}\,,\,\underline{E}|_{\partial
  \underline{Z}}\,,\,\underline{f}|_{\partial \underline{Z}}\, ] =
(f\circ i)_*\mathtt{PD}_{\partial\underline
  Z}[\, \underline{E}|_{\partial \underline{Z}}\, ] = f_*\circ
i_*\circ \partial \big(\mathtt{PD}_{\accentset{\circ}{\underline{Z}}}[\,
\underline{E}\, ] \big) = 0
$$
because $ i_*\circ \partial=0$. 
\end{proof}

We will prove that the assembly map $\eta$ is an isomorphism by
adapting the arguments in \cite{BOOSW} to the Real twisted setting and
constructing an explicit inverse  to $\eta$. For this, we first need a
few preliminary technical results.

\begin{lemma} \label{retract}
Let $M$ be Real compact manifold. Then there exists an equivariant retraction $M \xrightarrow{j} W  \xrightarrow{f}M$ into a Real compact spin$^c$ manifold $W$ of type $(p,q)$. 
\end{lemma}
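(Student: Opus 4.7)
The plan is to realise $W$ as a $\ZZ_2$-invariant closed tubular neighbourhood of $M$ inside a Real Euclidean space $\RR^{p,q}$, inheriting its Real $(p,q)$-spin$^c$ structure from the ambient flat one on $\RR^{p,q}$. The retraction $f \colon W \to M$ will be the standard projection along the normal fibres of this tubular neighbourhood.

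First I would apply the Mostow--Palais equivariant embedding theorem to the smooth $\ZZ_2$-action on the compact manifold $M$ to obtain an equivariant smooth embedding $j \colon M \hookrightarrow V$ into some finite-dimensional orthogonal $\ZZ_2$-representation $V$. Any such representation decomposes orthogonally as $V = V^+ \oplus V^-$, where $\ZZ_2$ acts trivially on $V^+ \simeq \RR^p$ and by $-1$ on $V^- \simeq \RR^q$, so $V \simeq \RR^{p,q}$ as a Real space in the notation of Section~\ref{sec:twistedKRtheory}.

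Next I would invoke the equivariant tubular neighbourhood theorem for smooth actions of compact Lie groups to obtain a $\ZZ_2$-invariant tubular neighbourhood of $j(M)$ in $\RR^{p,q}$ equivariantly diffeomorphic to an open neighbourhood of the zero section of the equivariant normal bundle $\nu \to M$, together with an equivariant smooth retraction onto $M$. Averaging any fibrewise Euclidean metric on $\nu$ over $\ZZ_2$ produces a $\ZZ_2$-invariant one; with respect to this metric I would take $W$ to be the image of the closed unit disk bundle $D(\nu) \subset \nu$ under the equivariant tubular diffeomorphism. Then $W$ is a compact codimension-zero $\ZZ_2$-invariant submanifold of $\RR^{p,q}$ (with boundary the unit sphere bundle of $\nu$), the inclusion $j \colon M \hookrightarrow W$ is equivariant, and the restriction of the tubular projection gives the required equivariant retraction $f \colon W \to M$ satisfying $f \circ j = \id_M$.

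It remains to equip $W$ with a Real $(p,q)$-spin$^c$ structure. The tangent bundle $T\RR^{p,q}$ is canonically trivial with fibrewise involution $e_{p,q}$, and the canonical lift of its trivialised frame bundle through the Real central extension \eqref{eq:RealcentralSpin} furnishes $\RR^{p,q}$ with the trivial Real $(p,q)$-spin$^c$ structure; the parity matching between $q$ and the orientation behaviour of $e_{p,q}$ discussed in Section~\ref{sec:Geometric cycles} is automatic in this trivialised setting. Since $W \subset \RR^{p,q}$ is a codimension-zero equivariant submanifold, this Real $(p,q)$-spin$^c$ structure restricts to one on $W$. There is no serious obstacle beyond assembling these standard equivariant tools: both the Mostow--Palais embedding and the equivariant tubular neighbourhood theorem are classical for smooth actions of compact Lie groups, and the Real spin$^c$ structure is inherited essentially by fiat from the flat ambient space.
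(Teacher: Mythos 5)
There is a genuine gap: your $W$ is the closed disk bundle of the normal bundle, hence a compact manifold \emph{with boundary}, while the lemma is needed (and is used in the proof of Theorem~\ref{assembly}) to produce a \emph{closed} Real spin$^c$ manifold. The class $\beta_W(x)=[W,\mathtt{PD}^{-1}_W\,j_*(x),f]$ must be a bundle gerbe $\KR$-cycle, and Definition~\ref{def:bg-cycle} requires the underlying manifold to be compact \emph{without} boundary; moreover $\mathtt{PD}_W$ is Poincar\'e duality, which is not available in the stated form on a manifold with boundary (one would get Poincar\'e--Lefschetz duality into relative groups). The paper's proof repairs exactly this point: after producing a compact $\ZZ_2$-invariant neighbourhood of $M$ in $V$ with an equivariant retraction (there via Jaworowski's $\ZZ_2$-ENR property \cite{Jar} and a sublevel set of an invariant distance function, but your equivariant tubular neighbourhood argument would serve equally well for a smooth compact $M$), it takes the \emph{double} of this manifold-with-boundary to obtain a Real closed manifold $W$, with $j$ the inclusion into one copy and $f$ the fold map followed by the retraction. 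Your construction is missing this doubling step (equivalently, you could pass to the sphere bundle of $\nu\oplus\RR_Z$), together with the observation that the doubled manifold still carries a Real spin$^c$ structure induced from $V$.

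A secondary issue is the type $(p,q)$. You declare that the embedding representation decomposes as $\RR^{p,q}$ and that the flat structure is ``automatically'' of type $(p,q)$, but in the lemma the bidegree $(p,q)$ is prescribed (it must match the grading of $\KR^{p,q}(\Gamma_M(\cA))$ in the definition of $\beta_W$), whereas the signature $(r,s)$ of the $\ZZ_2$-representation produced by the Mostow embedding is dictated by the geometry of the action, not by $(p,q)$. The type of a Real spin$^c$ structure on $\RR^{r,s}$ need not equal $(r,s)$: by \cite[Proposition~3.15]{Fok2015}, $\RR^{r,s}$ admits a Real spin$^c$ structure of type $(p,q)$ when $r+s=p+q$ and $(p-q)-(r-s)\equiv 0 \bmod 8$, and the paper arranges this congruence (using the dimension bound of \cite{CdG} and, implicitly, stabilisation of the representation by trivial and sign summands). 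Your argument should either invoke this result after stabilising $V$ so that the congruence holds, or otherwise explain how the prescribed $(p,q)$, rather than the accidental signature of the embedding representation, is achieved.
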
 
\begin{proof} 
Let $V=\RR^{r,s}$ be an $n$-dimensional Real vector space equipped
with the involution given by $e_{r,s}\colon (x,y) \mapsto (x,-y)$ such
that $n=r+s=p+q$ and $(p-q)-(r-s)=0 \ \text{mod}\ 8$. Then $V$ has a
Real spin$^c$ structure of type $(p,q)$ by~\cite[Proposition 3.15]{Fok2015}. 

By the Mostow embedding theorem \cite{Mos}, every Real compact
manifold $M$ has a $\ZZ_2$-equivariant closed embedding into a finite-dimensional real linear $\ZZ_2$-space $V$. By \cite{Jar} there exists
further a $\ZZ_2$-invariant open neighbourhood $U$ with a
$\ZZ_2$-equivariant retraction $f_U\colon U \to M$ onto $M$, that is a
Real compact manifold is a $\ZZ_2$-Euclidean neighbourhood retract. As
shown by~\cite{CdG}, the dimension of $V$ can be chosen to be $3d+2$
or $3d+3$ where $d=\dim(M)$. We may then take $V$ with the $\ZZ_2$-module structure $e_{r,s}$ and a Real spin$^c$ structure of type $(p,q)$ as above. 

Next we proceed as in the proof of~\cite[Lemma 2.1]{BOOSW}. We choose
a $\ZZ_2$-invariant metric $\varrho$ on $U$, define $\phi \colon U \to
\RR_{\geq0}$ by $\phi(m) = \inf_{m'\in M} \, \varrho(m,m'\, )$ to be the distance
to $M$, and fix an approximation to $\phi$ in the chosen metric by a
smooth $\ZZ_2$-invariant function $\psi$. Since $M$ is compact,
$\phi^{-1}[0,a] \subset U$ is compact if $a$ is chosen to be smaller than
the distance from $M$ to the complement $V\setminus U$. For a regular value
$a'\in(0,a)$, the level set $\psi^{-1}(-\infty,a'\, ]\subset U$ is then a
compact Real manifold with boundary and a neighbourhood of $M$. The
double of this space is a Real closed manifold $W$ with a Real
spin$^c$ structure of type $(p,q)$ induced by  $V$, an equivariant
inclusion $j\colon M \to W$ into one of the two copies and an equivariant retraction $f\colon W\to M$ given by the fold map composed with $f_U\colon U \to M$.
\end{proof}

\begin{proposition} \label{equiv-cycles}  Let $M$ be a Real manifold with a Real
bundle gerbe $(P,Y)$ and let $[Z,E,f] \in  \KR^{\bg}_{p,q}(M,P)$. If
the equivariant map $f$ factorises as $Z\xhookrightarrow{h} \tilde Z
\xrightarrow{\tilde f} M$ where  $h$ is an inclusion of Real spin$^c$
manifolds of type $(p,q)$ and $\tilde f$ is an equivariant smooth map, then  
$$[Z,E, f] = [\tilde Z, h_![E], \tilde f\, ] \ . $$
\end{proposition}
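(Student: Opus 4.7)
The argument is the Real twisted analogue of \cite[Lemma~3.6]{BOOSW}. Since $h\colon Z\hookrightarrow \tilde Z$ is an equivariant inclusion of Real spin$^c$ manifolds both of type $(p,q)$ modulo $(1,1)$, the normal bundle $\nu\to Z$ has even rank $2k$ and inherits a canonical Real spin$^c$ structure of type $(k,k)$: pick a $\ZZ_2$-invariant Riemannian metric on $\tilde Z$ (by averaging) to equivariantly split the normal bundle sequence $0\to TZ\to h^{-1}(T\tilde Z)\to \nu\to 0$, and compare the two Real spin$^c$ structures on $h^{-1}(T\tilde Z)$---the pullback from $\tilde Z$ and the one induced from $TZ\oplus\nu$ via Lemma~\ref{Realsum}.

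Since $\nu$ has type $(k,k)$, it is admissible for the complex Real vector bundle modification of Definition~\ref{Realvb}. Applying it to $(Z,E,f)$ and invoking Lemma~\ref{Realvb2} yields
$$[Z,E,f] \ = \ [Z_\nu,\, s_![E],\, f\circ\rho_\nu]$$
in $\KR^{\bg}_{p,q}(M,P)$, where $s\colon Z\to Z_\nu$ is the north pole section of the unit sphere bundle of $\nu\oplus\RR_Z$. The problem is thereby reduced to producing a Real spin$^c$ bordism between $(Z_\nu,\,s_![E],\,f\circ\rho_\nu)$ and $(\tilde Z,\,h_![E],\,\tilde f\,)$. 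I would construct it as follows: use the equivariant tubular neighborhood theorem to embed the closed disk bundle $D(\nu)$ into $\tilde Z$ as a closed neighborhood of $h(Z)$ (so that $Z_\nu$ is the double of $D(\nu)$ along $S(\nu)$), and define $\underline Z$ by equivariantly gluing $\tilde Z\times[0,1]$ to $D(\nu\oplus\RR_Z)$ along the identification of $D(\nu)\subset \tilde Z\times\{0\}$ with the lower hemisphere $D(\nu)\subset\partial D(\nu\oplus\RR_Z)$. After equivariant corner smoothing, $\underline Z$ is a compact Real spin$^c$ manifold with $\partial\underline Z = Z_\nu\amalg(-\tilde Z)$, equipped with the equivariant map $\underline f\colon\underline Z\to M$ given by $\tilde f$ composed with the natural equivariant retraction $\underline Z\to\tilde Z$.

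To extend the boundary data to a Real bundle gerbe module $\underline E$ over $\underline Z$, take the pullback of $h_![E]$ on the cylindrical part $\tilde Z\times[0,1]$ and use the Real Thom isomorphism (Proposition~\ref{prop:Thomiso}) applied to $\nu$ to realise $s_![E]$ on $Z_\nu$ as the boundary restriction of a Real bundle gerbe module on $D(\nu\oplus\RR_Z)$ built from the pullback of $E$ via an Atiyah--Bott--Shapiro Clifford multiplication; the two agree on the gluing locus $D(\nu)$ by functoriality of the Gysin map applied to the factorisation of $h$ through the zero section of $\nu$ (Definition~\ref{gysinmap}). The principal technical obstacle is this last step: producing the Thom isomorphism at the level of actual Real bundle gerbe modules rather than merely $\KR$-theory classes, while simultaneously respecting the conjugate-linear involution and the bundle gerbe action. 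By Remark~\ref{rem:isoKR} one can work in the equivalent version of $\KR^{\bg}_\ast$ in which cycles carry $\KR$-theory classes rather than actual modules, whereupon the gluing follows directly from naturality of Proposition~\ref{prop:Thomiso} and functoriality of $h_!$.
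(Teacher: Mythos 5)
There is a genuine gap, and it lies in the bordism you construct after the modification step. Your reduction of $[Z,E,f]$ to $[Z_\nu,\,s_![E],\,f\circ\rho_\nu]$ via Lemma~\ref{Realvb2} is fine in spirit, but the manifold you then glue does not have the boundary you claim. Gluing $\tilde Z\times[0,1]$ to $D(\nu\oplus\RR_Z)$ along $D(\nu)\subset\tilde Z\times\{0\}$ identified with the lower hemisphere bundle of $\partial D(\nu\oplus\RR_Z)=Z_\nu$ leaves as boundary $\tilde Z\times\{1\}$ together with $\bigl(\tilde Z\setminus \mathring D(\nu)\bigr)\cup_{S(\nu)}D(\nu)_{\rm upper}$, and the latter is again diffeomorphic to $\tilde Z$, not to $Z_\nu$. (Test case: $Z=\mathrm{pt}\subset\tilde Z=T^2$, so $Z_\nu=S^2$; your $\underline Z$ is $T^2\times[0,1]$ with a $3$-ball attached along a disc in its boundary, whose boundary is $T^2\amalg T^2$, not $S^2\amalg(-T^2)$.) So the claimed Real spin$^c$ bordism between $(Z_\nu,s_![E],f\circ\rho_\nu)$ and $(\tilde Z,h_![E],\tilde f\,)$ is not produced, and no variant of this "cap off a tubular neighbourhood" gluing can produce it: as long as $\tilde Z$ itself is to appear as a boundary component, the other component of such complements is again a copy of $\tilde Z$, never the sphere bundle $Z_\nu$.

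The paper avoids this by modifying \emph{both} cycles: $(Z,E,f)$ along $\nu\oplus\RR^{1,1}_Z$ and $(\tilde Z,h_![E],\tilde f\,)$ along the trivial bundle $\RR^{1,1}_{\tilde Z}$. The extra trivial $\RR^{1,1}$ factor is not a convenience but the key point — it turns the $\tilde Z$-side into $\tilde Z_{1,1}=\tilde Z\times S^{1,1}=\partial(\tilde Z\times D^{2,1})$, so that one can take $W=(\tilde Z\times D^{2,1})\setminus D_\epsilon(\nu\oplus\RR^{1,1}_Z\oplus\RR_Z)$, which has exactly $\partial W=\tilde Z_{1,1}\amalg(-Z^\epsilon_{1,1})$ (and, since $Z,\tilde Z$ are closed, no corners to smooth). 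The module over $W$ is then $e_!\,\pi_Z^{-1}(E)$ for the cylinder embedding $e\colon Z\times[\epsilon,1]\to W$, and the boundary identifications with $\tilde s_!h_![E]$ and $s_![E]$ follow from functoriality of the Gysin homomorphism — this also sidesteps your hedged step of realising a Thom class by actual Real bundle gerbe modules. A secondary issue in your write-up is the assertion that $\nu$ canonically carries a Real spin$^c$ structure of type $(k,k)$: this needs an argument (and fails outright when $\nu$ has rank zero), whereas working with $\nu\oplus\RR^{1,1}_Z$ and Lemma~\ref{Realsum}, as the paper does, supplies an admissible modifying bundle in all cases.
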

\begin{proof} 
The statement is a Real analogue of~\cite[Theorem 4.1]{BOOSW} and the
proof proceeds along similar lines. Let  $\nu = h^{-1} (T\tilde
Z)/TZ$ denote the Real normal bundle of $h$ with the induced Real
spin$^c$ structure of type $(1,1)$. The idea is to construct an
explicit Real spin$^c$ bordism between the Real vector bundle modifications of
$[Z,E, f]$ along $\nu\oplus \RR^{1,1}_Z$ with its canonical Real
spin$^c$-structure  as defined in Lemma \ref{Realsum} and $[\tilde Z,
h_![E], \tilde f\, ]$ along the Real trivial bundle $\RR^{1,1}_{\tilde Z}$. 

The unit sphere bundle of $\RR^{1,1}_{\tilde Z}\oplus\RR_{\tilde Z}$
is simply $\tilde{Z}_{1,1}=\tilde Z\times S^{1,1}$ and its north pole
section is the inclusion $\tilde s\colon \tilde Z \to \tilde Z\times
S^{1,1}$, so the Real vector bundle modification of $[\tilde Z,
h_![E], \tilde f\, ]$ is given by $[\tilde{Z}_{1,1}, \tilde s_!
h_![E], \tilde f\circ \pi_{\tilde Z}]$ where $\pi_{\tilde
  Z}\colon \tilde Z_{1,1}\to\tilde Z$ is the projection. Note that
$\tilde{Z}_{1,1}$ is the boundary of the Real unit disc bundle $\tilde
Z \times D^{2,1}$. By the equivariant tubular neighbourhood theorem,
the normal bundle $\nu$ is $\ZZ_2$-equivariantly diffeomorphic to a
tubular neighbourhood of $Z$. Thus for any $\epsilon \in (0,1)$, the
Real $\epsilon$-disc bundle $D_\epsilon(\nu\oplus \RR^{1,1}_Z\oplus
\RR_Z)$, defined with respect to a $\ZZ_2$-invariant metric on $\nu$,
is contained in $\tilde Z \times D^{2,1}$ and its boundary is the Real
$\epsilon$-sphere bundle $Z_{1,1}^\epsilon$. Recall that if $s\colon Z
\to Z^\epsilon_{1,1}$ is the canonical north pole section and
$\rho_{1,1}$ is the projection, then the Real vector bundle
modification of $[ Z, E,  f]$ is given by  $[Z^\epsilon_{1,1}, s_!
[E], f\circ \rho_{1,1}]$, as the $\epsilon$-scaling of the sphere
bundle $Z_{1,1}$ does not affect the Real vector bundle modification.

Now the space $W = (\tilde Z \times D^{2,1}) \backslash
D_\epsilon(\nu\oplus \RR^{1,1}_Z\oplus \RR_Z)$ obtained by removing
the  Real $\epsilon$-disc bundle is  a Real compact  spin$^c$ manifold
with boundary $\tilde{Z}_{1,1} \amalg( -Z_{1,1}^\epsilon)$. Unlike the
case of~\cite{BOOSW}, the manifold $W$ does not have corners because we are only dealing with closed manifolds $Z$ and $\tilde Z$. The canonical embedding of the cylinder $e\colon Z \times  [\epsilon,1] \to W$, which sends $ [\epsilon,1]$ to the north pole direction $\RR_Z$,  gives rise to the diagram
$$
    \xymatrix{
       &  Z  \ar[dl]_{\tilde s \circ h}
       \ar[d]^{\!\id_Z{\times}\epsilon}_{\id_Z{\times}1\!\!}  \ar[dr]^{s}  &   &   \\
\tilde Z_{1,1}    \ar@{^{(}->}[dr]_{\tilde j}     &         Z   \times  [\epsilon, 1]     \ar[d]^{e}       & Z^\epsilon_{1,1} \ar@{_{(}->}[dl]^{j}  \\
       &  W & & }
$$
of embeddings of Real spin$^c$ compact manifolds, where both the left
and right triangles are pullback diagrams. The Real spin$^c$ bordism
between $[\tilde{Z}_{1,1}, \tilde s_! h_![E], \tilde f\circ
\pi_{\tilde Z}]$ and $[Z^\epsilon_{1,1}, s_! [E], f\circ \rho_{1,1}]$
is then given by $[W, e_! \pi_Z^{-1} (E) , \tilde f\circ \pi_{\tilde
  Z}]$  where $\tilde f\circ \pi_{\tilde Z}$  is the canonical
extension to $W$: Then $W$ has the correct boundary and by
functoriality of the Gysin homomorphism we have
$$ e_! \pi_Z^{-1} [E]\big|_{\tilde{Z}_{1,1} } = \tilde{j}^{-1}e_!
\pi_Z^{-1}  [E] = (\tilde s \circ h)_!(\id_Z\times 1)^{-1}  \pi_Z^{-1}
[E] =  \tilde s_!  h_! [E] \ ,$$
and
$$ e_! \pi_Z^{-1} [E]\big|_{-Z_{1,1}^\epsilon} = j^{-1} e_! \pi_Z^{-1}
[E]  = s_! (\id_Z\times \epsilon)^{-1}  \pi_Z^{-1}  [E] = s_! [E]  \ . $$
As remarked in \cite{BOOSW}, the restriction of $\tilde f\circ \pi_{\tilde Z}$ to $Z^\epsilon_{1,1}$ is only homotopic to $f\circ \rho_{1,1}$, but it is possible to modify the map $\tilde f\circ \pi_{\tilde Z}$ by scaling the radius of $S^{1,1}$ in order to achieve a true bordism. 
\end{proof}

\begin{corollary} \label{cycles} 
Let $M$ be a Real compact spin$^c$ manifold of type $(p,q)$ with a
Real bundle gerbe $(P,Y)$ and let  $[Z,E,f] \in
\KR^{\bg}_{p,q}(M,P)$. Then
$$[Z,E,f] = [M, f_! [E], \id_M] \ . $$
\end{corollary}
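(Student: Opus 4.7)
The plan is to deduce the corollary from Proposition~\ref{equiv-cycles} by applying it twice through a common auxiliary Real compact spin$^c$ manifold, and then using the homotopy invariance of the Gysin homomorphism to identify the two resulting bundle gerbe $\KR$-cycles.

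First I would reduce to the case that $f \colon Z \to M$ is smooth: since the cylinder $Z \times [0,1]$ with trivial involution on $[0,1]$ furnishes a Real spin$^c$ bordism between any two equivariantly homotopic bundle gerbe cycles $(Z,E,f_0)$ and $(Z,E,f_1)$, the class $[Z, E, f]$ is unchanged if $f$ is replaced by an equivariant smooth approximation (which exists by the equivariant Whitney approximation theorem).

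Next I would apply Lemma~\ref{retract} to $M$ to obtain an equivariant retraction $M \xrightarrow{j} \tilde M \xrightarrow{g} M$ with $\tilde M$ a Real compact spin$^c$ manifold of type $(p,q)$. By choosing the ambient $\ZZ_2$-module in the Mostow--Palais step of the proof of Lemma~\ref{retract} of sufficiently large dimension, we may arrange $\dim \tilde M \geq 2\dim Z + 1$, so that the equivariant Whitney embedding theorem produces a smooth equivariant embedding $h \colon Z \hookrightarrow \tilde M$ equivariantly homotopic to $j \circ f$. Since $g \circ j = \id_M$, the composite $g \circ h$ is then equivariantly homotopic to $f$, and the cylinder bordism gives $[Z,E,f] = [Z,E, g\circ h]$. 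Applying Proposition~\ref{equiv-cycles} to the factorizations $Z \xhookrightarrow{h} \tilde M \xrightarrow{g} M$ and $M \xhookrightarrow{j} \tilde M \xrightarrow{g} M$ then produces
\begin{align*}
[Z,E,f] \ &= \ [\tilde M,\, h_![E],\, g] \ , \\
[M, f_![E], \id_M] \ &= \ [\tilde M,\, j_!(f_![E]),\, g] \ .
\end{align*}
By Lemma~\ref{Kclass}, the corollary thus reduces to establishing the identity $h_![E] = j_!(f_![E])$ in the group $\KR_\bg(\tilde M, g^{-1}(P^*))$. For this I would observe that $j\circ f$ and $h$ are equivariantly homotopic smooth maps $Z \to \tilde M$, so that the homotopy invariance of the Real Gysin homomorphism (the remark after Definition~\ref{gysinmap}) combined with functoriality forces $h_! = (j\circ f)_! = j_! \circ f_!$, and evaluating at $[E]$ gives the required identity.

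The main technical obstacle lies in the construction of Step~2: arranging all the compatibilities between type, dimension, and Real spin$^c$ structures on the auxiliary manifold $\tilde M$ so that the equivariant Whitney embedding theorem applies and produces an embedding $h$ whose induced Real spin$^c$ normal bundle in $\tilde M$ allows Proposition~\ref{equiv-cycles} to be invoked. Once this setup is arranged, the functoriality and homotopy invariance of the Real Gysin homomorphism (in the twisted setting, as formulated via Poincar\'e duality in Definition~\ref{gysinmap}) deliver the identification cleanly.
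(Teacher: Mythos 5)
Your overall skeleton (two applications of Proposition~\ref{equiv-cycles} through a common ambient Real spin$^c$ manifold, then functoriality and homotopy invariance of the Gysin homomorphism) is parallel to the paper's argument, but your choice of ambient manifold introduces a step that does not go through as stated. You invoke an ``equivariant Whitney embedding theorem'' to produce $h\colon Z\hookrightarrow \tilde M$ in the equivariant homotopy class of $j\circ f$ using only the bound $\dim\tilde M\geq 2\dim Z+1$. In the $\ZZ_2$-equivariant category a total-dimension bound is not sufficient: an equivariant embedding restricts to an embedding of fixed-point sets, so one needs at least $\dim \tilde M^{\ZZ_2}\geq \dim Z^{\ZZ_2}$, and to realise an embedding \emph{within a prescribed equivariant homotopy class} one needs equivariant general position (Bierstone/Wasserman-type results) with dimension hypotheses on every isotropy stratum. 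The retract of Lemma~\ref{retract} can probably be fattened to arrange this (the constraint there only fixes $r-s$ mod $8$, so both $r$ and $s$ can be enlarged), but as written this step is unjustified, and it is an input the paper never needs.

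There is a second gap in the reduction to $h_![E]=j_!(f_![E])$: this identity does not typecheck, since $E$ is a module for $f^{-1}(P^*)$ while the domain of $h_!$ is twisted by $h^{-1}(g^{-1}(P^*))=(g\circ h)^{-1}(P^*)$, which agrees with $f^{-1}(P^*)$ only up to a stable isomorphism determined by the homotopy $g\circ h\simeq f$. Your bordism step $[Z,E,f]=[Z,E',g\circ h]$ produces a particular transported module $E'$, and you must check that this transport is compatible with the identification implicit in ``$h_!=(j\circ f)_!$''; otherwise the two sides may differ by the action of a Real line bundle on the twisted group (cf.\ Remarks~\ref{rem:K notation} and~\ref{rem:isoKR}). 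The paper's proof is engineered to avoid both issues: it embeds $Z$ equivariantly into a Real spin$^c$ vector space $V$ (Mostow, exactly as in Lemma~\ref{retract}) and works inside $M\times V^+$, where the two embeddings $(f,j)$ and $(f,c)=(\id_M,c)\circ f$ have \emph{literally} the same composition $f$ with $\pi_M$; hence the twist pulled back to $Z$ is $f^{-1}(P^*)$ throughout, no homotopy or bordism of cycles is needed, and the only homotopy used is the linear homotopy $j\simeq c$ inside $V$, fed into homotopy invariance of the Gysin map together with functoriality $(f,c)_!=(\id_M,c)_!\circ f_!$. You should either repair the two steps above or switch to that construction.
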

\begin{proof} 
Choose a Real equivariant embedding $j\colon Z\to V$ into a finite-dimensional Real vector space $V$ with a Real spin$^c$ structure of type $(p,q)$. The map $j$ is $\ZZ_2$-homotopic to the Real constant  map $c\colon Z\to V$ with value $0$ via a linear homotopy, and this extends to the one-point compactification $V^+ \simeq S^{p, q}$ by composition with the inclusion map $V\hookrightarrow V^+$. Thus we obtain  embeddings of Real compact spin$^c$ manifolds
$$
    \xymatrix{
        Z \ar@{^{(}->}[dr]^{(f,j)}  &   &   \\
                               &\ \ \ \  M \times V^+ \ar[r]^{\hspace{0.7cm}\pi_M} & M \\
        M\ar@{^{(}->}[ur]_{(\id_M,c)} & & }
$$
with $\pi_M \circ (f,j) = f$, $\pi_M \circ (\id_M,c) = \id_M$ and
where $(f,j)$ is equivariantly homotopic to $(f,c) = (\id_M, c) \circ
f$. The result then follows by
\begin{align*}
[Z,E,f] &= [Z,E, \pi_M \circ (f,j)] \\[4pt] 
&= [M\times V^+, (f,j)_! [E], \pi_M] \\[4pt]  
&= [M\times V^+, (f,c)_! [E], \pi_M] \\[4pt]
&= [M\times V^+, (\id_M, c)_!  f_! [E], \pi_M] \\[4pt] 
&= [M,  f_! [E], \pi_M \circ (\id_M, c)] \\[4pt] 
&= [M, f_! [E], \id_M] \ ,
\end{align*}
where we have applied Proposition \ref{equiv-cycles} at the second and fifth equality, and the functoriality of the Gysin homomorphism at the third and fourth equality.
\end{proof}

We can now use Lemma~\ref{retract} to define a group homomorphism $\beta_W\colon \KR^{p,q}(\Gamma_M(\cA)) \to  \KR^{\bg}_{p,q}(M,L_\cP)$ by 
$$
\beta_W(x) = [W, \mathtt{PD}^{-1}_W j_*(x), f] \ , 
$$
where $\mathtt{PD}^{-1}_W j_*(x) \in \KR_{\bg}(W, f^{-1}(L^*_\cP))$ using
$j^{-1}(f^{-1} (\cA) ) = \cA$ and the natural isomorphism in Proposition \ref{prop:K-isom2}. It follows by Remark \ref{RealSS} and Lemma
\ref{Kclass} that this is a well-defined bundle gerbe $\KR$-cycle for
any fixed retract $W$.

\begin{theorem}\label{assembly}
Let $M$ be a Real compact manifold with a Real bundle gerbe $(P,Y)$. Then the assembly map $\eta\colon \KR^{\bg}_{p,q}(M,L_\cP) \to \KR_{p,q}(M,\cA)$ is an isomorphism of abelian groups. 
\end{theorem}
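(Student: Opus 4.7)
My plan is to prove that the homomorphism $\beta_W$ constructed just before the statement is a two-sided inverse to $\eta$. The right inverse property is a direct computation, while the left inverse requires the reduction in Corollary \ref{cycles}, applied not on $M$ (which need not be Real spin$^c$) but on the Real spin$^c$ retract $W$ furnished by Lemma \ref{retract}.

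First I would verify the easy direction $\eta\circ\beta_W=\id$. For any $x\in\KR_{p,q}(M,\cA)$,
$$\eta\,\beta_W(x) \ = \ \eta\big[W,\mathtt{PD}_W^{-1}\,j_*(x),f\big] \ = \ f_*\,\mathtt{PD}_W\,\mathtt{PD}_W^{-1}\,j_*(x) \ = \ (f\circ j)_*(x) \ = \ x,$$
since $f\circ j=\id_M$ by Lemma \ref{retract} and $\mathtt{PD}_W$ is a genuine isomorphism (with $\cA$ pulled back via $f$) by Poincar\'e duality for twisted $\KR$-theory. The only subtlety is keeping track of the Dixmier--Douady bundles: the equality $f^{-1}(\cA)|_{j(M)}\simeq \cA$ is what allows us to interpret $j_*(x)$ as a class in $\KR_{p,q}(W,f^{-1}(\cA))$, and the isomorphism of Proposition \ref{prop:K-isom2} converts it to a bundle gerbe module class on $W$.

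For the harder direction $\beta_W\circ\eta=\id$, take an arbitrary bundle gerbe $\KR$-cycle $[Z,E,g]\in\KR^{\bg}_{p,q}(M,L_\cP)$ and consider the auxiliary cycle $[Z,E,j\circ g]\in\KR^{\bg}_{p,q}(W,f^{-1}(L_\cP))$; this makes sense because $(j\circ g)^{-1}f^{-1}(L_\cP^*)=g^{-1}(L_\cP^*)$. Since $W$ is a Real compact spin$^c$ manifold of type $(p,q)$, Corollary \ref{cycles} applies on $W$ and gives
$$[Z,E,j\circ g] \ = \ \big[W,(j\circ g)_![E],\id_W\big] \ \in \ \KR^{\bg}_{p,q}\big(W,f^{-1}(L_\cP)\big).$$
Pushing forward along $f\colon W\to M$ and using $f\circ j\circ g=g$, functoriality of $\KR^{\bg}_*$ yields
$$[Z,E,g] \ = \ f_*[Z,E,j\circ g] \ = \ \big[W,(j\circ g)_![E],f\big] \ \in \ \KR^{\bg}_{p,q}(M,L_\cP).$$
It remains to identify $(j\circ g)_![E]$ with $\mathtt{PD}_W^{-1}\,j_*g_*\,\mathtt{PD}_Z[E]$. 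This is immediate from the defining diagram \eqref{gysin} of the Gysin homomorphism and its functoriality: $(j\circ g)_!=j_!\circ g_!$, where $g_!=\mathtt{PD}_Z^{-1}\circ g_*\circ \mathtt{PD}_Z$ (with the codomain version of $\mathtt{PD}$) and similarly for $j_!$, composing to $\mathtt{PD}_W^{-1}\circ j_*\circ g_*\circ \mathtt{PD}_Z$. Hence
$$[Z,E,g] \ = \ \big[W,\mathtt{PD}_W^{-1}\,j_*\,g_*\,\mathtt{PD}_Z[E],f\big] \ = \ \beta_W\,\eta[Z,E,g].$$

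The main obstacle is really the left-inverse argument, which hinges on being able to move the cycle from $M$ to the spin$^c$ manifold $W$ without losing information. This depends crucially on the fact that the Gysin homomorphism is intrinsically a change-of-Poincar\'e-dual, so that applying it to $E$ on $Z$ and then inclusion into $W$ reproduces exactly the Poincar\'e dual of $j_*g_*\mathtt{PD}_Z[E]$. All of the preparatory work of Section 7 feeds into this single step: Lemma \ref{retract} gives $W$; Proposition \ref{equiv-cycles} and Corollary \ref{cycles} express cycles in terms of Gysin pushforwards into a spin$^c$ ambient manifold; Lemma \ref{Realvb2} reinterprets vector bundle modification as a Gysin map so that these reductions are compatible with the equivalence relations; Proposition \ref{prop:K-isom2} identifies $\KR_{\bg}$ with twisted $\KR$-theory; and Proposition \ref{prop:Thomiso} together with Poincar\'e duality ensures that $\mathtt{PD}_W$ and its inverse are well-defined on the appropriate twisted groups. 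Once these are in place, the assembly map is an isomorphism with explicit inverse $\beta_W$.
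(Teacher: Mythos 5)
Your proof is correct, but you handle the harder direction $\beta_W\circ\eta=\id$ by a genuinely different route than the paper. The paper first proves $\beta_M\circ\eta=\id$ only when $M$ itself carries a Real spin$^c$ structure of type $(p,q)$ (so $W=M$, $j=f=\id_M$, and Corollary \ref{cycles} applies directly), and for general compact $M$ it deduces $\beta_W\circ\eta=\id$ indirectly: it computes $\eta\circ j_*\circ\beta_W\circ\eta=\eta\circ j_*$ on $\KR^{\bg}_{p,q}(W,f^{-1}(P))$, cancels $\eta$ using the already-established spin$^c$ case on the retract $W$, and cancels $j_*$ using its split injectivity (from $f_*\circ j_*=\id$) and naturality of $\eta$. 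You instead prove the identity directly for arbitrary compact $M$: transport a cycle $[Z,E,g]$ to $W$ as $[Z,E,j\circ g]$, apply Corollary \ref{cycles} on the Real spin$^c$ manifold $W$, push forward along $f$ using functoriality of $\KR^{\bg}_*$ at the level of geometric cycles, and identify $(j\circ g)_![E]$ with $\mathtt{PD}_W^{-1}\,j_*\,g_*\,\mathtt{PD}_Z[E]$ from the defining diagram \eqref{gysin}. Both arguments use the same toolkit (Lemma \ref{retract}, Proposition \ref{equiv-cycles} and Corollary \ref{cycles}, Proposition \ref{prop:K-isom2}, Poincar\'e duality and the Gysin formalism); yours is more direct, avoids the cancellation bookkeeping, and makes the independence of $\beta_W$ from the choice of retract immediate, while the paper's organisation isolates the clean intermediate statement that $\beta_M$ inverts $\eta$ whenever $M$ is itself Real spin$^c$.

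One small caveat: your parenthetical factorisation $(j\circ g)_!=j_!\circ g_!$ through $M$ is not literally available, since a general Real compact $M$ need not be Real $(p,q)$-oriented, so $\mathtt{PD}_M$, and hence $g_!$ and $j_!$ separately, need not be defined. This is harmless: the identity you actually need, $(j\circ g)_!=\mathtt{PD}_W^{-1}\circ(j\circ g)_*\circ\mathtt{PD}_Z=\mathtt{PD}_W^{-1}\circ j_*\circ g_*\circ\mathtt{PD}_Z$, follows from the defining diagram \eqref{gysin} applied to the single map $j\circ g\colon Z\to W$ (both $Z$ and $W$ are Real spin$^c$ of type $(p,q)$) together with functoriality of the analytic pushforward, so the step stands without the intermediate factorisation.
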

\begin{proof} First let us assume that $M$ has a Real spin$^c$ structure of type $(p,q)$ and let $\beta_M$ be the homomorphism corresponding to $W=M$ and $j=f=\id_M$ in Lemma \ref{retract}. Then we have 
$$(\beta_M \circ \eta )[Z,E,f] = \beta_M ( f_*(\mathtt{PD}_Z[E])) =
\beta_M (\mathtt{PD}_M(f_![E])) =  [M, f_![E], \id_M ] = [Z, E, f]$$
where the second equality follows by the commutative diagram \eqref{gysin} and Proposition \ref{prop:K-isom2} understood, and the last equality follows by Corollary~\ref{cycles}. This implies that $\eta$ is injective with right inverse $\beta_M$. 

On the other hand, for any choice of retract $W$ we have
$$ (\eta \circ  \beta_W) (x) = \eta[W, \mathtt{PD}^{-1}_W j_*(x), f] =  f_*(\mathtt{PD}_W \mathtt{PD}^{-1}_W j_*(x))= f_*  j_*(x) = x$$
which implies that $\eta$ is surjective with left inverse $\beta_W$. Consequently $\eta$ is a bijection and $\beta_M = \beta_W = \eta^{-1}$ by uniqueness of inverses. In particular, it follows that $\beta_W$ is independent of the choice of retract. 

For an arbitrary Real compact manifold $M$, the surjectivity argument
still applies and it suffices to show that $\beta_W \circ \eta =
\id_{\KR^{\bg}_{p,q}(M,P)}$ for any  choice of retract $W$. For any $
[Z,E, \tilde f\, ]\in \KR^{\bg}_{p,q}(M,P)$ we have 
$$(\eta \circ j_*  \circ\beta_W  \circ \eta) [Z,E, \tilde f\, ] =  j_*
f_*  \mathtt{PD}_W  \mathtt{PD}_W^{-1} j_*\tilde f_*\mathtt{PD}_Z[E] =
(j_* \circ \eta) [Z,E, \tilde f\, ] = (\eta\circ  j_*) [Z,E, \tilde f\,
] $$
where the last equality follows by functoriality of $\eta$. Since
$j_*$ is (split) injective by naturality and $\eta$ is an isomorphism
on $\KR^{\bg}_{p,q}(W,f^{-1}(P) )$ where $j_*$ and $ j_*  \circ
\beta_W \circ \eta$ take values, we conclude that $\beta_W \circ \eta
= \id_{\KR^{\bg}_{p,q}(M,P)} $.
\end{proof}

\begin{remark} 
The $\ZZ_2$-Euclidean neighbourhood retraction property, and hence Lemma~\ref{retract}, holds more generally for any Real finite CW-complex $M$~\cite{Jar}. The proof of Theorem \ref{assembly}  therefore applies verbatum to Real finite CW-complexes, if we realise twistings on $M$ by Real Dixmier--Douady bundles $\cA$ and twisted $\KR$-homology by cycles $[Z,\sigma, f]$ with $\sigma \in \KR(Z,f^{-1}(\cA^{\op}))$. The equivalence relation on these $\KR$-cycles is generated by Real spin$^c$ bordism and Real vector bundle modification formulated in terms of the Gysin homomorphism as in Lemma~\ref{Realvb2}.
\end{remark}

\begin{example}[$K$-theoretic Ramond-Ramond charge]\label{ex:RRcharge}
Let us work in the setting of Corollary~\ref{cycles}. In this case the fundamental $\KR$-homology class $[M]$ of the manifold $M$ can be taken to lie in the untwisted group $\KR_{p,q}(M)$, and under the assembly map $\eta$ it arises from
$$
\eta[M,\CC_M,\id_M] = (\id_M)_*(\mathtt{PD}[\CC_M]) =[\CC_M] \cap [M]=[M] \ .
$$
Since $\eta$ defines a natural equivalence between the functors $\KR_*^{\bg}$ and $\KR_*$, it follows that the bundle gerbe $\KR$-cycle $[M,\CC_M,\id_M]$ can be identified as the fundamental class of $M$ in $\KR^\bg_{p,q}(M,[H])$. Moreover, by Poincar\'e duality and Proposition \ref{prop:K-isom2}, every class in the twisted $\KR$-theory $\KR(M,-[H])$ is represented by a bundle gerbe $\KR$-cycle.

Now let $(Z,E,f)$ be a Real bundle gerbe D-brane in the sense of Definition~\ref{def:RealDbrane}. Then these considerations together with Corollary~\ref{cycles} give a twisted $\KR$-theory definition of the Ramond-Ramond charge of such a D-brane as the canonical element
$$
f_![E] \ \in \ \KR(M,-[H]) \ .
$$
This formula generalises the special case where $Z\subseteq M^\tau$ coincides with an O$^+$-plane and $f\colon Z\hookrightarrow M$ is the inclusion, with $[E]\in \KO_\bg(Z,-f^*[H])$. Moreover, the charges of (generalised) Real bundle gerbe D-branes are classified by the twisted $\KR$-theory $\KR(M,-[H])$ of spacetime; the model \eqref{eq:tachyondef} for $\KR(M,-[H])$ then nicely makes contact with the tachyon field picture of $K$-theory charges~\cite{Witten} (cf.\ also~\cite{GaoHori}).
\end{example}

\end{document}